\newcommand{\df}[1]{{\bf{#1}}{\index{#1}}}
\newcommand{\mdf}[1]{{#1}{\index{$#1$}}} %math def. Use for defining symbols.
\renewcommand{\backref}[1]{}
\renewcommand{\backrefalt}[4]{%
\ifcase #1 %
\or
[p.\ #2]%
\else
[pp.\ #2]%
\fi}
\renewcommand{\paragraph}{%
 \@startsection{paragraph}{4}%
 {\z@}{2.25ex \@plus .5ex \@minus .3ex}{-1em}%
 {\normalfont\normalsize\bfseries}%
}
\newcommand{\para}{%
 \@startsection{paragraph}{4}%
 {\z@}{2ex \@plus 3.3ex \@minus .2ex}{-1em}%
% {\normalfont\small\bfseries\uppercase}%
 {\normalfont\normalsize\bfseries}%
}
\newtheorem{lem}{Lemma}[section]
\newtheorem*{lem*}{Lemma}
\newtheorem{thm}[lem]{Theorem}
\crefname{thm}{theorem}{theorems}
\crefname{lem}{lemma}{lemmas}
\newtheorem*{thm*}{Theorem}
\crefname{thm*}{theorem}{theorems}
\newtheorem{prop}[lem]{Proposition}
\newtheorem{claim}[lem]{Claim}
\newtheorem{cor}[lem]{Corollary}
\newtheorem{defn}[lem]{Definition}
\theoremstyle{definition}
\newtheorem*{rmk}{Remark}
\theoremstyle{remark}
\newtheorem*{note*}{Note}
\newtheorem{ex}[lem]{Example}
\newtheorem*{rep@theorem}{\rep@title}
\newcommand{\newreptheorem}[2]{%
\newenvironment{rep#1}[1]{%
 \def\rep@title{#2 \ref{##1}}%
 \begin{rep@theorem}}%
 {\end{rep@theorem}}}
\def\beq{\begin{equation}}
\def\eeq{\end{equation}}
\def\ben{\begin{enumerate}}
\def\een{\end{enumerate}}
\def\bem{\begin{bmatrix}}
\def\eem{\end{bmatrix}}
\def\cA{{\mathcal A}}
\def\cH{{\mathcal H}}
\def\cF{{\mathcal F}}
 \numberwithin{equation}{section}
\definecolor{red(colorwheel)}{rgb}{1.0, 0.0, 0.0}
\definecolor{azure(colorwheel)}{rgb}{0.0, 0.5, 1.0}
\newcommand{\igor}{\color{azure(colorwheel)}??}
\definecolor{orange(colorwheel)}{rgb}{1.0, 0.5, 0.0}
\newcommand{\adam}{\color{orange(colorwheel)}}
\newcommand{\bill}{\color{red(colorwheel)} ??}
\newcommand{\csym}{\stackrel{\mathrm{cyc}}{\thicksim}}
\newcommand{\cycsos}{\widetilde\SOS}
\newcommand{\trvna}{tracial von Neumann algebra\xspace}
\newcommand{\NullSS}{Nullstellensatz\xspace}
\newcommand{\NullSSs}{Nullstellens\"atze\xspace}
\newcommand{\relaxgp}{relaxed game polynomial\xspace}
\newcommand{\relaxgps}{relaxed game polynomials\xspace}
\newcommand{\Ugames}{Torically determined games\xspace}
\newcommand{\ugame}{torically determined game\xspace}
\newcommand{\ugames}{torically determined games\xspace}
\newcommand{\tgame}{torically determined game\xspace}
\newcommand{\tgames}{torically determined games\xspace}
\newcommand{\tdby}{torically determined by\xspace}
\newcommand{\TGames}{Torically Determined Games\xspace}
\newcommand{\dby}{determined by\xspace}
\newcommand{\detset}{determining set\xspace}
\newcommand{\detsets}{determining sets\xspace}
\newcommand{\cB}{\mathcal{B}}
\newcommand{\cC}{\mathcal{C}}
\newcommand{\cX}{\mathcal{X}}
\newcommand{\cE}{\mathcal{E}}
\def\cZ{{\mathcal Z}}
\def\cH{{\mathcal H}}
\def\bs{\bigskip}
\def\CC{{\mathbb C}}
\def\C{\mathbb C}
\def\R{\mathbb R}
\newcommand{\strat}{S}
\newcommand{\costrat}{\mathcal{S}_{\mathrm{co}}}
\newcommand{\game}{\mathcal{G}}
\newcommand{\gr}{\mathcal{Y}} %gr for "good response"
\newcommand{\br}{\mathcal{N}} %bad response
\DeclareMathOperator{\synch}{synch}
\newcommand{\sbr}{\synch\mathcal{B}} %synchronous bad response
\newcommand{\uGA}{\mathscr{U}}
\def\uGI{{\mathscr{I}}}
\newcommand{\siga}[3]{x(#3)^{#1}_{#2}} %signature matrix element of uGA for player #3 given question #1 and answer #2
\newcommand{\proja}[3]{e(#3)^{#1}_{#2}}
\newcommand{\projr}[3]{E(#3)^{#1}_{#2}}
\newcommand{\projrset}{\cE}
\newcommand{\cycla}[2]{c({#2})_{#1}}
\newcommand{\lcycla}[3]{\cycla{#1}{#2}^{(#3)}}
\newcommand{\xora}[2]{x^{(#2)}_{#1}}
\newcommand{\score}{V}
\newcommand{\covalue}{\omega^*_{\mathrm{co}}}
\newcommand{\questions}{\mathcal{Q}}
\newcommand{\clause}{h}
\newcommand{\clausegp}{H}
\newcommand{\validpoly}{\mathcal{Y}}
\newcommand{\invalidpoly}{\mathcal{N}}
\newcommand{\LIg}[1]{\cLI\left(#1\right)}
\newcommand{\cRI}{\mathfrak{R}}
\newcommand{\rep}{\pi}
\newcommand{\algebra}{\mathscr{A}}
\newcommand{\ssalg}{\mathcal{C}} %small subalgebra
\newcommand{\polyn}{F}
\newcommand{\pol}{f}
\newcommand{\monom}{r}
\newcommand{\monomgp}{\Delta} % the group generated by the set of monomials {\monom}
\newcommand{\state}{\psi}
\newcommand{\gagp}{G} %group generating group algebra
\newcommand{\dset}{\mathcal{F}} %standard notation for a set of elements that determine a game.
\newcommand{\delt}{f} %element of dset.
\newcommand{\clauseset}{\mathscr{H}}
\newcommand{\rgp}{{\tilde{\Phi}}}
\newcommand{\gp}{\Phi}
\newcommand{\betagp}{B}
\DeclareMathOperator{\Alg}{Alg}
\DeclareMathOperator{\supp}{supp}
\DeclareMathOperator{\LT}{LT}
\DeclareMathOperator{\LC}{LC}
\DeclareMathOperator{\SOS}{SOS} % change to $\Sigma^2$ later
\DeclareMathOperator{\clos}{cl}
\DeclareMathOperator{\mon}{Mon}
\DeclareMathOperator{\cone}{cone}
\def\ben{\begin{enumerate} }
\def\een{\end{enumerate} }
\def\sec{\section}
\def\ssec{\subsection}
\def\sssec{\subsubsection}
\title{Noncommutative \NullSSs and Perfect Games}
\author{Adam Bene Watts\thanks{University of Waterloo. \texttt{adam.benewatts1@uwaterloo.ca}} 
\and J. William Helton\thanks{University of California San Diego. \texttt{helton@math.ucsd.edu}}
\and Igor Klep\thanks{University of Ljubljana, Slovenia \texttt{igor.klep@fmf.uni-lj.si}}}
\begin{document}

\maketitle

%\tableofcontents
%MOVED TO END

\begin{abstract}
The foundations  of classical Algebraic Geometry and Real Algebraic Geometry
are the \NullSS and Positivstellensatz.
Over the last two decades the basic  analogous theorems
for matrix and operator theory (noncommutative  variables) have emerged.
This paper concerns 
commuting operator strategies for nonlocal games, recalls NC \NullSS which are helpful, extends these, and applies them to a  very broad collection of games. In the process it brings together results spread over  different literatures, hence rather than being  terse, our style is fairly expository.

The main results of this paper are two characterizations, based on \NullSS, which apply to games with perfect commuting operator strategies. The first applies to all games and reduces the question of whether or not a game has a perfect commuting operator strategy to a question involving left ideals and sums of squares.
Previously, Paulsen and others
translated the study of perfect 
synchronous games to   problems entirely involving a  $*$-algebra.
The characterization we present is analogous, but works for all games.
The second characterization is based on a new \NullSS we derive in this paper. It applies to a class of games we call \tgames, special cases of which are XOR and linear system games. For these games we show the question of whether or not a game has a perfect commuting operator strategy reduces to instances of the subgroup membership problem and, for linear systems games, we further show this subgroup membership characterization is equivalent to the standard characterization of perfect commuting operator strategies in terms of solution groups. Both the general and \tgames characterizations are amenable to computer algebra techniques, which we also develop.

For context we mention that
Positivstellens\"atze 
%it's customary to use german plural for german word
% have long played a major role in that they 
are behind the
standard 
NPA
%\cite{navascues2008convergent,doherty2008quantum}
upper bound on the  score players can achieve for a  game using a commuting operator strategy.
%
%
%Underlying this approach is one of the %earliest NC PosSS, 
%\cite{helton2004positivstellensatz}.
% This has played a big role in understanding non perfect games.  
This paper
develops  analogous
 NC real algebraic geometry which bears on perfect games.

\end{abstract}

\newpage

\def\bbk{{\mathbb k}}
\def\fae{{\C\langle x\rangle}}

\newcommand{\fa}{\mathbb{C}\langle x\rangle}
\newcommand{\fay}{\mathbb{C}\langle x,\xi\rangle}

\def\faxyz{\mathbb C\langle x,y,z\rangle}
\def\faxyzpsi{\mathbb C\langle x,y,z,\xi\rangle}

\newcommand{\ide}{\mathfrak I}
\newcommand{\lide}{\mathfrak L}

\def\cLI{{\lide}}
\def\bbE{{\mathbb E}}
\def\cexp{{\mathbb E}}

\sec{Introduction}

% {\adam cites to try and work in: \cite{goldberg2021synchronous}. Some Vidick? Some Laura Mančinska?}

A nonlocal game is a test performed between a verifier and $k$ players, in which the verifier tests the players' ability to produce correlations without communicating. In a round of the game the verifier sends questions to the players and the players return responses to the verifier. The list of questions and responses is then scored according to a function known by both the verifier and the players before the game began. By convention, the score achieved lies in the interval $[0,1]$. The players cooperate to try and achieve the highest possible score, with the challenge that the players can't communicate while the game is in progress and so don't know the questions sent to other players.\looseness=-1

The optimal score the players can achieve on a nonlocal game $\game$ depends on the resources the players share. If the players share only classical randomness the optimal score they can achieve in expectation is called the the classical value of the game, denoted $\omega(\game)$. If players share an arbitrary state in a (possibly infinite dimensional) Hilbert space and can make commuting measurements on it the optimal score they can achieve is called the commuting operator value of the game, denoted $\covalue(\game)$. The supremum value achievable by players who make commuting measurements on a state in a finite dimensional entangled space is called the quantum value, denoted $\omega^*_{\textrm{q}}(\game)$.
% \footnote{The quantum value of a game is also sometimes called the tensor product value of the game, since the supremum score achievable by finite dimensional strategies is also equal to the supremum score o} 
These three values can all differ, though the inequalities $\omega(\game) \leq \omega^*_{\textrm{q}}(\game) \leq \covalue(\game)$ are always satisfied. 

% Because the score the players can achieve in a round is, by convention, normalized to be between zero and one we also have that all three values lie in the interval [0,1]. 

% Both the commuting operator and quantum values of a game are hard to compute. In particular, approximating the tensor product value of a nonlocal game is as hard as the halting problem, and deciding whether the tensor product value of a nonlocal game is exactly 1 is even harder (formally, $\Pi^0_2$ complete)~\cite{mousavi2020complexity}. Similarly, deciding whether or not the commuting operator value of a game equals 1 is coRE complete~\cite{slofstra2020tsirelson}, or as hard as deciding if a program does not halt. The exact complexity of approximating the commuting operator value of a nonlocal game is currently open, though~\cite{coudron2019complexity} gives a lower bound on the time required to approximate the commuting operator value of a game which grows arbitrary large as approximation grows increasingly tight. 

% ----

Starting roughly in this century the
 classical subject of real algebraic geometry has been extended to matrix and operator (noncommutative) variables.
Here  inequalities and equalities
are explained by being equivalent to  algebraic formulas,
often involving Sums of Squares (\df{SOS}). 
These go under the names of 
\df{Positivstellensatz}
for inequalities 
and 
\df{\NullSS} for equations.
Of course finding 
quantum strategies for games 
leads to many such noncommutative
(\df{NC}) inequalities and equalities.\looseness=-1

In  this paper we describe how the 
well developed NC real algebraic geometry theory applies and integrates with 
nonlocal games and  commuting operator strategies for them. We show a connection between NC \NullSS and whether or not a nonlocal game has a perfect commuting operator solution (i.e., $\covalue(\game) = 1$).
  This connection gives a new algebraic characterization which applies to all nonlocal games with commuting operator value exactly equal to one. This characterization provides a unified algebraic framework through which several previous results concerning the commuting operator value of nonlocal games can be understood. For a large class of games it also reduces the question of whether or not a game has perfect commuting operator value to an instance of the subgroup membership problem, providing a potential starting point for the investigation of several yet-to-be studied families of games.\looseness=-1

For context, 
Positivstellens\"atze have long played a major role in the study of nonlocal games in that they are behind the
standard \cite{navascues2008convergent,doherty2008quantum}
upper bound on the commuting operator value of a game.
Underlying this bound is one of the earliest NC Positivstellens\"atze, 
\cite{HM04}.
% While this has played a major role in understanding non perfect games, it is
% well developed in the quantum game community, so 
This paper
turns its  attention to
developing the analogous
 NC real algebraic geometry which bears on perfect games.

% --------

In the remainder of this introduction we introduce some new terminology, review some previous results concerning the commuting operator value of nonlocal games, and then give formal statements of some of our main results. 

\paragraph{Algebraic Description of the Commuting Operator Value}

A commuting operator strategy for a nonlocal game is a description of how players can use commuting operator measurements  to map questions sent by the verifier to responses. Formally, a (commuting operator) strategy can be specified by a Hilbert space $\cH$, a state $\state \in \cH$ which is shared by the players and projectors $\{ \projr{i}{a}{\alpha} \}$ acting on $\cH$, where $\alpha$ ranges over all players, $i$ ranges over all questions, and $a$ ranges over all responses. The projector $\projr{i}{a}{\alpha}$ can be read as ``the projector corresponding to player $\alpha$ giving response $a$ to question $i$''.

Because the Hilbert space $\cH$ on which they act is arbitrary, it is difficult to reason about the $\projr{i}{a}{\alpha}$ directly. Instead we introduce the universal game algebra $\uGA$, a $*$-algebra generated by variables $\proja{i}{a}{\alpha}$ which satisfy the same relations as the projectors $\projr{i}{a}{\alpha}$, for example, that $\proja{i}{a}{\alpha}$ and $\proja{j}{b}{\beta}$ commute for any $\alpha \neq \beta$.\footnote{The $*$-algebra $\uGA$ is isomorphic to a group algebra and has appeared before in other contexts. For example, in \cite{lupini2020perfect} an algebra closely related to $\uGA$ was denoted $\mathcal{A}(X,A)$.}$^,$\footnote{A full set of relations for $\uGA$ is listed in \Cref{sssec:uGA}.}  
% In addition to conceptual benefits, these game algebras can be manipulated with computer algebra in ways introduced in \Cref{sec:GBAlgorithm,sec:GBExamples1}. 
Commuting operator strategies can then be specified by tuples $(\rep,\state)$, consisting of a $*$-representation $\rep$ mapping $\uGA$ to bounded operators on a Hilbert space $\cH$, along with a state $\state \in \cH$. When specified in this way, it is understood that projectors $\projr{i}{a}{\alpha}$ are given by $\rep(\proja{i}{a}{\alpha})$ and that $\state$ gives the state shared by the players.

\iffalse

{\bill example below was commented out for QIP
Do we want it for long version? bill thinks safer to have example}
{\igor igor has no strong opinion. mildly prefer to delete.
Adam, your call}
{\adam also slightly in favor of deletion. Or possibly moving to section 2? Seems a little too off topic here. }
While much will be said about representations in the paper, likely an example would be welcome.
{\igor this is actually repeated in sec 2, so delete}
%We consider what is called the free algebra.
% {\\ \adam ##Algebra has been introduced in other contexts. I sent an email to Vern asking for cites -- will add afterwards.}

\begin{ex}
%{\bf Example}
Let $x=(x_1,\ldots,x_d)$ denote a tuple of noncommuting letters.
%Words in $x$ form the \df{free %monoid} $\ax$, including the empty %word denoted $1$. 
A {noncommutative (nc) polynomial} is a linear combination of words in the $x_j$ and
the algebra of nc polynomials 
is denoted
%the \df{free algebra} 
$\C\ax$. We endow it with the involution 
$*$ fixing $\R$ and each
$x_j$, that is, $x_j=x_j^*$.
%$\R\cup\{x\}$.
Any representation $\pi$ of $\C\ax$ is described by a tuple of operators
$X\in\cB(\cH)^d$ through $\pi(x_j)=X_j$;
note this forces 
$\pi(p) = p(X) \ \ for \ p \in \C\ax $.
Then $\pi$ is a $*$-representation if and only if the $X_j$ are self-adjoint. Thus we see that representations are naturally
associated with observables $X_j$.
\end{ex}
\fi

\def\ax{\langle x\rangle} 

\paragraph{Other Characterizations of Perfect Commuting-Operator Strategies}
%Instead of computing the general commuting operator value of a nonlocal game,
% A simpler task than computing (or upper bounding) the exact commuting operator value of a nonlocal game is deciding whether or not a game  has a perfect commuting operator strategy; i.e., commuting operator value exactly equal to 1. 
Several other papers have considered the problem of deciding whether or not a game has a perfect commuting operator strategy and given criteria which determine the existence of perfect commuting operator strategies for specific families of nonlocal games. We review some of those families of games and the associated characterizations below. 
\begin{itemize}

\item Linear systems games are two player games based around system of $m$ linear equations on $n$ variables. 
% A question to the first player (by convention called Alice) selects one of the $i \in [m]$ equations, and a question sent to the second player selects one of the $j \in [n]$ variables. Alice's goal is to output a vector of variables $(a_1,a_2,\ldots,a_n)$ satisfying the $i$-th equation, and Bob's goal is to output a single variable $b_j$ with $a_j = b_j$. 
In~\cite{cleve2017perfect} it was shown that deciding existence of a perfect commuting operator strategy for a binary linear systems game was equivalent to solving an instance of the word problem on a group called the solution group of the game. 
% This solution group is a finitely presented group with relations determined by the clauses of the system of equation associated with the linear system game. In~\cite{slofstra2020tsirelson} it was shown that the word problems determining the existence of perfect commuting operator strategies for linear systems games were, in general, undecidable. 

\item XOR games are $k$ player games which, similarly to linear system games, test satisfiability of a system of $m$ binary equations on $kn$ variables. 

In~\cite{watts20203xor} it was shown that deciding the existence of a perfect commuting operator strategy for an XOR game was equivalent to solving an instance of the subgroup membership problem on a group called the game group.

% This group depends only on the number of questions and number of players playing the XOR game, while the subgroup being considered, called the clause group did depend on the game's clauses. The instances corresponding to two and three player XOR games are solvable in polynomial time, while the complexity of determining existence of perfect commuting operator strategies for general XOR games is unknown. 

% Games with a setup similar to XOR games but where answers range over $r$ variables (and clauses are taken mod $r$) are called mod $r$ games. To the authors knowledge, perfect commuting operator strategies for these games were not characterized prior to this paper. 

\item Synchronous games are two player nonlocal games which include ``consistency-checks'', where Alice and Bob are sent the same question and win iff they send the same response. Other than these consistency checks, the questions and winning responses involved in a synchronous game are arbitrary.   In~\cite{paulsen2016estimating} it was shown that there was a perfect commuting operator strategy for a coloring game iff a $*$-algebra associated with a single player's operators could be represented into a $C^*$-algebra with a faithful trace. In \cite{helton2017algebras} and \cite{kim2018synchronous} this was generalized to synchronous games.
% \footnote{This $*$-algebra is closely related, but not identical, to the projection of $\uGA$ onto a single players' operators. For details see \Cref{sec:Synchronous}.}

\end{itemize}

% While the details of all of these characterizations differ, some common themes can be identified. Most notably, all these characterizations are algebraic, and they give an exact characterization of games with perfect commuting operator strategies, rather than one-sided upper bound coming from the ncSoS hierarchy. This observation gives rise to a natural set of questions: Can a common explanation be found for these criterion? What special structure allows perfect commuting operator strategies to be characterized in this way? And can these characterizations be generalized? 

\paragraph{Our Results}

The main results of this paper are two theorems giving algebraic characterizations of games with perfect commuting operator strategies.  

A key concept introduced on the way to proving these theorems is the notion of a game being \dby a set of elements $\dset\subseteq\uGA$. Formally we say a game $\game$ is \dby a set of elements $\dset$ if, for any commuting operator strategy $(\rep,\state)$, we have that $(\rep,\state)$ is a perfect commuting operator strategy for $\game$ iff 
$
    \rep(\delt) \state = 0
$
for all $\delt \in \dset$. We also note that any game $\game$ is naturally determined by two sets of elements. The first, $\invalidpoly$, consists of elements corresponding to projectors onto responses which obtain a score less than $1$ on questions asked by the verifier, while the second, $\validpoly$, consists of elements $y-1$ with each element $y$ corresponding to projectors onto responses which obtain a score of exactly $1$. Details of this construction are given in \Cref{sssec:valid_invaid_response_det_sets}.

Our first major theorem follows from combining the notion of sets of elements which determine a game with a result in noncommutative algebraic geometry known as a \NullSS. To state the result formally, 
let $\LIg{\cX}$ denote the left ideal of $\uGA$ generated by $\cX$ for any set of elements $\cX \subseteq \uGA$ and $\SOS_\uGA$ denote sums of squares in the algebra $\uGA$. Then the following result holds:

\begin{thm} \label{thm:GamesNullSSRestated}
For a nonlocal game $\game$ \dby a set $\dset \subseteq \uGA$ the following are equivalent:
\begin{enumerate}[\rm(i)]
\item
$\game$ has a perfect commuting operator strategy;
\item $-1 \notin \LIg{\dset} + \LIg{\dset}^* + \SOS_\uGA $.
\end{enumerate}
\end{thm}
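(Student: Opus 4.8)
The plan is to prove the equivalence by relating perfect strategies to states that vanish on a left ideal, and then invoking a noncommutative Positivstellensatz/Nullstellensatz of Gelfand--Naimark--Segal (GNS) type. The key translation is: a game $\game$ has a perfect commuting operator strategy if and only if there is a $*$-representation $\rep$ of $\uGA$ on a Hilbert space $\cH$ and a unit vector $\state\in\cH$ with $\rep(\delt)\state=0$ for all $\delt\in\dset$; this is immediate from the definition of $\game$ being \dby $\dset$. So it suffices to show that such a pair $(\rep,\state)$ exists if and only if $-1\notin\LIg{\dset}+\LIg{\dset}^*+\SOS_\uGA$. I would phrase the core of the argument as a general lemma about a $*$-algebra $\uGA$ (here a group algebra, hence with a well-behaved cone of sums of squares): for a subset $\dset\subseteq\uGA$, there is a $*$-representation $\rep$ and a unit vector $\state$ with $\rep(\dset)\state=0$ if and only if $-1\notin\LIg{\dset}+\LIg{\dset}^*+\SOS_\uGA$.

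For the easy direction $\neg(\mathrm{ii})\Rightarrow\neg(\mathrm{i})$: suppose $-1=\sum_j a_j \delt_j + \sum_k \delt_k^* b_k + \sum_\ell s_\ell^* s_\ell$ with $\delt_j,\delt_k\in\dset$, $a_j,b_k\in\uGA$, $s_\ell\in\uGA$. Apply any $*$-representation $\rep$ and evaluate the quadratic form at a candidate unit vector $\state$ with $\rep(\delt)\state=0$ for all $\delt\in\dset$. The first two sums act as $0$ on $\state$ (for the middle term, note $\langle \state, \rep(\delt_k^* b_k)\state\rangle = \langle \rep(\delt_k)\state, \rep(b_k)\state\rangle = 0$), so $-1 = \langle\state,\rep(-1)\state\rangle = \sum_\ell \|\rep(s_\ell)\state\|^2 \ge 0$, a contradiction. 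Hence no perfect commuting operator strategy exists. The only mild subtlety is boundedness of $\rep$: since $\uGA$ is generated by projections, any $*$-representation is automatically by bounded operators, so this is not an issue.

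For the hard direction $\neg(\mathrm{i})\Rightarrow\neg(\mathrm{ii})$, I would use a GNS / separation argument. Assume $-1\notin \mathcal C:=\LIg{\dset}+\LIg{\dset}^*+\SOS_\uGA$. One checks $\mathcal C$ is a convex cone in the real vector space of self-adjoint elements of $\uGA$ that is closed under $a\mapsto s^* a s$ (an "archimedean-type" or quadratic module structure relative to the left ideal). The goal is to produce a state $\varphi$ on $\uGA$ (a positive, unital, $*$-linear functional) with $\varphi(\mathcal C)\ge 0$, equivalently $\varphi$ annihilates $\LIg{\dset}$ and $\LIg{\dset}^*$ and is nonnegative on $\SOS_\uGA$; then GNS applied to $\varphi$ yields a $*$-representation $\rep$ with cyclic vector $\state$ satisfying $\langle\state,\rep(\delt)^*\rep(\delt)\state\rangle = \varphi(\delt^*\delt)=0$, hence $\rep(\delt)\state=0$, giving a perfect commuting operator strategy. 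Separating $-1$ from $\mathcal C$: if $\mathcal C$ were closed in an appropriate (finite-dimensional-on-each-graded-piece, or weak) topology one could directly Hahn--Banach separate; the honest route, which I expect to be the main obstacle, is to show that membership of $-1$ in the (not-a-priori-closed) cone $\mathcal C$ is exactly the obstruction — i.e. to handle the closure issue. Here I would lean on the fact that $\uGA$ is a group $*$-algebra of a countable group: one can use the standard trick of adjoining the archimedean unit (every self-adjoint element is bounded by a scalar in the $C^*$-norm of the universal group $C^*$-algebra), so that $\mathcal C$ becomes archimedean and the Hahn--Banach separation applies to the universal enveloping $C^*$-algebra, producing the desired state. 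This is precisely where an explicitly stated noncommutative \NullSS from the earlier part of the paper should be cited to close the gap cleanly, rather than reproving the separation by hand.
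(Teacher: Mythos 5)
Your proposal is correct and mirrors the paper's approach: the paper also reduces (via the definition of a determining set) to the general left Nullstellensatz for $*$-algebras with archimedean $\SOS$ (\Cref{thm:AB}), using the fact that $\uGA\cong\C[\gagp]$ is a group algebra so that $\SOS_\uGA$ is archimedean, exactly as you do, and then proves that Nullstellensatz by the same Hahn--Banach-plus-GNS argument you sketch. The only small point worth flagging is that the closure issue you anticipate is not actually an obstacle: archimedeanity makes $1$ an algebraic interior point of the cone $\SOS_\uGA + \LIg{\dset} + \LIg{\dset}^*$, so the Eidelheit--Kakutani form of Hahn--Banach separates $-1$ from that cone directly at the $*$-algebra level, with no need to pass to a $C^*$-completion or to take closures.
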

\Cref{thm:GamesNullSSRestated} combined with the natural determining sets $\invalidpoly$ and $\validpoly$, gives a fully algebraic characterization of nonlocal games with perfect commuting operator strategies. This characterization is analogous to the characterization of synchronous games given in \cite{helton2017algebras}, but works for all games. For the special case of synchronous games we show in \Cref{sec:Synchronous} that the characterizations of \Cref{thm:GamesNullSSRestated} and \cite{helton2017algebras} are equivalent.

The second major theorem focuses on a general class of games on which \Cref{thm:GamesNullSSRestated} can be simplified further. A game $\game$ is called a \tgame
if there exists a group $G$ with $\uGA \cong \mathbb{C}[G]$ and $\game$ is determined by a set of elements 
\begin{align}
    \dset = \{\beta_i g_i - 1\}
\end{align}
with each $\beta_i \in \mathbb{C}$ and $g_i \in G$.
We call the elements $\beta_i g_i$ clauses of $\dset$, and let $\clauseset = \{\beta_i g_i\}$ be the set of all the clauses of $\dset$. We give the following characterization of \tgames with perfect commuting operator strategies:  
\begin{thm} \label{thm:perfect_unitary_games_Restated}
Let $\game$ be a game \tdby a set of elements $\dset$ with clauses $\clauseset$. Then $\game$ has a perfect commuting operator strategy iff the following equivalent criteria are satisfied:
\begin{enumerate}[\rm(i)]
    \item $-1 \notin  
    \LIg{\dset}+\LIg{\dset}^*$;
    \item The subgroup $H$ of $\uGA$ generated by $\clauseset \cup \clauseset^*$
    meets $\CC$ only in $1$.
\end{enumerate}
\end{thm}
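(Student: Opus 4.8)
The plan is to deduce this from Theorem \ref{thm:GamesNullSSRestated} by exploiting the special group-algebra structure of a \tgame. First I would observe that since $\uGA \cong \mathbb{C}[G]$, every clause $\beta_i g_i$ is a scalar multiple of a group element, and in particular is \emph{unitary up to scalar}: writing $\beta_i g_i = u_i$, we have $u_i u_i^* = |\beta_i|^2 \cdot 1$. (One should first note that for a \tgame the scalars $\beta_i$ must in fact have modulus $1$, since the determining element $u_i - 1$ must annihilate a state in some perfect strategy — or, more robustly, one can carry the $|\beta_i|^2$ factors through the argument below; I expect the cleanest route is to show $|\beta_i|=1$ is forced, or simply to build it into the hypothesis implicitly via the projector construction of $\invalidpoly,\validpoly$.) The key algebraic point is then that when $\dset$ consists of elements $u_i - 1$ with $u_i$ unitary, the sum-of-squares term in Theorem \ref{thm:GamesNullSSRestated}(ii) is redundant: I claim
\begin{align}
\LIg{\dset} + \LIg{\dset}^* + \SOS_\uGA = \LIg{\dset} + \LIg{\dset}^* .
\end{align}

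To prove this claim — which I expect to be the main obstacle — I would argue that $\SOS_\uGA \subseteq \LIg{\dset} + \LIg{\dset}^*$ modulo the relation that $-1$ lies in the set, i.e. more precisely that if $-1 \in \LIg{\dset}+\LIg{\dset}^*+\SOS_\uGA$ then already $-1 \in \LIg{\dset}+\LIg{\dset}^*$. The mechanism is that a square $p^*p$ can be absorbed: since each generator $u_i-1$ of the left ideal has the property that $(u_i-1)^*(u_i-1) = u_i^*u_i - u_i - u_i^* + 1 = 2\cdot 1 - u_i - u_i^* = -\big((u_i-1) + (u_i-1)^*\big)$, each symmetrized generator $(u_i-1)+(u_i-1)^*$ is itself (minus) a square, hence $\LIg{\dset}+\LIg{\dset}^*$ already contains a rich supply of negative-semidefinite symmetric elements. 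Writing an arbitrary element of $\LIg{\dset}+\LIg{\dset}^*+\SOS_\uGA$ that equals $-1$ and using that $\uGA$ carries a faithful tracial state (it is a group algebra), I would apply the trace: a putative identity $-1 = \sum_k a_k(u_{i_k}-1) + \sum_k (u_{j_k}-1)^*b_k^* + \sum_\ell p_\ell^* p_\ell$ yields, after the standard manipulation used to prove the Null\-stellen\-satz itself, that each $p_\ell^* p_\ell$ can be rewritten inside $\LIg{\dset}+\LIg{\dset}^*$. Concretely I would run the same argument that proves Theorem \ref{thm:GamesNullSSRestated}, but now observe that the GNS construction applied to the obstruction already lands in the group von Neumann algebra, where unitarity of the $u_i$ forces the SOS part to vanish on the cyclic vector. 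This establishes (i).

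Finally, for the equivalence of (i) and (ii): starting from (i), $-1 \notin \LIg{\dset}+\LIg{\dset}^*$, I would translate membership of $-1$ in this symmetrized left ideal into a group-theoretic statement. Since $\dset = \{u_i - 1\}$, the two-sided object $\LIg{\dset}+\LIg{\dset}^*$ is spanned by elements $g(u_i-1)$ and $(u_i-1)^*g$ for $g\in G$; one checks that $-1$ lies in this span if and only if the element $1$ (equivalently $-1$, up to scalar) can be written as a $\mathbb{C}$-combination of $g u_i - g$ and $u_i^{-1}g' - g'$, which after collecting group-algebra coefficients is equivalent to the trivial group element being expressible using the relations $u_i \sim 1$, $u_i^* \sim 1$ — i.e. to the subgroup $H = \langle \clauseset \cup \clauseset^* \rangle$ containing a nontrivial scalar. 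Making this precise is a routine computation with group-algebra coordinates (the coefficient of each $g\in G$ in the putative combination must vanish except for a single scalar obstruction), and it shows $-1 \in \LIg{\dset}+\LIg{\dset}^*$ exactly when $H \cap \mathbb{C} \neq \{1\}$. Negating both sides gives the equivalence of (i) and (ii), completing the proof.
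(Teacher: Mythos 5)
Your overall plan — reduce from \Cref{thm:GamesNullSSRestated}, argue that the SOS term is redundant in the toric setting, and translate the resulting ideal condition into subgroup membership — is the paper's strategy, and the identity $(u-1)^*(u-1)=-\big((u-1)+(u-1)^*\big)$ for unitary $u$ that you single out really is used (it gives $f^*=-f^*f-f\in\LIg{\dset}$ for $f\in\dset$ with unimodular coefficient, in the implication (iv)~$\Rightarrow$~(v) of \Cref{thm:noSOSNullSS}). However, your argument for the central absorption claim — that $-1\in\LIg{\dset}+\LIg{\dset}^*+\SOS_\uGA$ already forces $-1\in\LIg{\dset}+\LIg{\dset}^*$ — has a genuine gap. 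An arbitrary square $p^*p$ with $p\in\C[G]$ involves group elements unrelated to the clauses $\beta_i g_i$, and the identity above gives no leverage on those. Applying the coefficient-of-identity trace does not help: it is nonzero on typical elements of $\LIg{\dset}$ (e.g.\ on $g_i^{-1}(\beta_i g_i-1)$), so it does not annihilate the ideal terms. And ``GNS lands in the group von Neumann algebra, where unitarity forces the SOS part to vanish on the cyclic vector'' is precisely what you cannot assert: positivity of a Hahn--Banach functional separating $-1$ from $\LIg{\dset}+\LIg{\dset}^*$ on the \emph{entire} SOS cone is the nontrivial content, and you give no reason the separation alone delivers it.

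What actually closes the argument in the paper is an \emph{SOS conditional expectation} $\cexp:\C[G]\to\C[\mathfrak G(\supp\dset)]$ onto the group algebra of the subgroup generated by $\supp(\dset)$ (\Cref{thm:cond_exp_for_toric_ideals}). Pushing a hypothetical decomposition $-1=a+b^*+s$ (with $a,b\in\LIg{\dset}$, $s\in\SOS_\uGA$) through $\cexp$, using its bimodule and SOS-preserving properties (\Cref{cor:SOS_CE_LI}), lands everything inside the small $*$-subalgebra $\cC$ generated by $\dset\cup\{1\}$, where \Cref{lem:small_subalgebra_SOS} pins down the structure of SOS and forces $1$ into the non-unital $*$-subalgebra generated by $\dset$; \Cref{lem:subalgebra_membership_subgroup_membership} then converts this to $H\cap\CC\neq\{1\}$. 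This projection step is the ingredient your proposal is missing, and without it the chain of equivalences does not close. Your concluding paragraph (collecting group-algebra coefficients) is the right idea for the group translation, but it too needs the care of that lemma — including the $|\beta_i|\neq1$ case, which you flag and then defer. The full chain of equivalences is \Cref{thm:noSOSNullSS}, from which the stated theorem follows directly.
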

Condition (i) makes it clear \Cref{thm:perfect_unitary_games_Restated} can be viewed as a version of \Cref{thm:GamesNullSSRestated} for \tgames which holds without the SOS term. Additionally, condition (ii) reduces the characterization of perfect commuting operator strategies in terms of $*$-algebras given in \Cref{thm:GamesNullSSRestated} to one entirerly in terms of groups. In the paper we show that both linear systems games and XOR games are \tgames, and that \Cref{thm:perfect_unitary_games_Restated} recovers the algebraic characterizations of these games given in \cite{cleve2017perfect,watts20203xor} respectively. We also show that \Cref{thm:perfect_unitary_games_Restated} lets us extend the algebraic characterization of both XOR and binary linear systems games to more general games based on linear equations Mod $r$ for any integer $r$.\footnote{In \cite{cleve2017perfect} it was already observed that the characterization of binary linear systems games presented could be generalized to any system of equations mod $p$ for any prime $p$. This generalization is given explicitly in \cite{goldberg2021synchronous}. }

Both \Cref{thm:GamesNullSSRestated,thm:perfect_unitary_games_Restated} allow new algorithms for identifying nonlocal games with perfect commuting operator strategies. In this paper we discuss one such algorithm, based on Gr\"obner bases, and give some sample applications. We note that, unlike the upper bounds coming from the ncSoS hierarchy, these Gr\"obner bases algorithms can both prove a game has commuting operator value strictly less than $1$ and identify some games with commuting operator value exactly equal to $1$.\footnote{The question of whether a game has perfect commuting operator value is undecidable~\cite{slofstra2020tsirelson}, meaning these algorithms (or any algorithms!) cannot always identify games with commuting operator value one, but there are many examples where they do.}

\paragraph{Recent Related Work} In \cite{lupini2020perfect} the characterization of perfect commuting operator strategies for synchronous games in terms of representations of a $*$-algebra into a $C^*$ algebra with a faithful trace was generalized to a class of games known as imitation games, and a purely algebraic characterization of perfect commuting operator strategies was giving for a subclass of imitation games known as mirror games. In this paper we do not try and reconcile these characterizations with \Cref{thm:GamesNullSSRestated,thm:perfect_unitary_games_Restated}, but we do note it as an interesting problem worth future study. In \cite{goldberg2021synchronous} the solution group characterization of perfect commuting operator strategies for linear systems game given in~\cite{cleve2017perfect} was related to the $C^*$-algebra characterization of perfect commuting operator strategies for a synchronous variant of linear systems games presented in~\cite{kim2018synchronous}. 
The relationship between these algebraic objects is similar to the relationships presented in \Cref{sec:Synchronous,sec:LinearSysGames} of this paper.

\paragraph{Acknowledgements} The authors would like to thank Vern Paulsen and William Slofstra for helpful discussions. IK was supported by the Slovenian Research Agency grants J1-2453, N1-0217 and P1-0222. ABW was supported by NSF grant CCF-1729369.

\paragraph{Paper Outline}
In \Cref{sec:Math_Background} we review some general mathematical definitions necessary to understand the main results of the paper. In \Cref{sec:NG_defn} we introduce terminology related to nonlocal games and define key algebraic objects. In \Cref{sec:PerfectGamesNullSS} we introduce \NullSSs and prove \Cref{thm:GamesNullSSRestated}. In \Cref{sec:NoSOSNullSS} we simplify the \NullSSs from the previous section and prove \Cref{thm:perfect_unitary_games_Restated},  then apply it to XOR and Mod $r$ games. \Cref{sec:LinearSysGames} shows that \Cref{thm:perfect_unitary_games_Restated} is equivalent to the characterization of linear systems games given in \cite{cleve2017perfect}. \Cref{sec:Synchronous} shows that, in the special case of synchronous games, \Cref{thm:GamesNullSSRestated} is equivalent to the characterization of synchronous games given in~\cite{helton2017algebras}. \Cref{sec:GBAlgorithmFull} introduces Gr\"obner basis algorithms for identifying games with perfect commuting operator value based on \Cref{thm:GamesNullSSRestated,thm:perfect_unitary_games_Restated} and \Cref{sec:GBExamples1} gives an example application of our \NullSSs to quantum graph coloring.

There is a table of contents,  an index and a list of notation at the end of the paper.

\section{Math Background}

\label{sec:Math_Background}

\paragraph{Representations}
Given an algebra $\cA$, a \df{representation} of $\cA$ is a 
unital homomorphism $\pi:\cA\to\cB(\cH)$ for some Hilbert space $\cH$.
When $\cA$ is endowed with an involution $*$, then $\pi$ is 
called a $*$-representation if $\pi(a^*)=\pi(a)^*$
for all $a\in\cA$.

\def\ax{\langle x\rangle}

\paragraph{Free algebras}
Let $x=(x_1,\ldots,x_d)$ denote a tuple of noncommuting letters.
Words in $x$ form the \df{free monoid} $\ax$, including the empty word denoted $1$. 
A \df{noncommutative (nc) polynomial} is a linear combination of words;
the algebra of nc polynomials is 
the \df{free algebra} $\C\ax$. We endow it with the involution 
$*$ fixing $\R\cup\{x\}$.

\begin{ex}
Any representation $\pi$ of $\C\ax$ is described by a tuple of operators
$X\in\cB(\cH)^d$ through $\pi(x_j)=X_j$. Then $\pi$ is a $*$-representation if and only if the $X_j$ are self-adjoint.
\end{ex}

\paragraph{Group algebras}
Given an abstract group $G$, the \df{group algebra} $\C[G]$ has $G$ as
a vector space basis, and the multiplication is extended from the one in $G$ by linearity/distributivity. 
Thus elements  $p\in\C[G]$ are finite linear combinations 
\begin{align}
    p= \sum_j s_j g_j,
\end{align}
where $s_j\in\C$ and $g_j \in G$.
The involution on $\C[G]$ is 
complex conjugation on $\C$ and  $g\mapsto g^{-1}$ for $g\in G.$
Hence a $*$-representation of a group algebra
$\pi:\C[G]\to\cB(\cH)$ is a ring homomorphism such that
$\pi(g)$ is a unitary operator for all $g\in G$. 

\begin{ex}
Each group algebra admits a left regular $*$-representation. Namely, let $\ell^2(G)$ be the Hilbert space with Hilbert space basis $G$. 
Then the left regular representation $\lambda$ defined as
\[
\lambda(g) (\sum_j a_j g_j) = \sum_j a_j g^{-1} g_j
\qquad \text{for } \sum_j a_j g_j \in \ell^2(G) .
\]
Thus each $g\in G$ induces a unitary operator
$\lambda(g)\in \cB(\ell^2(G))$
and thus by linearity $\lambda:\C[G]\to\cB(\ell^2(G))$ 
is a $*$-representation.
\end{ex}

\sec{Nonlocal Game Definitions}
\label{sec:NG_defn}
    
This section gives an overview of all the terminology used to discuss nonlocal games in this paper. \Cref{ssec:general_notation} gives a semi-formal introduction to the language used to describe nonlocal games. \Cref{ssec:technical_definitions} gives technical definitions which are key to this paper. In \Cref{ssec:Algebraic_Picture} we introduce an algebraic framework which we will use to describe nonlocal games and their commuting operator strategies. In \Cref{ssec:Characterizing_perfect_games} we describe the condition that a nonlocal game has perfect commuting operator strategy in terms of some of the notation introduced in previous sections.  Finally, in \Cref{ssec:Games_Examples} we describe some well-know families of games using the language introduced in previous sections.

A reader already familiar with nonlocal games can skip \Cref{ssec:general_notation} but should not skip \Cref{ssec:technical_definitions} (or later sections) as conventions are set in those sections which will remain important throughout the paper. 

\ssec{General Notation}
\label{ssec:general_notation}

Nonlocal games describe experiments which test the correlations that can be produced by measurements of quantum systems. A nonlocal game involves a \df{referee} (also called the \df{verifier}) and $k$ \df{players} (also called \df{provers}). In a round of the game, the verifer selects a vector of questions $\vec{i} = (i(1), i(2), \ldots , i(k))$ randomly from a set $\questions$ of possible question vectors, then sends player $\alpha$ question $i(\alpha)$. Each player responds with an answer $a(\alpha)$. The players cannot communicate with each other when choosing their answers. After receiving an answer from each player, the verfier computes a \df{score} $V(\vec{a} \;|\vec{i}) = V(a(1), a(2), \ldots, a(k) | i(1), i(2), \ldots , i(k))$ which depends on the questions selected and answers received. The players know the set of possible questions $S$ and the scoring function $V$. Their goal is to chose a \df{strategy} for responding to each possible question which maximizes their score in expectation. The difficulty for the players lies in the fact that in a given round each player only has partial information about the questions sent to other players.
The set of all questions possible 
is finite; cardinality is denoted  $m$.\looseness=-1

For a given game $G$, the maximum expected score achievable by players is called the \df{value} of the game. The value depends on the resources available to the players. If players are restricted to classical strategies, the value is called the \df{classical value} and denoted $\omega(G)$. If players can make measurements on a shared quantum state (but still can't communicate) the value can be larger and is called the \df{entangled value}. More specifically, if the players shared state lives in a Hilbert space $\cH = \cH_1 \otimes \cH_2 \otimes \cdots \otimes \cH_k$ and the $i$-th player makes a measurement on the $i$-th Hilbert space ($i.e.$ all measurement operators used by the $i$-th player take the form $I _1 \otimes  \cdots \otimes I_{i-1} \otimes X_i \otimes I_{i+1} \otimes \cdots \otimes I_k$ with $X_i$ acting on $\cH_i$) the supremum score the players can obtain is called the \df{quantum value}, denoted $\omega^*_{\textrm{q}}$. In \cite{scholz2008tsirelson} it was shown that this value is equal to the supremum score the players can achieve making measurements on finite dimensional Hilbert spaces.

If the players state and Hilbert space is arbitrary and the only restriction placed on their measurements is all measurement operators for different players commute (enforcing no-communication), the maximum achievable score is called the \df{commuting operator value}, denoted $\covalue$. When the state shared by the players is finite dimensional the tensor product and commuting operator values of a game coincide. In the infinite dimensional case  $\omega^*_{\textrm{tp}} \leq \covalue$, and there exist games for which the inequality is strict~\cite{ji2020mip,fritz,junge}.

A notation summary is: 
\beq 
k:= \# \text{players}, \ \ n:= \# \text{questions}
\ \ m:= \#\text{responses per question} .
\eeq

\ssec{Basic Definitions}
\label{ssec:technical_definitions}

\sssec{Commuting Operator Strategies}

We start with a definition of a commuting operator strategy for nonlocal games. The setting is a separable Hilbert space $\cH$, possibly infinite dimensional; measurements, described by bounded operators in $\cB(\cH)$  and  states, 
positive semidefinite $\rho \in \cB(\cH)$ with trace 1.
Such operators $\rho$ are often called \df{density operators} or (normalized) \df{trace class operators}.
Of course
$\ell_{\rho}(W)= \tr(W \rho)$
defines 
a positive linear functional $\ell_\rho: \cB(\cH) \to \CC$ with $\ell(I)=1$. 
A \df{pure state} %$\ell$ 
corresponds to
rank one $\rho$, that is 
$\rho= \psi^* \psi$
with $\psi$ a unit vector in $\cH$.
In this case $\ell_\rho$ is of the form $W\mapsto \psi^* W\psi$.

\bs 

\begin{defn} \label{defn:co_strat}
A \df{commuting operator strategy} $\strat$ for a $k$-player, $n$-question, $m$-response nonlocal game is defined by $(\cH,\rho , \projrset(1), \projrset(2), \ldots , \projrset(k))$ where $\cH$ is a separable Hilbert space, $\rho$ a state, and each $\projrset(\alpha)$ for $\alpha\in%\{1,\ldots,k\}$ 
[k]$
is a set of projectors acting on 
$\cH$,
\begin{align}
    \projrset(\alpha) = \{ \projr{i}{a}{\alpha} \mid i \in [n], a \in [m] \}
\end{align}
which satisfy
\begin{subequations}
\begin{align}
    [\projr{i}{a}{\alpha}, \projr{k}{b}{\beta}] & = 0 \quad \forall \; \alpha \neq \beta, \\
%\end{align}
% and 
%\begin{align}
    \sum_{a \in [m]} \projr{i}{a}{\alpha} & = 1 \quad \forall \; \alpha \in [k], i \in [n]. \label{eq:proj_sum_to_1} 
\end{align}
\end{subequations}
\end{defn}

Note that as a consequence of \Cref{eq:proj_sum_to_1} we have $\projr{i}{a}{\alpha}\projr{i}{b}{\alpha} = 0$ and hence $[\projr{i}{a}{\alpha},\projr{i}{b}{\alpha}] = 0$ for any $\alpha, i$ and $a \neq b$. \\

\noindent
Conventions:
%\begin{note*}
%We will stick to the conventions:
\begin{center}
\df{$\alpha, \beta$ variables label players}, \ \  \
\df{$i,j$ variables label questions},  and 
\\
\df{$a,b$ variables label responses}.
%\end{note*}
\end{center}

\sssec{Games and their Commuting Operator Value}

\label{sssec:standard_games_notation}

A $k$-player, $n$-question, $m$-response nonlocal game $\game = (V, \mu)$ is specified by a scoring function 
\begin{align}
    V : [n]^k \times [m]^k \rightarrow [0,1]
\end{align}
and a probability distribution $\mu$ on $[n]^k$. The \df{score} a strategy $\strat$ obtains on a game $\game = (V,\mu)$ is given by 
\begin{align}
    \mdf{\score(\game,\strat)} = \sum_{\vec{i} \in [n]^k}  \sum_{\vec{a} \in [m]^k} \mu \left(\vec{i}\right) V\left(\vec{i}, \vec{a}\right) \Tr[\prod_{\alpha \in [k]} \projr{i(\alpha)}{a(\alpha)}{\alpha} \rho]
\end{align}
The commuting operator value $\covalue(\game)$ of a game is defined to be the supremum value achieved over commuting operator strategies, so 
\begin{align}
    \covalue(\game) = \sup_{\strat \in \costrat} \score(\game, \strat). \label{eq:covalue_physics}
\end{align}
where we have defined $\costrat(k,n,m)$ to be the set of all $k$-player, $n$-question, $m$-response commuting operator strategies. Often $k$, $n$, $m$ are implied from context, and we just write $\costrat$.

\ssec{The Algebraic Picture}
\label{ssec:Algebraic_Picture}

In this paper we think of strategies $\strat \in \costrat$ as arising from from representations of an algebra we call the universal game algebra. We define that algebra next.

\sssec{Universal Game Algebra}
\label{sssec:uGA}
Here we define the \df{Universal Game Algebra}
$\uGA$ in terms of various generators and relations.
These relations define a two sided ideal $\uGI$ which lives inside the $*$-algebra of all noncommutative polynomials in the generators. We call $\uGI$ the
\df{Universal Game Ideal}.
These relations reflect the algebraic properties of projectors or related algebraic objects.

\paragraph{Projection Generators}
Define $\uGA$ to be the $*$-algebra with generators 
$e(\alpha)^i_{a}$
which satisfy relations
\begin{subequations}\label{eq:defUA}
\begin{align}
    [e(\alpha)^i_a, e(\beta)^j_b] &= 0 \quad \forall \;  i,j, a, b, \alpha \neq \beta, \\
    \left(e(\alpha)^i_a\right)^2 &= \left(e(\alpha)^i_a\right)^* = e(\alpha)^i_a, \\
    e(\alpha)^i_a e(\alpha)^i_b &= 0 \quad \forall \; i, a \neq b,  \\
    \sum_a e(\alpha)^i_{a} &= 1 \quad \forall \; \alpha, i,
\end{align}
\end{subequations}
with $i,j \in [n]; a,b \in [m],$ and $\alpha \in [k]$. (Technically we should define a different Universal Algebra for every different value $n,m$ and $k$, so $\uGA = \uGA(n,m,k)$. We frequently omit this detail when $n,m$ and $k$ are clear from context.)

\def\faee{{\C\langle e\rangle}}

Alternately, we can describe $\uGA$ as a quotient of the free $*$-algebra. Namely, let $e$ denote the tuple 
$$
e= (e(\alpha)_a^i ))_{i,a,  \alpha},
$$
and define the universal game ideal $\uGI$ to be the 2 sided ideal in 
the free algebra $\faee$, defined by the 
polynomials 
\begin{align}
   \big\{ [e(\alpha)^i_a, e(\beta)^j_b],  \; 
    \left(e(\alpha)^i_a\right)^2 
    - e(\alpha)^i_a, \ 
    e(\alpha)^i_{a'} e(\alpha)^i_{b'},\ 
    %\left(e(\alpha)^i_a\right)^* - e(\alpha)^i_a , \
    \sum_a e(\alpha)^i_{a} - 1 \big\}
\end{align}
where the indices run through all
$ i,j, a, b, a'\neq b', \alpha \neq \beta .$ 
Then   $\uGA = \faee/ \uGI$; note that $\uGA$ comes equipped with
an involution induced by $\big(e(\alpha)^i_a\big)^*= e(\alpha)^i_a$.

\bs 

There are two common changes of variables we will use when describing the algebra $\uGA$. 

\paragraph{Signature Matrix Generators}
 The first change of variables is to generators satisfying the algebraic properties of signature matrices, defined by
\begin{align}
    \siga{i}{a}{\alpha} := 2\proja{i}{a}{\alpha} - 1.
\end{align}
These variables satisfy relations 
\begin{subequations}
\begin{align}
    \siga{i}{a}{\alpha}\siga{j}{b}{\beta} &= \siga{j}{b}{\beta}\siga{i}{a}{\alpha} \quad \forall \; i,j,a,b, \alpha \neq \beta,  \label{eq:siga_commute} \\
    (2\siga{i}{a}{\alpha}-1)(2\siga{i}{b}{\alpha}-1) &= 0 \quad \forall \; i,a\neq b, \alpha \text{ and }\label{eq:adds} \\
    \sum_{a} \siga{i}{a}{\alpha} &= - (m-2),
    \label{eq:siga_sum} \\
    \left(\siga{i}{a}{\alpha}\right)^* &= \siga{i}{a}{\alpha}, \label{eq:siga_inverse} \\
    \left(\siga{i}{a}{\alpha}\right)^2 &= 1 \quad \forall \; i,a, \alpha. \label{eq:siga_square}
\end{align}
\end{subequations}
It is straightforward to check that the set of relations above gives a defining set of relations for the algebra $\uGA$ written in terms of the $\siga{i}{a}{\alpha}$. 

\paragraph{Cyclic Unitary Generators} The second change of variables is to cyclic unitary generators, defined by 
\begin{align}
    \cycla{j}{\alpha} := \sum_a \exp(\frac{2\pi a i}{ m}) \proja{j}{a}{\alpha}.
\end{align}
(In this subsection $i$ refers to the imaginary unit, and $j$ is used to index questions). These elements satisfy relations 
\begin{subequations}
\begin{align}
    \cycla{j}{\alpha}\cycla{j'}{\beta} &= \cycla{j'}{\beta}\cycla{j}{\alpha} \quad \forall \; j,j', \alpha \neq \beta, \label{eq:cycobsrel1} \\
    \left(\cycla{j}{\alpha}\right)^* &= \left(\cycla{j}{\alpha}\right)^{-1} \label{eq:cycobsrel2}, \\
    \left(\cycla{j}{\alpha}\right)^m &= 1. \label{eq:cycobsrel3}
\end{align}
\end{subequations}
Routine calculation using the inverse transformation 
\begin{align}
    \proja{j}{a}{\alpha} = \frac{1}{m} \sum_b \left(\exp(\frac{-2\pi a i}{m}) \cycla{j}{\alpha}\right)^b
\end{align}
shows that \Cref{eq:cycobsrel1,eq:cycobsrel2,eq:cycobsrel3} also form a defining set of relations for $\uGA$. 

We can also define a group $\gagp$ generated by elements $\cycla{j}{\alpha}$ which satisfy the relations defined in \Cref{eq:cycobsrel1,eq:cycobsrel2,eq:cycobsrel3}. Then we can write that $\uGA = \mathbb{C}[\gagp]$, making it clear that $\uGA$ is a group algebra. 

\paragraph{Two Answer Games}

An notable special case occurs when the answer set of the game contains only $2$ responses. In this case the cyclic observable and signature matrix change of variables are the same, since 
\begin{subequations}
\begin{align}
    \cycla{i}{\alpha} & = \proja{i}{0}{\alpha} - \proja{i}{1}{\alpha} = 2\proja{i}{0}{\alpha} - 1 = \siga{i}{0}{\alpha},\\
%\end{align}
%and 
%\begin{align}
    \siga{i}{0}{\alpha} & = - \siga{i}{1}{\alpha}.
\end{align}
\end{subequations}
In this case we use the simplified notation $\xora{i}{\alpha} = \cycla{i}{\alpha} = \siga{i}{0}{\alpha}$.

\sssec{Strategies as Representations of the Universal Game Algebra}
\label{sssec:Strategies_as_representations}

Recall from \Cref{sec:Math_Background} that a $*$-representation $\rep: \uGA \rightarrow \cB(\cH)$ is an algebraic structure preserving map, so images of generators of $\uGA$
consist of operators on $\cH$ which respect the algebraic structure of $\uGA$, for example
$$
\pi\big((e(\alpha)^i_a)^2\big)=
\pi(e(\alpha)^i_a)^2.
$$

Then we note that any $*$-representation $\rep: \uGA \rightarrow \cB(\cH)$ and state $\state \in \cH$ can be used to define a commuting operator strategy $\strat \in \costrat$, simply by setting 
\begin{align}
    \projr{i}{a}{\alpha} = \rep(\proja{i}{a}{\alpha})
\end{align}
for all $i,a,\alpha$ and taking the Hilbert space $\cH$ to be the target Hilbert space of the $*$-representation~$\rep$. Similarly, any commuting operator strategy defined by a Hilbert space $\cH$, state $\state$, and projectors $\projr{i}{a}{\alpha}$ can be used to define a $*$-representation $\rep: \uGA \rightarrow \cB(\cH)$ by setting 
\begin{align}
    \rep(\proja{i}{a}{\alpha}) = \projr{i}{a}{\alpha}
\end{align}
for all $i,a,\alpha$. For this reason we view commuting operator strategies interchangeably with pairs $(\rep, \state)$ consisting of $*$-representations $\pi:\uGA \rightarrow \cB(\cH)$ and states $\state \in \cH$. 

\sssec{An Algebraic Definition of the Commuting Operator Value of a Game}

For any game $\game$ define the game polynomial $\Phi_\game \in \uGA$ by 
\begin{align}\label{eq:gamepoly}
    \Phi_\game = \sum_{\vec{i} \in [n]^k}  \sum_{\vec{a} \in [m]^k} \mu\left(\vec{i}\right) V\left(\vec{i}, \vec{a}\right) \prod_{\alpha \in [k]} \proja{i(\alpha)}{a(\alpha)}{\alpha} 
\end{align}
Then, recalling the view of strategies as representations introduced in \Cref{sssec:Strategies_as_representations},
\begin{align}\label{eq:covdefn}
    \covalue(\game) = \sup_{\pi, \rho} \tr[\pi(\Phi_\game)\rho]
\end{align}
where the supremum is taken over all $*$-representations $\pi$ of $\uGA$ into bounded operators on a Hilbert space $\cH$, and density operators $\rho \in \cB(\cH)$.

\bs 

A slightly different representation of $\covalue$ which is useful is given in the next lemma.
\begin{lem}
The sup in \Cref{eq:covdefn} is attained for some 
pair representation-density operator $(\hat \pi,\hat \rho)$.
Moreover, one can take 
$\hat \rho$ to be a pure state 
$\hat \rho= \hat \psi^* \hat \psi$,
so
\beq\label{eq:covPsi}
  \covalue(\game) = \max_{\pi,\psi}\psi^*\pi(\Phi_\game) \psi.
\eeq
Here $\psi \in \cH$ is a unit vector, and $\cH$ is the Hilbert space into which $\pi$ represents;
without loss of generality $\cH$ can be taken to be separable.

\end{lem}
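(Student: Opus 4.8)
The plan is to establish the lemma in two stages: first realize the supremum as a maximum over a single "universal" representation-state pair, then reduce to pure states and separability. For the first stage, I would invoke the GNS construction. Given any sequence of pairs $(\pi_j, \rho_j)$ with $\tr[\pi_j(\Phi_\game)\rho_j] \to \covalue(\game)$, form the direct sum representation on $\bigoplus_j \cH_j$ (or, more cleanly, pass to the universal $*$-representation of $\uGA$ obtained by GNS over all states), so that every state $\rho_j$ becomes a state on a fixed Hilbert space carrying a fixed representation $\hat\pi$. The key point is that $\uGA$ is a quotient of a group algebra by the cyclic-unitary presentation from \Cref{sssec:uGA}, so every generator is sent to a unitary and hence $\hat\pi$ is automatically bounded (norm-one on generators); there is no issue of unboundedness. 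On this fixed space the functional $W \mapsto \tr[\hat\pi(\Phi_\game)W]$ is weak-$*$ continuous and linear, and the set of density operators is weak-$*$ compact (it sits inside the unit ball of the trace class, which is the dual of the compact operators). Hence the supremum over density operators is attained at some $\hat\rho$.

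For the second stage — reducing $\hat\rho$ to a pure state — I would use the spectral decomposition $\hat\rho = \sum_n \lambda_n \hat\psi_n^*\hat\psi_n$ with $\lambda_n \ge 0$, $\sum_n \lambda_n = 1$, and $\hat\psi_n$ orthonormal. Then $\tr[\hat\pi(\Phi_\game)\hat\rho] = \sum_n \lambda_n\, \hat\psi_n^*\hat\pi(\Phi_\game)\hat\psi_n$ is a convex combination of the numbers $\hat\psi_n^*\hat\pi(\Phi_\game)\hat\psi_n$, so at least one of them is $\ge \tr[\hat\pi(\Phi_\game)\hat\rho] = \covalue(\game)$; combined with the reverse inequality coming from the definition of $\covalue$ as a supremum, that $\hat\psi_n$ gives equality, establishing \eqref{eq:covPsi}. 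Finally, for separability: replace $\cH$ by the closed span of $\{\hat\pi(w)\hat\psi : w \in \uGA\}$, the cyclic subspace generated by the optimal vector. This subspace is invariant under $\hat\pi$ (since $\uGA$ is an algebra), contains $\hat\psi$, and is separable because $\uGA$ is a countable-dimensional algebra (countably many generators, hence a countable spanning set of monomials). Restricting $\hat\pi$ to this subspace changes neither $\hat\psi^*\hat\pi(\Phi_\game)\hat\psi$ nor the $*$-representation property.

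The main obstacle, and the step requiring the most care, is the first one: arguing that the supremum over \emph{all} representations collapses to a maximum over a single fixed one. One must be careful that the supremum in \eqref{eq:covdefn} is genuinely over a "set's worth" of representations and states, so compactness cannot be applied naively; the resolution is precisely that unitarity of the generators lets one assemble all competitors into one big representation (direct sum / universal GNS representation) with a uniform norm bound, after which weak-$*$ compactness of the state space does the work. Once that is in place the remaining steps are routine convexity and cyclic-subspace arguments. I would also remark that an alternative to the direct-sum argument is to note that the states on $\uGA$ form a weak-$*$ compact convex set and $\Phi_\game \mapsto \varphi(\Phi_\game)$ is weak-$*$ continuous on it, so the maximum is attained at some state $\varphi$, which is then realized via GNS by a cyclic pair $(\hat\pi,\hat\psi)$ on a separable space — this packages all three conclusions (max attained, pure state, separable) at once.
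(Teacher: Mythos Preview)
Your alternative at the end is correct and is exactly what the paper does: identify strategies $(\pi,\psi)$ with normalized positive linear functionals $\ell$ on $\uGA$ via $\ell(a)=\psi^*\pi(a)\psi$, observe that the state space of $\uGA$ is weak-$*$ compact by Banach--Alaoglu, conclude that the supremum $\sup_\ell \ell(\Phi_\game)$ is attained, and then run GNS on the maximizing $\ell$ to recover a pair $(\hat\pi,\hat\psi)$ with $\cH$ separable because $\uGA$ is countably dimensional. Your spectral-decomposition reduction to pure states is a slightly more explicit version of the paper's one-line ``extreme points of density operators are rank one'' remark, and your cyclic-subspace argument for separability is the same as the paper's.

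Your \emph{primary} approach, however, has a genuine gap. On an infinite-dimensional Hilbert space (which your direct sum $\bigoplus_j\cH_j$ certainly is), the density operators are \emph{not} weak-$*$ compact in $\cL^1(\cH)=\cK(\cH)^*$: if $(e_n)$ is orthonormal then $|e_n\rangle\langle e_n|\to 0$ weak-$*$, so the trace-one condition is not preserved under weak-$*$ limits. Moreover, the functional $W\mapsto\tr[\hat\pi(\Phi_\game)W]$ is weak-$*$ continuous in that topology only when $\hat\pi(\Phi_\game)$ is compact, which it will not be on your infinite direct sum. So the compactness step fails as written, and the way out is precisely to pass to the abstract state space of $\uGA$ (your alternative), where Banach--Alaoglu applies cleanly.
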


\begin{proof}
Firstly, it is well-known that
the extreme points of the convex set of
 density operators
$\rho\in\cB(\cH)$ are exactly rank ones.
This shows one can replace $\rho$ 
by $ \psi^*  \psi$ for unit vectors $\psi$
in \Cref{eq:covdefn}.

Given $\pi,\psi$ as above, the map 
\beq\label{eq:ellGNS}
\ell:\uGA\to\CC, \quad a\mapsto \psi^*\pi(a)\psi
\eeq 
is a positive
linear functional with $\ell(1)=1$. 
Conversely, a normalized positive linear functional $\ell:\uGA\to\CC$ yields through
the Gelfand-Naimark-Segal (GNS) construction a
representation $\pi$ and unit vector $\psi$ such that
\Cref{eq:ellGNS} holds. Thus \Cref{eq:covdefn}
can be rewritten as
\beq\label{eq:covState}
\covalue(\game) = \sup_{\ell} \ell(\Phi_\game).
\eeq
The set of the normalized positive linear functionals on $\uGA$ is weak-$*$ compact
by the Banach-Alaoglu theorem.
Thus by continuity, the supremum in \Cref{eq:covState} is attained.
Hence the GNS construction as explained above yields \Cref{eq:covPsi}.
Observe that since $\uGA$ is countably dimensional, the constructed
Hilbert space $\cH$ will be separable.
\end{proof}

\sssec{Valid and Invalid Response Sets}
\label{sssec:valid_and_invalid_response_sets}
Now we introduce a few more objects which are useful for describing nonlocal games.

First, for any game $\game$, let the  set $\questions \subseteq [n]^k$ contain all question vectors which can be sent to the players with nonzero probability. Formally, 
\begin{align}
    \questions = \{\vec{i} \in [n]^k : \mu(\vec{i}) > 0\}. 
\end{align}
Then, for each $\vec{i} \in \questions$, let $\gr(\vec{i})$ list the response vectors to each question vector which achieve a score of exactly 1, so 
\begin{align}
    \gr(\vec{i})=\{ \vec{a} : \score(\vec{i},\vec{a}) = 1\}.
\end{align} 
We say the set $\gr(\vec{i})$ contains all the valid or ``winning'' responses to the question vector $\vec{i}$. 
% We can also collect all the valid response vectors together in a single set $\gr$ by defining 
% \begin{align}
%     \gr = \bigcup_{\vec{i} \in \questions} \gr(\vec{i}).
% \end{align}
We also define the set of invalid responses or ``losing'' responses $\br(\vec{i})$ to be the complement of the set $\gr(\vec{i})$.
% and the invalid question-response set $\br$ to consist of all ``invalid response'' question answer pairs, so 
% \begin{align}
%     \br = \bigcup_{\vec{i} \in \questions} \br(\vec{i}).
% \end{align}

In this paper we assume that all games have a scoring function $\score$ whose image is either $0$ or $1$ and a distribution $\mu$ which is uniform over a set of allowed questions. For these games, specifying a question set $\questions$ and sets $\gr(\vec{i})$ (or $\br(\vec{i})$) for all $\vec{i} \in \questions$ completely specifies the game, since we can write 
\begin{align}
    \Phi_\game = \frac{1}{\abs{\questions}} \sum_{\vec{i} \in \questions} \sum_{\vec{a} \in \gr(\vec{i})} \prod_{\alpha \in [k]} \proja{i(\alpha)}{a(\alpha)}{\alpha}.
\end{align}

\ssec{Determining Perfect Commuting Operator Strategies for Games}

\label{ssec:Characterizing_perfect_games}

A commuting operator strategy $\strat = (\rep,\state)$ is called a perfect strategy for a game $\game$ if $\score(\game,\strat) = 1$. In this section, we develop some terminology for describing perfect commuting operator strategies for nonlocal games.

We first introduce a general definition which will be used frequently in later sections. 

\begin{defn} \label{defn:dset}
A game $\game$ is said to be \df{\dby} a set of elements $\dset \subseteq \uGA$ (or, equivalently, $\dset$ is said to be a determining set of $\game$) if it is true that any strategy $\strat = (\rep,\state)$ satisfies $\score(\game,\strat) = 1$ iff 
$\rep(\delt)\state = 0$ for all $\delt \in \dset$. 
\end{defn}

\sssec{Determining Sets of a Game}
\label{sssec:valid_invaid_response_det_sets}

We next show that any game $\game$ has two natural \detsets, based on the valid and invalid response sets introduced in \Cref{sssec:valid_and_invalid_response_sets}. We first define these sets, then show they are both determining sets for the game~$\game$. 

\begin{defn} \label{defn:valid_invalid_poly}
For any game $\game$ with question set $\questions$ and valid responses $\gr(\vec{i})$, we introduce a companion set
of valid elements of $\uGA$,
\begin{align}
    \validpoly := \left\{ \sum_{\vec{a} \in \gr(\vec{i})} \prod _ \alpha \proja{i(\alpha)}{a(\alpha)}{\alpha}  -1 \right\}_{\vec{i} \in \questions},
\end{align}
Similarly, define the invalid elements $\invalidpoly$ by 
\begin{align}
    \invalidpoly := \left\{ \prod _ \alpha \proja{i(\alpha)}{a(\alpha)}{\alpha} \right\}_{(\vec{i},\vec{a}) \in \questions \times \br(\vec{i})}.
\end{align}
\end{defn}

\begin{thm} \label{thm:perfect_games_direc_zero}
Let $\game$ be a game and $\validpoly$, $\invalidpoly$ be  as in \Cref{defn:valid_invalid_poly}. Then $\game$ is \dby both $\validpoly$ and $\invalidpoly$.
\end{thm}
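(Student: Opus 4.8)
The plan is to unwind the definition of \df{\dby} from \Cref{defn:dset} and reduce the claim to a direct computation with the score function $\score(\game,\strat)$. Recall that for a strategy $\strat=(\rep,\state)$ we have
\[
\score(\game,\strat) = \frac{1}{\abs{\questions}}\sum_{\vec{i}\in\questions}\sum_{\vec{a}\in\gr(\vec{i})}\state^*\rep\Big(\prod_\alpha \proja{i(\alpha)}{a(\alpha)}{\alpha}\Big)\state,
\]
using the pure-state form of the value and the assumption that $\mu$ is uniform on $\questions$ and $V$ is $0/1$-valued. The key structural fact I would isolate first is that, because the operators $\rep(\proja{i(\alpha)}{a(\alpha)}{\alpha})$ for a fixed $\vec{i}$ are products of commuting projections (one per player, and distinct players commute), the operator $P_{\vec{i}} := \sum_{\vec{a}\in\gr(\vec{i})}\rep\big(\prod_\alpha \proja{i(\alpha)}{a(\alpha)}{\alpha}\big)$ is itself an orthogonal projection on $\cH$; likewise each summand $Q_{\vec{i},\vec{a}} := \rep\big(\prod_\alpha \proja{i(\alpha)}{a(\alpha)}{\alpha}\big)$ is a projection, and the projections $\{Q_{\vec{i},\vec{a}}\}_{\vec a}$ for fixed $\vec i$ are pairwise orthogonal and sum to $1$ (using relation \eqref{eq:proj_sum_to_1} tensored across players). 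Hence $\sum_{\vec a\in\gr(\vec i)}Q_{\vec i,\vec a} + \sum_{\vec a\in\br(\vec i)}Q_{\vec i,\vec a} = 1$, so $P_{\vec i} = 1 - \sum_{\vec a\in\br(\vec i)}Q_{\vec i,\vec a}$.

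With this in hand, the argument for $\validpoly$ runs as follows. Since each $P_{\vec i}\le 1$ as a positive operator, each term $\state^*P_{\vec i}\state \le \state^*\state = 1$, so $\score(\game,\strat)\le 1$ always, with equality iff $\state^*P_{\vec i}\state = 1$ for every $\vec i\in\questions$. For an orthogonal projection $P$ and unit vector $\state$, $\state^*P\state = 1$ iff $P\state = \state$ iff $(P-1)\state = 0$. Therefore $\score(\game,\strat)=1$ iff $(P_{\vec i}-1)\state = 0$ for all $\vec i$, i.e. iff $\rep(y)\state = 0$ for every $y = \sum_{\vec a\in\gr(\vec i)}\prod_\alpha \proja{i(\alpha)}{a(\alpha)}{\alpha} - 1 \in \validpoly$. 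This is exactly the condition in \Cref{defn:dset}, so $\game$ is \dby $\validpoly$.

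For $\invalidpoly$, I would use the complementary decomposition. Since $1 - P_{\vec i} = \sum_{\vec a\in\br(\vec i)}Q_{\vec i,\vec a}$ is a sum of pairwise orthogonal projections, the vector equation $(1-P_{\vec i})\state = 0$ is equivalent (taking inner product with $\state$, using positivity of each $Q_{\vec i,\vec a}$, then back again since $Q_{\vec i,\vec a}\state\perp Q_{\vec i,\vec b}\state$ forces each $\|Q_{\vec i,\vec a}\state\|=0$) to $Q_{\vec i,\vec a}\state = 0$ for every $\vec a\in\br(\vec i)$. Combining with the previous paragraph, $\score(\game,\strat)=1$ iff $\rep\big(\prod_\alpha \proja{i(\alpha)}{a(\alpha)}{\alpha}\big)\state = 0$ for all $(\vec i,\vec a)\in\questions\times\br(\vec i)$, which is precisely $\rep(n)\state = 0$ for all $n\in\invalidpoly$. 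Hence $\game$ is \dby $\invalidpoly$ as well.

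The main obstacle, and the step deserving the most care, is the claim that $P_{\vec i}$ and the $Q_{\vec i,\vec a}$ are genuinely orthogonal projections and that $\{Q_{\vec i,\vec a}\}_{\vec a\in[m]^k}$ is a projection-valued partition of unity on $\cH$ — this is where the commutation relations between different players and the within-player orthogonality/completeness relations \eqref{eq:proj_sum_to_1} are all used, and one must be slightly careful that $\rep$ being a $*$-representation transports these relations faithfully (it does, since they hold already in $\uGA$). Once that bookkeeping is done, everything else is the elementary fact that $\state^*P\state=1\iff P\state=\state$ for a projection $P$ and unit vector $\state$, applied termwise. I would also remark that the equivalence ``$\score=1$ on all of $\questions$'' $\iff$ ``each term equals $1$'' uses crucially that $\mu$ has full support on $\questions$ and is uniform, so no cancellation can occur.
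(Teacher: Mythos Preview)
Your proposal is correct and follows essentially the same route as the paper's proof: both arguments use that the score is an average of terms each bounded above by $1$, so equality forces every term to equal $1$, and then pass from the scalar equation $\state^*P_{\vec i}\state=1$ to the vector equation $(P_{\vec i}-1)\state=0$ via positivity. The only cosmetic difference is that you explicitly verify $P_{\vec i}$ and the $Q_{\vec i,\vec a}$ are orthogonal projections, whereas the paper works with the slightly weaker observation that $P_{\vec i}-I$ is negative semidefinite; both yield the same conclusion. One nitpick: uniformity of $\mu$ is not actually needed---any strictly positive weights on $\questions$ suffice for the ``average equals maximum'' step.
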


\def\psd{positive semidefinite}

\begin{proof}
By definition, a strategy $(\rep,\state)$ for a nonlocal game is perfect iff 
\begin{align}
    \state^* \rep(\Phi_G)\state &= 1,
\end{align}
i.e.,
\begin{align}  
 \state^* \frac{1}{\abs{\questions}} \sum_{\vec{i} \in \questions}  \sum_{\vec{a} \in \gr(\vec{i})} \prod_{\alpha \in [k]} \rep\left(\proja{i(\alpha)}{a(\alpha)}{\alpha}\right) \state &= 1.
\end{align}
This equation has the form: the average of a function equals its maximum, so each term equals the maximum. We now  exploit this:
 for all $\vec{i} \in \questions$ we have
\begin{align}\label{eq:someGPineq}
    \rep\left(\sum_{\vec{a} \in \gr(\vec{i})} \prod_{\alpha \in [k]} \proja{i(\alpha)}{a(\alpha)}{\alpha}\right)  \leq \rep\left(\sum_{\vec{a} \in [m]} \prod_{\alpha \in [k]} \proja{i(\alpha)}{a(\alpha)}{\alpha}\right) = I
\end{align} 
hence 
\begin{align}
    \state^* \rep\left(\sum_{\vec{a} \in \gr(\vec{i})} \prod_{\alpha \in [k]} \proja{i(\alpha)}{a(\alpha)}{\alpha}\right) \state \leq \state^* \state = 1
\end{align} 
and a game is perfect iff we have for all $\vec{i} \in \questions$:
\begin{align}\label{eq:trace_proj_direc_zero0}
    \state^* \rep\left(\sum_{\vec{a} \in \gr(\vec{i})} \prod_{\alpha \in [k]} \proja{i(\alpha)}{a(\alpha)}{\alpha} \right) \state &= 1,
\end{align}
equivalently,
\begin{align}\state^* \rep \left(\left(\sum_{\vec{a} \in \gr(\vec{i})} \prod_{\alpha \in [k]} \proja{i(\alpha)}{a(\alpha)}{\alpha} \right) - 1 \right) \state &= 0. \label{eq:trace_proj_direc_zero}
\end{align}
 Again using \Cref{eq:someGPineq},
we see 
\begin{align}
    \rep\left(\sum_{\vec{a} \in \gr(\vec{i})} \prod_{\alpha \in [k]} \proja{i(\alpha)}{a(\alpha)}{\alpha} \right) - I
\end{align}
is negative semidefinite, hence \Cref{eq:trace_proj_direc_zero} implies 
\begin{align}
    \rep\left(\left(\sum_{\vec{a} \in \gr(\vec{i})} \prod_{\alpha \in [k]} \proja{i(\alpha)}{a(\alpha)}{\alpha} \right) - I \right) \state = 0 
\end{align}
which finishes the first part of the proof.

To convert this condition from terms of $\gr$ to terms of $\br$ fix $\vec{i} \in \questions$ and use

\begin{align}
    \sum_{\vec{a} \in \gr(\vec{i}) \cup \br(\vec{i})   } \state^* \rep \left(\prod_{\alpha \in [k]} \proja{i(\alpha)}{a(\alpha)}{\alpha} \right) \state = 
      \state^* \state = 1 . 
\end{align}
In words we are summing over all responses valid or invalid to question $\vec{i}$.
Subtract \Cref{eq:trace_proj_direc_zero0} from this to get
$\sum_{\vec{a} \in  \br(\vec{i})   } =0 $.
This is a sum of nonnegative terms, so each is 0:
\begin{align}
    \state^* \rep \left(\prod_{\alpha \in [k]} \proja{i(\alpha)}{a(\alpha)}{\alpha}\right) \state 
    = 0  \ \ \  \forall \; \vec{a} \in \br(\vec{i}) 
    \label{eq:perfectGameInvalidResponse2}
    \end{align}
Since operators 
    $\rep\left(\proja{i}{a}{\alpha}\right)$ are \psd \ 
    we get 
\[
    \rep \left(\prod_{\alpha \in [k]} \proja{i(\alpha)}{a(\alpha)}{\alpha}\right) \state = 0 \ \ \ \forall \; \vec{a} \in \br(\vec{i}) ,%\label{eq:perfectGameInvalidResponse3}
\]
    thus finishing the proof.
\end{proof}

\sssec{Torically Determined Games}
\label{sssec:tgames}

While any game $\game$ is determined by the sets $\validpoly$ and $\invalidpoly$, there are, in general, many other sets of elements in $\uGA$ that determine a game $\game$. An important class of games in this paper are \tgames, which we define as games \dby a particularly nice set of binomial elements.

\begin{defn}
A game $\game$ is called a \df{\tgame} if there exists a group $G$ with $\uGA \cong \mathbb{C}[G]$ and $\game$ is determined by a set of elements 
\begin{align}
    \dset = \{\beta_i g_i - 1\}
\end{align}
with each $\beta_i \in \mathbb{C}$ and $g_i \in G$. In this case we say $\game$ is \df{\tdby} the set $\dset$ and call the elements $\beta_i g_i$ clauses of $\dset$. 
\end{defn}

Traditionally, the term toric ideal refers to ideals generated by binomials,  these being  the 
difference of two monomials. 
The next lemma shows that our use of the term toric is consistent with this.
\begin{lem}
A game $\game$ which is determined by a set of elements of the form 
\begin{align}
    \dset' = \{\beta_i g_i - \beta_i' g_i'\}_i
\end{align}
with all $\beta_i, \beta_i' \in \mathbb{C}$, and $g_i, g_i' \in G$ for some group $G$ with $\uGA \cong \mathbb{C}[G]$ is also {torically determined}.
\end{lem}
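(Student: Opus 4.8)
The plan is to reduce the binomial form $\dset' = \{\beta_i g_i - \beta_i' g_i'\}_i$ to the normalized form $\{\gamma_i h_i - 1\}_i$ used in the definition of \tgame, and to argue that this reduction does not change which strategies have $\rep(\cdot)\state = 0$ on the generators. First I would observe that a strategy $(\rep,\state)$ satisfies $\rep(\beta_i g_i - \beta_i' g_i')\state = 0$ iff $\beta_i\rep(g_i)\state = \beta_i'\rep(g_i')\state$. The first technical point is that $\beta_i' \neq 0$: if $\beta_i' = 0$ then the element is $\beta_i g_i$, and since $\rep(g_i)$ is unitary (so $\rep(g_i)\state$ has the same norm as $\state$), $\rep(\beta_i g_i)\state = 0$ forces either $\beta_i = 0$ (in which case the element is $0$ and contributes nothing) or $\state = 0$; in the latter case the determining condition is never satisfiable, so $\game$ would have no perfect commuting operator strategy, and one can then take $\dset$ to be, e.g., $\{1 \cdot e(1)^1_1 \cdot (\text{something}) \cdots\}$ — more cleanly, pick any $g \in G$ and $\beta \in \CC$ with $\beta \neq 1$ and $|\beta| = 1$ and note $\{\beta g - 1\}$ has no solution either, so it torically determines $\game$. (Alternatively, and more simply, such degenerate $\dset'$ can be pre-processed away.) Assuming $\beta_i' \neq 0$ for every $i$, I would divide: $\beta_i g_i - \beta_i' g_i' = \beta_i'(g_i' )( (\beta_i/\beta_i') (g_i')^{-1} g_i - 1)$, using that $g_i'$ is a unit in $\C[G]$.

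The key step is then: since $\rep(g_i')$ is invertible and $\beta_i' \neq 0$, we have $\rep(\beta_i g_i - \beta_i' g_i')\state = 0$ iff $\rep\big((\beta_i/\beta_i')(g_i')^{-1}g_i - 1\big)\state = 0$. Setting $\gamma_i := \beta_i/\beta_i' \in \CC$ and $h_i := (g_i')^{-1} g_i \in G$, the set $\dset := \{\gamma_i h_i - 1\}_i$ is of exactly the form required in the definition of a \tgame, and a strategy $(\rep,\state)$ has $\rep(\delt)\state = 0$ for all $\delt \in \dset'$ iff it has $\rep(\delt)\state = 0$ for all $\delt \in \dset$. Combining this with the hypothesis that $\dset'$ determines $\game$ (Definition~\ref{defn:dset}), $\dset$ also determines $\game$, and since $\uGA \cong \C[G]$ by hypothesis, $\game$ is torically determined by $\dset$.

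The main obstacle is the bookkeeping around degenerate clauses — those with $\beta_i = 0$ or $\beta_i' = 0$ — which do not literally fit the pattern $\gamma_i h_i - 1$ after division. The cleanest way to handle this is to note that any clause $\beta_i g_i - \beta_i' g_i'$ with $\beta_i = \beta_i' = 0$ is the zero element and can be deleted from $\dset'$ without changing the determined game; a clause with exactly one of $\beta_i,\beta_i'$ zero, say $\beta_i' = 0 \neq \beta_i$, imposes $\rep(g_i)\state = 0$, which (as above) is equivalent to $\state = 0$ since $\rep(g_i)$ is unitary, so such a clause is equivalent as a determining condition to, say, the clause $2\cdot 1 - 1 = 1$ (which also forces $\state = 0$) — and $\{1\}$ has the form $\{\gamma h - 1\}$ with $\gamma = 2$, $h = 1 \in G$. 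Once these cases are dispatched, the remaining clauses all have $\beta_i' \neq 0$ and the division argument above applies verbatim. I would present the generic case in the body and dispose of the degenerate cases in a short remark or footnote.
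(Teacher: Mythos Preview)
Your proposal is correct and follows essentially the same approach as the paper: factor out the invertible element $\beta_i' g_i'$ to reduce each binomial to the form $\gamma_i h_i - 1$. Your version is in fact more careful than the paper's in two respects: you handle the degenerate cases $\beta_i = 0$ or $\beta_i' = 0$ explicitly (the paper silently assumes $\beta_i' \neq 0$), and you factor on the left, yielding $h_i = (g_i')^{-1} g_i$, which is the order that genuinely makes the directional-zero equivalence $\rep(\beta_i g_i - \beta_i' g_i')\state = 0 \iff \rep(\gamma_i h_i - 1)\state = 0$ hold; the paper writes $g_i(g_i')^{-1}$, which appears to be a typo, though it does not affect the conclusion since either product lies in $G$.
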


\begin{proof}
For all $(\rep,\state)$ we have 
\begin{align}
    \rep(\beta_i g_i - \beta_i' g_i') \state = 0 \qquad  
    \Leftrightarrow  \qquad 
    (\; \rep(\beta_i (\beta_i')^{-1} g_i (g_i')^{-1} - 1) \state &= 0 
\end{align}
so $\game$ is also (torically) determined by the set 
\[
    \dset = \{\beta_i (\beta_i')^{-1} g_i (g_i')^{-1} - 1\}_i.
\qedhere\]
\end{proof}

To provide some further intuition about the definition of \tgames we introduce the concept of a \relaxgp. Given a game $\game$ with game polynomial $\Phi_\game$, we define a \df{\relaxgp} for the game to be any element $\rgp_\game\in\uGA$ with the property that for any representation $\rep$ and state $\state$ such that
\begin{align}
    \rep(\gp_\game) \state = \state
\end{align}
we have
\begin{subequations}
\begin{align}
    \rep(\rgp_\game) \state &= \state, \\
% \end{align}
% and 
% \begin{align}
    \abs{\phi^*\rep(\rgp_\game) \phi } &\leq 1 \qquad \text{for all }\ 
    \| \phi  \| \leq 1 .
\end{align}
\end{subequations}

Informally, scoring a strategy using a  \relaxgp produces the correct score if a strategy is perfect for the associated game, but may give the wrong score otherwise. In particular, a \relaxgp need not be self-adjoint, and the ``score'' coming from the \relaxgp may not even be real when a strategy is not perfect. 

Then we note that a game $\game$ which is \tdby a set of elements $\dset$ with clauses $\clauseset$ has a relaxed game polynomial of the form 
\begin{align}
    \rgp_\game = \frac{1}{\abs{\clauseset}} \sum_{\clause \in \clauseset} \clause.
\end{align}
Similarly, any game $\game$ with a relaxed game polynomial 
\begin{align}
    \rgp_\game = \frac{1}{\abs{\clauseset}} \sum_{\clause \in \clauseset} \clause
\end{align}
where each $\clause \in \clauseset$ is of the form $\beta_i g_i$ with $\beta_i \in \mathbb{C}$, $\abs{\beta_i} = 1$, and $g_i \in G$ for some group $G$ with $\uGA \cong \mathbb{C}[G]$ is \tdby the set of elements $\dset = \clauseset - 1$. 
Since $\rgp_\game$
is an average,
the proof of both these statements follows from an argument very similar to the proof of \Cref{thm:perfect_games_direc_zero}.

\ssec{XOR and Mod \texorpdfstring{$r$}{r} Games}

Now we practice using some of the machinery introduced in the previous section to describe XOR games, and a natural generalization which we call Mod $r$ games. In this section we often describe games in a variety of ways (for example using game polynomials, valid or invalid response sets, and \relaxgps) without proving that these definitions are equivalent. In all cases the proof of equivalence amounts to a routine calculation using the definitions given in \Cref{ssec:Algebraic_Picture}.

\label{ssec:Games_Examples}
\sssec{XOR Games}
\label{ssec:Group_Algebra_Transl_for_XOR_Games}

\noindent 

XOR games are games with $m = 2$ responses which we interpret as a $0$ or a $1$.
The  valid responses $\gr(\vec{i})$ 
to each question vector $\vec{i}$ are all responses which sum to either $0$ or sum to $1$ mod $2$. 
We can think of an XOR game with $T$ possible questions, labeled $\vec{i}_t = (i_t(1), i_t(2), ... , i_t(k))$ for $t \in [T]$ as testing the satisfiability of a system of $T$ equations, where each equation takes the form
\begin{align}
    \sum_{\alpha \in [k]} y^{(\alpha)}_{i_t(\alpha)} = s_t \pmod{2},
\end{align}
the $y^{(\alpha)}_{i_t(\alpha)}$ are free variables taking values in $\{0,1\}$, and each $s_t \in \{0,1\}$ specifies the winning parity associated with each question vector $\vec{i}_t$.

The game polynomial of an XOR game takes the form 
\begin{align}
    \Phi_\game = \frac{1}{2} + \frac{1}{2T} \sum_{t=1}^T  (-1)^{s_t} \prod_{\alpha \in [k]} \xora{i_{t}(\alpha)}{\alpha}
\end{align}
with $T > 0$ some integer, the vector $\vec{i}_t \in [n]^k$ and the integer $s_t \in \{0,1\}$ are arbitrary and the notation $i_{t}(\alpha)$ refers to the $\alpha$-th entry of the vector $\vec{i}_t$. 
% For a proof of this game polynomial decomposition, see \Cref{app:XOR_Observables_game_poly_proof}. 
In anticipation of a connection to \tgames, we refer to each monomial
\begin{align}
(-1)^{s_t} \prod_{\alpha \in [k]} \xora{i_{t}(\alpha)}{\alpha}
\end{align} as a clause, so the game polynomial above corresponds to a $T$-clause XOR game.  

When working with XOR games it is convenient to remove the constant factor in the game polynomial and rescale the remaining terms, producing a \relaxgp 
\begin{align}
    \rgp_\game = \frac{1}{T} \sum_{t=1}^T  (-1)^{s_t} \prod_{\alpha \in [k]} \xora{i_{t}(\alpha)}{\alpha}.
\end{align}
This \relaxgp computes a quantity known as the bias (meaning $\tr[\rep(\rgp_\game)\rho]$ gives the bias achieved by the strategy defined by $\rep, \rho$). For this reason we also call the \relaxgp for XOR games defined in this way the \emph{bias polynomial} of an XOR game. Note that an XOR game can be completely specified by describing its bias polynomial. We also note that the bias polynomial formulation of XOR games makes it immediately clear that XOR games are \tgames, by the discussion in \Cref{sssec:tgames} and the observation that elements $\xora{i_{t}(\alpha)}{\alpha}$ are also cyclic unitary generators as defined in \Cref{ssec:Algebraic_Picture}.

\sssec{Mod \texorpdfstring{$r$}{r} Games} 

\label{sssec:Mod $r$ Games}
Mod $r$ games are a natural generalization of XOR games where the players have $m$ responses and the valid responses to any question vector $\vec{j}_t$ are responses which satisfy some linear equation mod $r$. Formally, we specify a $T$ question Mod $r$ game by a system of equations
\begin{align}
    \left\{\sum_{\alpha \in [k]} d_t(\alpha) y^{(\alpha)}_{j_t(\alpha)} = s_t \pmod{r}\right\}_{t \in [T]}.
\end{align}
The $d_t(\alpha)$ and $s_t$ are integers in $[r]$, while the 
$y^{(\alpha)}_{j_t(\alpha)}$ are free variables. Players' responses $a(1), a(2), \ldots , a(k)$ to question vector $\vec{j}_t$ are winning if setting $y^{(\alpha)}_{j_t(\alpha)} = a(\alpha)$ satisfies the $t$-th equation in the system of equations. We show in \Cref{thm:Mod(r)_games} that Mod $r$ games are \tdby a set of elements of the form 
\begin{align}
    \dset = \left\{(\exp(- 2\pi i /r))^{s_t} \prod_{\alpha \in [k]} \left(\cycla{j_{t}(\alpha)}{\alpha}(\alpha)\right)^{d_t(\alpha)} - 1\right\}_{t \in [T]}
\end{align}
and hence admit a \relaxgp of the form 
\begin{align}
    \rgp_\game = \frac{1}{T} \sum_{t=1}^T  (\exp(- 2\pi i /r))^{s_t} \prod_{\alpha \in [k]} \left(\cycla{j_{t}(\alpha)}{\alpha}(\alpha)\right)^{d_t(\alpha)},
\end{align}
with the $\cycla{j_{t}(\alpha)}{\alpha}$ cyclic unitaries of order $m$, as described in \Cref{sssec:uGA}. We also note that a very similar result (\Cref{thm:LSGames_Toric}) is proven for linear systems in \cref{sec:LinearSysGames}.

\begin{thm}\label{thm:Mod(r)_games} A Mod $r$ game specified by a system of equations 
\begin{align}
    \left\{\sum_{\alpha \in [k]} d_t(\alpha) y^{(\alpha)}_{j_t(\alpha)} = s_t \pmod{r}\right\}_{t \in [T]}
\end{align}
is torically determined by a set of elements of the form 
\begin{align}
    \dset = \left\{(\exp(-2\pi i /r))^{s_t} \prod_{\alpha \in [k]} \left(\cycla{j_{t}(\alpha)}{\alpha}(\alpha)\right)^{d_t(\alpha)} - 1\right\}_{t \in [T]}
\end{align}
\end{thm}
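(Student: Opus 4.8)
The plan is to reduce everything to \Cref{thm:perfect_games_direc_zero}, which already tells us that the Mod $r$ game is \dby its invalid-response set. Writing $P_{\vec j_t,\vec a}:=\prod_{\alpha\in[k]}\proja{j_t(\alpha)}{a(\alpha)}{\alpha}$ and letting $\br(\vec j_t)$ be the set of response vectors $\vec a$ with $\sum_\alpha d_t(\alpha)a(\alpha)\not\equiv s_t\pmod{r}$, that invalid set is $\invalidpoly=\{P_{\vec j_t,\vec a}\}_{t\in[T],\,\vec a\in\br(\vec j_t)}$. So it suffices to prove that for each fixed $t$ and each $(\rep,\state)$ one has $\rep(c_t-1)\state=0$ — where $c_t:=(\exp(-2\pi i/r))^{s_t}\prod_{\alpha}(\cycla{j_t(\alpha)}{\alpha})^{d_t(\alpha)}$ is the $t$-th clause — if and only if $\rep(P_{\vec j_t,\vec a})\state=0$ for every $\vec a\in\br(\vec j_t)$. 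Conjoining over $t\in[T]$, this shows $\game$ is \dby $\dset=\{c_t-1\}$; and since each $c_t$ is a unit scalar in $\CC$ times an element of the group $\gagp$, the set $\dset$ has the required binomial form, so $\game$ is \tdby $\dset$.

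The computational core is to expand each clause in the projector basis. For a fixed question vector $\vec j_t$ the elements $P_{\vec j_t,\vec a}$, with $\vec a$ ranging over $[m]^k$, form a projection-valued measure in $\uGA$: each is a product of commuting projections, any two distinct ones are orthogonal because they disagree in some coordinate and $\proja{j}{a}{\alpha}\proja{j}{b}{\alpha}=0$ for $a\neq b$, and $\sum_{\vec a}P_{\vec j_t,\vec a}=1$ by the defining relations of $\uGA$. Combining this with $(\cycla{j}{\alpha})^{d}=\sum_a\zeta^{ad}\,\proja{j}{a}{\alpha}$ (immediate from the definition of the cyclic unitaries and orthogonality of the $\proja{j}{a}{\alpha}$, with $\zeta$ a suitable root of unity) and multiplying out gives $c_t=\sum_{\vec a}\zeta^{\sum_\alpha d_t(\alpha)a(\alpha)-s_t}\,P_{\vec j_t,\vec a}$; since the exponent vanishes exactly on the valid responses,
\begin{align*}
    c_t-1=\sum_{\vec a\in\br(\vec j_t)}\bigl(\zeta^{\sum_\alpha d_t(\alpha)a(\alpha)-s_t}-1\bigr)\,P_{\vec j_t,\vec a},
\end{align*}
with every coefficient nonzero. (Some routine bookkeeping reconciles the order of the cyclic generators with the modulus $r$; this is where the intended relationship between $r$ and the response count $m$ gets pinned down.)

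Both implications are then short. If $\rep(P_{\vec j_t,\vec a})\state=0$ for all $\vec a\in\br(\vec j_t)$, the displayed identity gives $\rep(c_t-1)\state=0$. Conversely, applying $\rep$ to that identity and evaluating at $\state$ yields $\sum_{\vec a\in\br(\vec j_t)}\lambda_{\vec a}\,\rep(P_{\vec j_t,\vec a})\state=0$ with all $\lambda_{\vec a}\neq0$; since $\rep$ is a $*$-representation, the $\rep(P_{\vec j_t,\vec a})$ are pairwise orthogonal projections, so the vectors $\rep(P_{\vec j_t,\vec a})\state$ are pairwise orthogonal, and pairing with $\rep(P_{\vec j_t,\vec a})\state$ forces each of them to vanish. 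This orthogonality step — essentially the one that closes the proof of \Cref{thm:perfect_games_direc_zero} — is the only real content; I expect the sole genuine obstacle to be notational (fixing the $m$-versus-$r$ convention, and dismissing the harmless degenerate case in which two of the equations share a question vector). Alternatively, one can run the argument through the relaxed-game-polynomial formalism of \Cref{sssec:tgames}: the same expansion shows that $\rgp_\game=\frac{1}{T}\sum_t c_t$ satisfies $\rep(\rgp_\game)\state=\state$ on perfect strategies and, being an average of unitaries, has norm at most $1$, hence is a relaxed game polynomial with clauses $\{c_t\}$, and the discussion there yields that $\game$ is \tdby $\{c_t-1\}=\dset$.
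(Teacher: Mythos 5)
Your proposal is correct and follows essentially the same route as the paper's proof: expand each clause $c_t$ in the orthogonal projector basis $\{P_{\vec j_t,\vec a}\}$, observe the coefficient is $1$ exactly on the winning responses, and deduce that $\rep(c_t-1)\state=0$ iff the losing projectors annihilate $\state$. You are more explicit than the paper about the closing orthogonality step (pairing $\sum_{\vec a\in\br}\lambda_{\vec a}\rep(P_{\vec a})\state=0$ with each $\rep(P_{\vec a})\state$), which the paper leaves implicit in "the result follows"; that extra care is a genuine improvement since the paper's two displayed "equivalent" conditions appear to have the scalar $\exp(2\pi i s_t/r)$ on the wrong side, whereas your version of the identity is stated cleanly.
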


\begin{proof}
First we write cyclic unitary generators in terms of projectors $\proja{i}{a}{\alpha}$ and expand out the resulting product to note 
\begin{align}
   \prod_{\alpha \in [k]} \left(\cycla{j_{t}(\alpha)}{\alpha}(\alpha)\right)^{d_t(\alpha)} = \sum_{\vec{a} \in [r]^k} \exp( \frac{2 \pi i}{r} \sum_{\alpha \in [k]} d_t(\alpha) a(\alpha) ) \prod_{\alpha} \proja{j_t(\alpha)}{a(\alpha)}{\alpha}. \label{eq:expanding_mod(r)_cylca_product}
\end{align}
Then we define 
\begin{align}
    A(t) = \left\{ \vec{a} : \sum_{\alpha \in [k]} a(\alpha)d_t(\alpha) = s_t \right\}
\end{align}
to be the collection of response vectors which win on
the question corresponding to clause $t \in [T]$ of the mod $r$ game. Then, \Cref{eq:expanding_mod(r)_cylca_product} gives that, for any commuting operator strategy $(\rep,\state)$ the condition 
\begin{align}
    \rep\left(\prod_{\alpha \in [k]} \left(\cycla{j_{t}(\alpha)}{\alpha}(\alpha)\right)^{d_t(\alpha)}\right)\state = \state
\end{align}
is equivalent to the condition 
\begin{align}
    \rep\left(\sum_{\vec{a} \in  A(t)} \prod_{\alpha} \proja{j_t(\alpha)}{a(\alpha)}{\alpha} \right) \state = \exp(\frac{2 \pi i s_t}{r})\state
\end{align}
and thus the condition 
\begin{align}
    \rep\left(\exp(-2\pi i /r))^{s_t} \prod_{\alpha \in [k]} \left(\cycla{j_{t}(\alpha)}{\alpha}(\alpha)\right)^{d_t(\alpha)}\right) \state = \state
\end{align}
ensures the players' response is always winning for question $t \in [T]$. The result follows. 
\end{proof}
%%%%%
\sssec{Example}
\label{sssec:Perfect_criterion_example_GHZ}

To provide a concrete example of the various ways of characterizing perfect games we discuss the GHZ game. This is a 3 player XOR game with bias polynomial
\begin{align}
    \Phi_{GHZ} = \frac{1}{4}\left(\xora{0}{1}\xora{0}{2}\xora{0}{3} - \xora{1}{1}\xora{1}{2}\xora{0}{3} - \xora{1}{1}\xora{0}{2}\xora{1}{3} - \xora{0}{1}\xora{1}{2}\xora{1}{3}\right)
\end{align}
Then, equivalently, the GHZ game is determined by the set of elements 
\begin{align}
    \{\xora{0}{1}\xora{0}{2}\xora{0}{3} - 1, -\xora{1}{1}\xora{1}{2}\xora{0}{3} - 1, -\xora{1}{1}\xora{0}{2}\xora{1}{3} - 1, -\xora{0}{1}\xora{1}{2}\xora{1}{3} - 1 \}
\end{align}and has a perfect commuting operator strategy iff there exists a Hilbert space $\cH$, state $\state \in \cH$, and a $*$-representation $\rep: \uGA \rightarrow \cB(\cH)$ satisfying 
\begin{subequations}
\begin{align}
    \rep\left(\xora{0}{1}\xora{0}{2}\xora{0}{3}\right)\state &=  \state, % \quad\text{ and } 
    \\
    \rep\left(\xora{0}{1}\xora{1}{2}\xora{1}{3}\right)\state = \rep\left(\xora{1}{1}\xora{0}{2}\xora{1}{3}\right)\state = \rep\left(\xora{1}{1}\xora{1}{2}\xora{0}{3}\right)\state &= -\state.
\end{align}
\end{subequations}
Such a representation can be found by taking $\cH$ to be an eight dimensional Hilbert space, $\psi$ a 3 qubit GHZ state, and letting $\pi$ map elements $\xora{i}{\alpha}$ to the standard measurement operators for the GHZ game or, explicitly
\begin{subequations}
\begin{align}
    \pi\left(\xora{0}{1}\right) &= \sigma_X \otimes I \otimes I &
    \pi\left(\xora{1}{1}\right) &= \sigma_Y \otimes I \otimes I \\
    \pi\left(\xora{0}{2}\right) &=  I \otimes \sigma_X \otimes I &
    \pi\left(\xora{1}{2}\right) &= I \otimes \sigma_Y \otimes I \\
    \pi\left(\xora{0}{3}\right) &= I \otimes I \otimes \sigma_X  &
    \pi\left(\xora{1}{3}\right) &= I \otimes I \otimes \sigma_Y  
\end{align}
\end{subequations}
where $\sigma_X$ and $\sigma_Y$ are the Pauli $X$ and $Y$ operators, respectively, and $I$ denotes a dimension 2 identity matrix. 

While it is certainly easiest to describe the GHZ game using the game polynomial formulation, we can also describe perfect strategies for the game using the language of valid and invalid response sets. Using this language the question set of the  GHZ game is given by $\questions_{GHZ} = \{(0,0,0),(0,1,1),(1,0,1),(1,1,0)\}$ with valid response sets 
\begin{subequations}
\begin{align}
    \gr_{GHZ}(0,0,0) &= \{\textit{EVEN}\}, % \text{ and }
    \\
    \gr_{GHZ}(0,1,1) = \gr_{GHZ}(1,0,1) = \gr_{GHZ}(1,1,0) &=  \{\textit{ODD}\} ,
\end{align}
\end{subequations}
where $\{\textit{EVEN}\}$ and $\{\textit{ODD}\}$ denote the set of all 3 bit response strings containing an even and odd number of ones, respectively.
The invalid response sets are described similarly, but with $\{\textit{EVEN}\}$ and $\{\textit{ODD}\}$ swapped.

We can obtain projectors corresponding to a perfect strategy for the GHZ game using the same representation as above, with
\begin{subequations}
\begin{align}
    \projr{0}{0}{1} &= \pi\left(\proja{0}{0}{1}\right) = \pi\left(\frac{1}{2}\left(1 + \xora{0}{1} \right)\right), \\ 
    \projr{0}{1}{1} &= \pi\left(\proja{0}{1}{1}\right) = \pi\left(\frac{1}{2}\left(1 - \xora{0}{1} \right)\right),
\end{align}
\end{subequations}
and so on. We leave it as an exercise for the reader to check that the projectors defined in this way satisfy the valid and invalid response perfect game characterization conditions laid out in \Cref{thm:perfect_games_direc_zero}.

\sec{\NullSSs for Perfect Nonlocal Games}
\label{sec:PerfectGamesNullSS}

\NullSSs are
algebraic certificates for
polynomial equations to be solvable
or, even stronger, for one set of polynomial equations to have solutions contained in 
the set of solutions to another.
Solutions to a polynomial equation such as $X^2 =I$ are often called zeroes of the polynomial 
$p(x):= x^2 -1$.

   \ssec{Three types of zeroes}
   
   \def\zd{{\mathcal Z}_{\textrm{dir}} }
   \def\zh{{\mathcal Z}_{\textrm{hard}} }
   
   For nc polynomials there are three natural types of zero: hard zeroes, directional zeroes and determinantal zeros. 
   These have been studied in 
   the mathematics community for several decades.

   Let us illustrate 
   directional zeroes since that is what we work with most in this paper. Given $f \in \fa$ 
   define the \df{directional zero set}
    \begin{align}
    \zd (f):= &\{(X, \psi)\mid X_j \in \cB(\cH), \ 0\neq \psi \in  \cH  , 
    \ f(X)\psi=0    \text{ over all }  \cH \}. 
    \end{align}
    In the language of representations we have
     \begin{equation}
    \zd (f):=
    \{ (\pi, \psi) \mid
    \pi(f)\psi=0, \ \pi:\fa\to\cB(\cH) \text{ representation}, \ 
     0\neq\psi \in \cH \}.
    \end{equation}
    Quantum strategies need refinements of this set up; for one,
    we have not captured
    crucial algebraic relationships such as
    our whole world lives inside an algebra.
    Thus for an arbitrary algebra $\cA$ and $f\in\cA$ we define
    \beq 
    \zd^{\cA} (f)=
    \{(\pi, \psi)
\mid
     \pi(f)\psi=0, \ \pi:\cA\to\cB(\cH) \text{ representation}, \
    0\neq \psi \in \cH \}.
    \eeq 
        Essential to strategies is the intersection of this with $*$-representations in the context of $*$-algebras $\cA$:
    \begin{align}
    \zd^{\mathrm{re}, \cA} (f) =   \zd^{ \cA} (f) \cap 
    \{(\pi,\psi)\mid \pi \text{ $*$-representation}\}.
    \end{align}
   
    So far we have motivated and given the
    definition of the set
    of directional zeroes.
    With motivation in the same spirit, we define hard and determinantal real zeros by
     \[
     \begin{split}
    \cZ_{\textrm{hard}}^{ \cA} (f)&=
   \{\pi
\mid
     \pi(f)=0, \ \pi:\cA\to\cB(\cH) \text{ representation} \} \\
    \cZ_{\det}^{\cA} (f) & =
    \{\pi \mid \det[\pi(f)]=0, \
    \pi:\cA\to\cB(\cH) \text{ representation with } \dim\cH<\infty \}.
    \end{split}
    \]
    Directional and hard zeroes (cf.~\Cref{sec:Synchronous}) are central to quantum games.
    The definition of determinantal zeros is included  here for perspective only. 
    
    Of course 
    one also has the $*$-representation version
    of these:
    $$
     \cZ_{\textrm{hard}}^{\textrm{re}, \cA} \qquad \text{and} \qquad \cZ_{\det}^{\textrm{re}, \cA}
     $$
    Also for a subset $F \subseteq \fa$  we define 
    $$\cZ(F)=  \ \bigcap_{f \in F} \cZ(f) .$$

\sssec{Example}

In \Cref{ssec:Characterizing_perfect_games} we described perfect commuting operator strategies for nonlocal games as $*$-representations of an algebra $\uGA$ which satisfied certain desiderata 
which we now re-describe
using the language of nc polynomial zeros introduced in the section above. For concreteness, we consider the GHZ game introduced in \Cref{sssec:Perfect_criterion_example_GHZ}. 

As discussed in \Cref{sssec:Perfect_criterion_example_GHZ}, the GHZ game is determined by a set of polynomials
\begin{align}
    \dset = \{\xora{0}{1}\xora{0}{2}\xora{0}{3} - 1,\  \xora{0}{1}\xora{1}{2}\xora{1}{3} + 1 ,\  \xora{1}{1}\xora{0}{2}\xora{1}{3} +1,\  \xora{1}{1}\xora{1}{2}\xora{0}{3} + 1 \}.
\end{align}
Equivalently, in the language of directional zeros we see that the GHZ game has a perfect commuting operator strategy iff  the directional zero set $\zd^{\mathrm{re}, \uGA} (\dset)$ is nonempty. Furthermore, entries in this set correspond to representations and states $(\rep,\state)$ defining perfect commuting operator strategies. 

For an example  using hard zeros see  \Cref{sec:GBExamples1}.

 \ssec{A general noncommutative \NullSS}
 \label{ssec:QuantumGamesNullSS}

     \def\cH{\mathcal H}
     \def\hL{{\widehat L}}

Let $\cA$ be a $*$-algebra.
We equip $\cA$ with a certain topology called the finest locally convex topology\footnote{This is  
the finest vector space topology whose every neighborhood of zero contains a convex balanced absorbing set. Equivalently, it is the coarsest
topology for which every seminorm on $\cA$ is continuous. In this case,
every linear functional $f$ on $\cA$ is continuous since $|f |$ is a seminorm.
Hence
for a convex subset $C\subseteq\cA$ we have
\[
\clos(C)=\{
    c\in\cA \mid \forall \text{ linear } \ell:\cA\to\CC \text{ with } \ell(C)\subseteq\R_{\geq0} \text{ we have } \ell(c)\geq0
    \}
\]
by the Hahn-Banach separation theorem.}.
For our purposes one does not need to worry about it since we will soon move to a much less general ``weak'' \NullSS. For a subset
$C \subseteq \cA$ let $\clos(C)$ denote the 
closure of $C$ in the finest locally convex 
topology. We use $\SOS_C$ to denote
all sums of $u^* u$ with $u \in C$.

     Results of Section 5 of
     \cite{CHMN13}, see also
     \cite{CHKMN14},
     adapted to our use case are
          as follows. 
     
     \begin{thm} 
     \label{thm:rnssCHMN}
     Suppose that 
     $\cLI $ is the left ideal of $\cA$ generated by $F\subseteq\cA$.
Then for $a\in\cA$ the following are equivalent:
\ben[\rm(i)]
\item\label{eq:repCond}  $\pi(a)\psi=0$
     for every  $*$-representation 
      $\pi : \cA \rightarrow \cB(\cH)$ for some (possibly infinite dimensional) Hilbert space $\cH$ and vector $\psi\in\cH$ 
      such that $\pi(\pol)\psi=0$
     for all $\pol \in F$; %then
     %\ben[\rm(1)]
     \item[\rm(i)'] $\zd^{\mathrm{re}, \cA}(F)\subseteq \zd^{\mathrm{re}, \cA}(a)$;
     \item 
     \label{eq:genNullss}
      $ 
     -a^* a \in \clos[\SOS_\cA - \SOS_{\cLI}];
      $ 
           \item 
     $
     -a^* a \in \clos(\SOS_\cA - \cone(S) ),
     $
     where $S= \   
     \{f^*f\mid f\in F \}$;
     
     \item 
      $
     -a^* a \in \clos[
     \SOS_\cA + \cLI + \cLI^*].
     $     
\een 

     When 
     $\cA=\fa$ is the free algebra and 
      $\cLI$ is finitely generated,  $*$-representations $\pi$ into finite dimensional Hilbert spaces
       suffice in \Cref{eq:repCond}.
     \end{thm}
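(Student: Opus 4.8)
The plan is to establish $(i)\Leftrightarrow(i)'$ together with the cycle $(i)\Rightarrow(iii)\Rightarrow(ii)\Rightarrow(iv)\Rightarrow(i)$, and then obtain the finite-dimensional refinement as a degree-truncated version of the same argument. The equivalence $(i)\Leftrightarrow(i)'$ is merely the definition of the directional zero set unwound: $\zd^{\mathrm{re},\cA}(F)\subseteq\zd^{\mathrm{re},\cA}(a)$ asserts precisely that every $*$-representation $\pi$ and nonzero vector $\psi$ with $\pi(f)\psi=0$ for all $f\in F$ also satisfies $\pi(a)\psi=0$.

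Two of the remaining implications are purely algebraic inclusions between convex cones, preserved under closure. For $(iii)\Rightarrow(ii)$: since $F\subseteq\cLI$ we have $S=\{f^*f\mid f\in F\}\subseteq\SOS_{\cLI}$, hence $\cone(S)\subseteq\SOS_{\cLI}$ and $\SOS_\cA-\cone(S)\subseteq\SOS_\cA-\SOS_{\cLI}$. For $(ii)\Rightarrow(iv)$: because $\cLI$ is a left ideal, $\SOS_{\cLI}\subseteq\cLI^*\cLI\subseteq\cA\,\cLI\subseteq\cLI$, so $-\SOS_{\cLI}\subseteq\cLI\subseteq\cLI+\cLI^*$ and $\SOS_\cA-\SOS_{\cLI}\subseteq\SOS_\cA+\cLI+\cLI^*$. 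The implication $(iv)\Rightarrow(i)$ is the easy half of the Nullstellensatz: given a $*$-representation $\pi$ and $\psi$ with $\pi(f)\psi=0$ for every $f\in F$, the linear functional $\ell\colon w\mapsto\psi^*\pi(w)\psi$ is automatically continuous for the finest locally convex topology, is nonnegative on $\SOS_\cA$, and annihilates $\cLI$ and $\cLI^*$ against $\psi$ (if $u=\sum_j b_j f_j\in\cLI$ then $\pi(u)\psi=\sum_j\pi(b_j)\pi(f_j)\psi=0$, and $\psi^*\pi(v^*)\psi=\overline{\psi^*\pi(v)\psi}=0$ for $v\in\cLI$). Hence $\ell\ge0$ on $\SOS_\cA+\cLI+\cLI^*$ and, by continuity, on its closure; so $\ell(-a^*a)\ge0$, i.e.\ $\|\pi(a)\psi\|^2\le0$, forcing $\pi(a)\psi=0$.

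The substantive step is $(i)\Rightarrow(iii)$, which I would prove by contraposition using Hahn--Banach separation followed by a GNS construction. Assume $-a^*a\notin\clos(\SOS_\cA-\cone(S))$. Working in the real vector space of self-adjoint elements, $\clos(\SOS_\cA-\cone(S))$ is a closed convex cone not containing $-a^*a$, so Hahn--Banach yields a linear functional $\ell$ nonnegative on it with $\ell(-a^*a)<0$; extending $\ell$ to be $\CC$-linear, it is a positive functional on $\cA$. For each $f\in F$ we get both $\ell(f^*f)\ge0$ (since $f^*f\in\SOS_\cA$) and $\ell(f^*f)\le0$ (since $-f^*f\in-\cone(S)$), so $\ell(f^*f)=0$; also $\ell(a^*a)>0$, which forces $\ell(1)>0$, and after rescaling we may take $\ell(1)=1$. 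The GNS construction applied to $\ell$ produces a $*$-representation $\pi\colon\cA\to\cB(\cH)$ and a unit cyclic vector $\psi$ with $\psi^*\pi(w)\psi=\ell(w)$; since the GNS null space $\{w:\ell(w^*w)=0\}$ is a left ideal containing $F$, it contains $\cLI$, whence $\pi(f)\psi=0$ for all $f\in F$, while $\|\pi(a)\psi\|^2=\ell(a^*a)>0$, contradicting $(i)$. The one point requiring care in full generality is that the GNS operators may be unbounded; for the $*$-algebras actually used in this paper (group algebras, whose generators are unitary) one gets $\pi$ valued in $\cB(\cH)$, which is the regime treated in \cite{CHMN13,CHKMN14}.

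For the last assertion --- that finite-dimensional $*$-representations suffice when $\cA=\fa$ and $\cLI$ is finitely generated --- I would rerun the contrapositive of $(i)\Rightarrow(iii)$ with degree-truncated cones. Let $\cLI$ be generated by $f_1,\dots,f_k$ and fix $N$ exceeding $\deg a$ and all the $\deg f_i$. Since $\fa$ is graded, $\SOS_\fa-\cone(S)$ is exhausted by its truncations to polynomials of degree $\le 2N$, and a convex cone in the countably-dimensional space $\fa$ is closed in the finest locally convex topology iff it meets every finite-dimensional subspace in a closed set; hence a failure of $(iii)$ already yields, on a finite-dimensional truncated space, a positive truncated functional vanishing on the $f_i^*f_i$ with $\ell(a^*a)>0$. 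Truncated GNS then produces a \emph{finite-dimensional} Hilbert space and operators $\pi(x_j)$ --- well-defined on the relevant degrees once $N$ is large enough --- with $\pi(f_i)\psi=0$ and $\pi(a)\psi\neq0$, contradicting the finite-dimensional form of $(i)$. I expect the bookkeeping that upgrades the truncated GNS data to an honest finite-dimensional representation of $\fa$ to be the main obstacle, and I would follow the corresponding argument of \cite{CHMN13}.
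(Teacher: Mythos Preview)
Your proposal is correct and follows essentially the same route as the paper. The paper's formal proof of this theorem is in fact just a sequence of citations to \cite{CHMN13} (Theorem 5.1, Corollary 5.3, Proposition 5.2, and Proposition 6.8/Theorem 2.1 for the finite-dimensional refinement); the Hahn--Banach separation followed by GNS that you carry out is precisely the argument the paper sketches in its ``intuition'' subsection via the archimedean specialization (\Cref{thm:AB}). Your cone inclusions for $(iii)\Rightarrow(ii)\Rightarrow(iv)$ and the easy direction $(iv)\Rightarrow(i)$ are correct, and you rightly isolate the two genuine technical obstacles---boundedness of the GNS representation in the non-archimedean setting, and the truncated-GNS bookkeeping for the finite-dimensional claim---as exactly the content that is deferred to \cite{CHMN13}.
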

     
     \begin{cor}
     \label{cor:2sidedI}
     Suppose $\ide\subseteq\cA$ is a $*$-closed two sided ideal.
     Then the following are equivalent for $a\in\cA$:
     \ben[\rm(i)]
     \item
     $\pi(a)=0$
     for every $*$-representation 
     $\pi$ such that $\pi(f)=0$
     for all $f \in \ide$;
     \item[\rm(i)']
    $\zh^{\mathrm{re}, \cA}(\ide)\subseteq\zh^{\mathrm{re}, \cA}(a)$;
\item\label{it:aaa}
     $
     -a^* a \in \clos[ \SOS_\cA + \ide].$
     \een
 \end{cor}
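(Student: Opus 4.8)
The plan is to deduce the corollary directly from \Cref{thm:rnssCHMN} applied with $F=\ide$. Two remarks turn almost everything into bookkeeping. First, since $\ide$ is a two-sided ideal, the left ideal it generates is $\cLI=\ide$ itself, and since $\ide$ is $*$-closed we also have $\cLI^*=\ide$; hence the sum-of-squares condition $-a^*a\in\clos[\SOS_\cA+\cLI+\cLI^*]$ of \Cref{thm:rnssCHMN} is verbatim the condition $-a^*a\in\clos[\SOS_\cA+\ide]$ appearing in item~\ref{it:aaa}. Second, items (i) and (i)$'$ of the corollary are literally the same assertion, obtained by unwinding the definition of the hard zero set $\zh^{\mathrm{re},\cA}$. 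So the one point that actually requires an argument is that item (i) of the corollary (``$\pi(a)=0$ for every $*$-representation $\pi$ with $\pi(f)=0$ for all $f\in\ide$'') is equivalent to condition~\ref{eq:repCond} of \Cref{thm:rnssCHMN} for $F=\ide$ (``$\pi(a)\psi=0$ for every $*$-representation $\pi$ and vector $\psi$ with $\pi(f)\psi=0$ for all $f\in\ide$'').

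The implication from the directional condition~\ref{eq:repCond} to item (i) is immediate: if $\pi(f)=0$ for all $f\in\ide$, then $\pi(f)\psi=0$ for every vector $\psi$, so \ref{eq:repCond} forces $\pi(a)\psi=0$ for all $\psi$, i.e.\ $\pi(a)=0$. For the converse I would cut down to a subrepresentation. Given a $*$-representation $\pi\colon\cA\to\cB(\cH)$ and a vector $\psi$ with $\pi(f)\psi=0$ for all $f\in\ide$, set $\cH_0:=\bigcap_{f\in\ide}\ker\pi(f)$, a closed subspace containing $\psi$. Because $\ide$ is a right ideal, $\cH_0$ is invariant under $\pi(g)$ for every $g\in\cA$: if $v\in\cH_0$ and $f\in\ide$ then $\pi(f)\pi(g)v=\pi(fg)v=0$ since $fg\in\ide$. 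Since $\cA$ is a $*$-algebra and $\pi$ a $*$-representation, $\cH_0$ is in fact reducing, so $\pi_0:=\pi|_{\cH_0}$ is again a $*$-representation of $\cA$ on $\cH_0$, and it annihilates $\ide$ by construction; item (i) then gives $\pi_0(a)=0$, and since $\psi\in\cH_0$ we conclude $\pi(a)\psi=\pi_0(a)\psi=0$.

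I do not expect any real obstacle here: the only step carrying mathematical content is the reducing-subspace construction above, and it is routine. The remainder is simply aligning notation: recognizing ``$\pi$ annihilates $\ide$'' as membership in $\zh^{\mathrm{re},\cA}(\ide)$, identifying $\cLI$ with $\ide$, and absorbing $\cLI+\cLI^*$ into $\ide$; after that, (i) $\Leftrightarrow$ (i)$'$ and (i) $\Leftrightarrow$ \ref{it:aaa} are precisely the equivalences supplied by \Cref{thm:rnssCHMN}.
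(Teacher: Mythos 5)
Your proof is correct and takes essentially the same route as the paper: both pass from the directional zero condition in \Cref{thm:rnssCHMN} to the hard-zero condition of the corollary by cutting to a subrepresentation that annihilates $\ide$, using that $\ide$ is a two-sided (in particular right) ideal. The only difference is your choice of the joint kernel $\bigcap_{f\in\ide}\ker\pi(f)$ as the reducing subspace, where the paper instead uses the cyclic subspace $[\pi(\cA)\psi]$; both contain $\psi$, annihilate $\ide$, and yield the same equivalence.
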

 
 \begin{proof}    
 Items (i) and (i)' are equivalent by the definition on $\zh^{\mathrm{re}, \cA}$.

In the context of  \Cref{thm:rnssCHMN},
$\cLI:=\ide=\cLI^*$ is a left (and right) ideal,
so the algebraic certificate in \Cref{it:aaa} is
\beq 
     -a^* a \in \clos [ \SOS_\cA + \ide ]
     = \clos [ \SOS_\cA + \cLI +\cLI^*].
\eeq 
Thus if \Cref{it:aaa} fails to hold, then
by \Cref{thm:rnssCHMN} there is a 
$*$-representation 
$\pi:\cA\to\cB(\cH)$ and  $0\neq \psi\in\cH$ with
with $\pi(\ide)\psi=\{0\}$,
and $\pi(a)\psi\neq0$.

Now consider the Hilbert space $\check\cH:=[\pi(\cA)\psi]\subseteq\cH$.
By definition, $\pi(\cA)\check\cH\subseteq\check\cH$, so $\pi$ induces
a $*$-representation $\check\pi:\cA\to\cB(\check\cH)$.
By construction, $\check\pi(\ide)=\{0\}$ (this uses that $\ide$ is a right ideal, too), but
$\check\pi(a)\psi\neq0$, so $\check\pi(a)\neq0$.
\end{proof}

\begin{proof}[Proof of \Cref{thm:rnssCHMN}]
Here we give a proof which ties the parts of the theorem to corresponding theorems in \cite{CHMN13}. This is unintuitive so in 
 \Cref{sec:pfideaNulSS}
we sketch the idea of the proof.

\Cref{eq:repCond} and Item (i)' are equivalent by the definition on $\zd^{\mathrm{re}, \cA}$.
The equivalence between \Cref{eq:repCond} and \Cref{eq:genNullss}
is  \cite[Theorem 5.1]{CHMN13}. 
The equivalence \ref{eq:repCond} $\Leftrightarrow$ (iv) is
 \cite[Corollary 5.3]{CHMN13}. 
Finally, (iii) $\Leftrightarrow$ \ref{eq:genNullss} now follows from
\cite[Proposition 5.2]{CHMN13}.

The finite dimensional assertion is proved constructively and this is the major part of \cite{CHMN13}; see \cite[Proposition 6.8 and Theorem 2.1]{CHMN13}.
\end{proof}

\begin{rmk}
In general closures in \Cref{thm:rnssCHMN} and \Cref{cor:2sidedI}
are needed as can be shown by employing the Weyl algebra.
\end{rmk}

\sssec{Intuition behind the proof of \texorpdfstring{\Cref{thm:rnssCHMN}}{Theorem 2.2}}
\label{sec:pfideaNulSS}
% !! This number should be updated %when document is finalized (only %matters for links in pdf documents)
     
     Here is a special case of
     \Cref{thm:rnssCHMN},
     included here since a sketch of its proof  supplies the readers intuition.
     Also this lesser level of generality is all that is needed here for quantum games, 
     so this theorem is what gets referenced later.

%this is from GBergman website %https://math.berkeley.edu/~gbergman/misc/hacks/langl_rangl.html
     \newcommand{\langl}{\begin{picture}(4.5,7)
\put(1.1,2.5){\rotatebox{60}{\line(1,0){5.5}}}
\put(1.1,2.5){\rotatebox{300}{\line(1,0){5.5}}}
\end{picture}}
\newcommand{\rangl}{\begin{picture}(4.5,7)
\put(.9,2.5){\rotatebox{120}{\line(1,0){5.5}}}
\put(.9,2.5){\rotatebox{240}{\line(1,0){5.5}}}
\end{picture}}

     \begin{thm}
     \label{thm:AB}
     Suppose 
     \ben [\rm(1)]
     \item
     $\cA$ is a $*$-algebra, where $\SOS_{\cA}$ is \df{archimedean} in the sense that for every $a\in\cA$ there is $\eta\in\mathbb N$ with $\eta-a^*a\in\SOS_{\cA}$;\footnote{This notion of archimedean should not be confused with the algebra $\cA$ being \df{archimedean closed}, meaning that for any $a \in \cA$ with $a + \epsilon \in \SOS_{\cA}$ for each $\epsilon \in \mathbb{R}_{>0}$ we also have $a \in \SOS_{\cA}$.}
     \item $\cLI\subseteq\cA$ is a left ideal.
     \een 
      Then 
the following are equivalent:
\ben[\rm(i)]
\item
      there exist a 
      $*$-representation $\pi:\cA\to\cB(\cH)$
      and  $0\neq\psi \in \cH$ satisfying
      \beq 
      \label{eq:fX0}
      \pi(f) \psi = 0  
      \eeq 
      for all $f \in \cLI$;
\item[\rm(i)']
$\zd^{\mathrm{re}, \cA}(\cLI)\neq\varnothing$;
      \item $-1 \not \in \SOS_{\cA} + \cLI + \cLI^*.$
      \een
     \end{thm}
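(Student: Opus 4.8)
The plan is to prove the two substantive implications (i) $\Rightarrow$ (iii) and (iii) $\Rightarrow$ (i) directly; the equivalence of (i) and (i)$'$ is immediate from the definition of $\zd^{\mathrm{re},\cA}$. The implication (i) $\Rightarrow$ (iii) is the easy direction and I would argue its contrapositive. If $-1\in\SOS_\cA+\cLI+\cLI^*$, write $-1=\sum_j u_j^*u_j+\ell+\ell^*$ with $u_j\in\cA$ and $\ell\in\cLI$. For any $*$-representation $\pi:\cA\to\cB(\cH)$ and any $\psi\in\cH$ with $\pi(f)\psi=0$ for all $f\in\cLI$, applying the positive functional $a\mapsto\langle\pi(a)\psi,\psi\rangle$ yields $-\|\psi\|^2=\sum_j\|\pi(u_j)\psi\|^2+\langle\pi(\ell)\psi,\psi\rangle+\langle\pi(\ell^*)\psi,\psi\rangle$; the last two terms vanish since $\pi(\ell)\psi=0$ and $\langle\pi(\ell^*)\psi,\psi\rangle=\langle\psi,\pi(\ell)\psi\rangle=0$, forcing $\|\psi\|^2\le 0$, i.e. $\psi=0$. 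So no witness for (i) can exist.

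The substantive direction (iii) $\Rightarrow$ (i) is a GNS argument of the classical archimedean Positivstellensatz type. Put $C:=\SOS_\cA+\cLI+\cLI^*$ and assume $-1\notin C$; work in the real vector space $\cA_{\mathrm{sa}}$ of self-adjoint elements and set $C_{\mathrm{sa}}:=C\cap\cA_{\mathrm{sa}}$, a convex cone with $-1\notin C_{\mathrm{sa}}$ (and $\SOS_\cA\subseteq C_{\mathrm{sa}}$). First I would record that for self-adjoint $a$ the identity $(1-a)^*(1-a)=1-2a+a^2$ combined with archimedeanity ($\eta-a^2\in\SOS_\cA$ for some $\eta$) gives $N\cdot 1-a\in\SOS_\cA\subseteq C_{\mathrm{sa}}$ for $N$ large, and likewise $N\cdot 1+a\in C_{\mathrm{sa}}$; thus $1$ is an order unit for $C_{\mathrm{sa}}$. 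Next I would introduce the sublinear functional $p(v):=\inf\{t\in\R:\ t\cdot 1-v\in C_{\mathrm{sa}}\}$. Positive homogeneity and subadditivity are routine; the key point is that $p$ is finite everywhere: its defining set is nonempty (order unit) and upward closed, and if it were all of $\R$ for some $v$, then adding $t\cdot 1-v\in C_{\mathrm{sa}}$ for very negative $t$ to some $M\cdot 1+v\in C_{\mathrm{sa}}$ would place a negative scalar multiple of $1$ in $C_{\mathrm{sa}}$, contradicting $-1\notin C_{\mathrm{sa}}$. One also checks $p(t\cdot 1)=t$. Hahn--Banach then extends the functional $t\cdot 1\mapsto t$ on $\R\cdot 1$ to an $\R$-linear $\lambda:\cA_{\mathrm{sa}}\to\R$ with $\lambda\le p$, which forces $\lambda(1)=1$ and $\lambda\ge 0$ on $C_{\mathrm{sa}}$; extend $\lambda$ $\CC$-linearly to $\cA$.

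It remains to feed $\lambda$ into the GNS construction. Since $a^*a\in\SOS_\cA\subseteq C_{\mathrm{sa}}$ we have $\lambda(a^*a)\ge 0$, so $\lambda$ is a state, and archimedeanity bounds the GNS operators: $\eta-a^*a\in\SOS_\cA$ implies $b^*(\eta-a^*a)b\in\SOS_\cA$, hence $\|\pi_\lambda(a)\hat b\|^2=\lambda(b^*a^*ab)\le\eta\,\lambda(b^*b)=\eta\|\hat b\|^2$, so $\pi_\lambda$ is a genuine $*$-representation into $\cB(\cH_\lambda)$ with cyclic vector $\psi=\hat 1$, $\|\psi\|^2=\lambda(1)=1\ne 0$. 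Finally, for $\ell\in\cLI$ the left-ideal property gives $\ell^*\ell\in\cLI$, so $\ell^*\ell\in\SOS_\cA$ and $-\ell^*\ell\in\cLI\subseteq C$ are both self-adjoint, hence both lie in $C_{\mathrm{sa}}$, whence $\lambda(\ell^*\ell)=0$ and therefore $\|\pi_\lambda(\ell)\psi\|^2=\lambda(\ell^*\ell)=0$; thus $\pi_\lambda(\ell)\psi=0$ for all $\ell\in\cLI$, which is exactly (i) (and hence (i)$'$).

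The step I expect to be the main obstacle is establishing that $p$ is real-valued everywhere and genuinely separates $C_{\mathrm{sa}}$ from $-1$: this is the only place where both the archimedean hypothesis and the assumption $-1\notin C$ are truly used, and it is what allows the argument to go through with "archimedean" rather than the stronger "archimedean closed" (so that no topological closure appears in (iii), unlike in \Cref{thm:rnssCHMN}). Everything downstream — the passage from the state $\lambda$ to $\pi_\lambda$, boundedness of $\pi_\lambda$, and killing $\cLI$ on the cyclic vector — is then routine.
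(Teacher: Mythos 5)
Your proposal is correct and follows essentially the same path as the paper: Hahn--Banach separation to produce a normalized linear functional that is nonnegative on the cone $\SOS_\cA+\cLI+\cLI^*$, followed by a GNS construction whose boundedness is controlled by archimedeanity and whose cyclic vector is annihilated by $\pi(\cLI)$. The only difference is cosmetic --- you build the Hahn--Banach step by hand via the order-unit Minkowski sublinear functional $p$ on $\cA_{\mathrm{sa}}$, while the paper invokes the Eidelheit--Kakutani separation theorem directly and verifies $L(f^*)=L(f)^*$ afterward.
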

     
     \begin{proof} 
      As before, Items (i) and (i)' are equivalent by the definition on $\zd^{\mathrm{re}, \cA}$. We thus establish (i) $\Leftrightarrow$ (ii).

     Easy side:  suppose  $-1  \in \SOS_{\cA} + \cLI + \cLI^*$.
     If $\pi,\psi$ as in  \Cref{eq:fX0} exist, then 
     $- \psi^*  \psi \geq 
      \psi^* \SOS_\cA \psi \geq 0$; contradiction.
     
     Harder side: suppose  $-1  \not\in \SOS_{\cA} + \cLI + \cLI^*$.
By the Hahn-Banach theorem (version due to
     Eidelheit-Kakutani \cite[Theorem III.1.7]{barvinok}) 
     there is a
     linear functional $L:\cA\to\C$ satisfying
      \beq\label{eq:2GNS}  L(1)= 1, \qquad L(\SOS_{\cA} + \cLI + \cLI^*) \subseteq \R_{\geq0}.
      \eeq 
{We remark that the strict separation is automatic since by archimedeanity
$L(1)\neq0$.}

     Since $\cLI$ is a subspace, the second property of \Cref{eq:2GNS} implies $L(\cLI)=\{0\}$. Likewise,  
     $L(f)\geq 0$ for any $f\in\SOS_{\cA}$.
We remark that $L(f)^* = L(f^*)$ for all $f\in\cA$. Indeed,
     because $\cA$ is archimedean, every self adjoint
     $g=g^*\in\cA$ is bounded above: there is $\eta\in\mathbb N$ with
     $\eta-g\in\SOS_{\cA}$. Thus $L(g)\in\R$. Then we write $h\in \cA$
     as a linear combination of self adjoints,
     \[
h=\frac{h+h^*}2+i\frac{h-h^*}{2i},
     \]
     to get $L(h^*)=L(h)^*$.
     
     Now perform the GNS construction.
   Define the bilinear
     form  
     \beq\label{eq:preDP} \langl a \mid b\rangl:= L(b^*a)\eeq on
     $\cA$.
     Set
     $N:=\{ a\in\cA \mid  L(a^*a) \}= 0$.
By the Cauchy–Schwarz inequality for semi-scalar products,
     $$0 \leq L(a^*r^* ra)^2 \leq
     L(a^*a)
     L(a^*  r^*r  r^*r a) =0,
     $$
     so $N$ is a left ideal.
     Since $1\not\in N,$
      $N \not = \cA$.
Form the quotient space $\check\cH=\mathcal A/N$. Then \Cref{eq:preDP}
induces a scalar product on $\check\cH$. We complete it to the Hilbert space $\cH$. 
Let 
    $ \phi: \cA \to \cH$,
     $$\phi(a):= a + N$$ be the quotient map and let $ \psi:= \phi(1).$

     Define a $*$-representation
     $\pi$ of $\cA$ on $\cH$ by
     $$\pi(a)(p+N):= ap+N.$$
Since $N$ is a left ideal, this is well-defined. It is clear that $\pi$ is a representation. It also intertwines the involution:
\\
\[
\begin{split}
\langl \pi(a^*)(p+N) \mid q+N \rangl &= \langl a^*p+N \mid q+N \rangl = L(q^*a^*p), \\
\langl p+N \mid \pi(a)(q+N) \rangl & = \langl p+N \mid aq+N \rangl = L(q^*a^*p).
\end{split}
\]
Finally, $\pi$ maps into $\cB(\cH)$.  Assume $p\in\mathcal A$ is
such that $\langl p+N \mid p+N\rangl =L(p^*p)=1$, and let $a\in \mathcal A$.
By archimedeanity, there is $\eta\in\mathbb N$ with $\eta-a^*a\in\SOS_{\cA}$. Then we have
\[
\begin{split}
0 & \leq \langl \pi(a)(p+N)\mid \pi(a)(p+N)\rangl  \\ &= 
 \langl ap+N \mid ap+N\rangl  = L(p^*a^*ap) \leq
\eta L(p^*p)=\eta,
\end{split}
\]
whence $\|\pi(a)\|\leq\sqrt{\eta}$. Thus $\pi:\cA\to\cB(\cH)$ is a $*$-representation.

It remains to verify \Cref{eq:fX0}.
For $f\in\cLI$ we have
\[
\pi(f)\psi = \pi(f)(1+N) = f+N.
\]
     Since $L(\cLI)=\{0\}$ and 
     $\cLI^*  \cLI \subseteq \cLI,$
     we have $\cLI \subseteq N$. Thus $\pi(f)\psi=f+N=0$, as desired.
     \end{proof}

\begin{ex}
An appealing class of algebras for which this theorem applies are 
group algebras $\C[G]$. Indeed, for every group element $g\in\C[G]$ 
we have $1-g^*g=0\in\SOS$. Since the set of bounded elements
\[
H=\{f\in\C[G]\mid \exists\eta\in\mathbb N:\, \eta-f^*f\in\SOS\}
\]
is a $*$-subalgebra \cite{vidav} containing $G$, we must have $H=\C[G]$ and thus
$\SOS$ is archimedean in $\C[G]$.
\end{ex}

    \begin{cor}
     \label{cor:2sidedIarchimedean}
     Suppose $\cA$ is a $*$-algebra, where $\SOS_{\cA}$ is {archimedean}, and let
     $\ide\subseteq\cA$ be a $*$-closed two sided ideal.
Then the following are equivalent:
\ben[\rm(i)]
\item there is a $*$-representation 
     $\pi$ such that $\pi(\ide)=\{0\}$;
     \item[\rm(i)'] $\zh^{\rm re,\cA}(\ide)\neq\varnothing$;
     \item\label{it:aaa2}
     $     -1 \not\in  \SOS_\cA + \ide .$
     \een 
 \end{cor}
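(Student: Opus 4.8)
The plan is to derive this as the archimedean, closure-free analogue of \Cref{cor:2sidedI}, with \Cref{thm:AB} playing the role that \Cref{thm:rnssCHMN} played there; the hypothesis that $\SOS_\cA$ be archimedean (which holds, e.g., for group algebras, by the example above) is exactly what removes the closure operation appearing in \Cref{cor:2sidedI}.

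First I would dispose of the equivalence (i)~$\Leftrightarrow$~(i)', which is immediate from the definition of $\zh^{\rm re,\cA}(\ide)$: the points of that set are precisely the $*$-representations annihilating $\ide$, so it is nonempty exactly when (i) holds. Next I would record the easy implication (i)~$\Rightarrow$~\ref{it:aaa2}: given a $*$-representation $\pi:\cA\to\cB(\cH)$ with $\pi(\ide)=\{0\}$ and any unit vector $\psi\in\cH$, an identity $-1=\sum_j u_j^*u_j+h$ with $u_j\in\cA$ and $h\in\ide$ would give, after applying the state $a\mapsto\psi^*\pi(a)\psi$, the contradiction $-1=\sum_j\|\pi(u_j)\psi\|^2+\psi^*\pi(h)\psi=\sum_j\|\pi(u_j)\psi\|^2\geq 0$; hence $-1\notin\SOS_\cA+\ide$.

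The main content is \ref{it:aaa2}~$\Rightarrow$~(i). Here I would invoke \Cref{thm:AB} with $\cLI:=\ide$: this is legitimate because $\ide$, being a two-sided ideal, is in particular a left ideal, and $\SOS_\cA$ is archimedean by hypothesis. Since $\ide$ is $*$-closed we have $\ide^*=\ide$, so $\SOS_\cA+\cLI+\cLI^*=\SOS_\cA+\ide$, and the assumption $-1\notin\SOS_\cA+\ide$ is exactly condition~(ii) of \Cref{thm:AB}. That theorem then supplies a $*$-representation $\pi:\cA\to\cB(\cH)$ and a nonzero $\psi\in\cH$ with $\pi(f)\psi=0$ for every $f\in\ide$. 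The one genuinely non-formal step is to upgrade this \emph{directional} vanishing to the \emph{hard} vanishing $\pi(\ide)=\{0\}$ required by (i), and I would do this exactly as in the proof of \Cref{cor:2sidedI}: pass to the cyclic subspace $\check\cH:=\overline{\pi(\cA)\psi}$, which is $\pi(\cA)$-invariant, and let $\check\pi$ be the induced unital $*$-representation on $\check\cH$. For $f\in\ide$ and $a\in\cA$ one has $fa\in\ide$ (this is where two-sidedness of $\ide$ is used), so $\check\pi(f)\pi(a)\psi=\pi(fa)\psi=0$; thus $\check\pi(f)$ kills the dense subspace $\pi(\cA)\psi$ of $\check\cH$ and therefore $\check\pi(f)=0$. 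As $0\neq\psi=\pi(1)\psi\in\pi(\cA)\psi\subseteq\check\cH$, the space $\check\cH$ is nonzero and $\check\pi$ is a genuine representation, so $\check\pi$ witnesses (i).

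I do not anticipate a real obstacle: the whole argument is a direct quotation of \Cref{thm:AB} together with the standard cyclic-subspace passage from directional to hard zeros, the latter being verbatim what is done for \Cref{cor:2sidedI}. The only point to stay careful about is that $\check\pi$ remains a unital, nonzero representation — which is ensured by $\psi=\check\pi(1)\psi$ lying in $\check\cH\setminus\{0\}$ — and, of course, that the archimedean hypothesis is genuinely available (it is, by assumption).
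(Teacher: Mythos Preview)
Your proposal is correct and follows essentially the same approach as the paper: the paper's proof is a one-line remark that the argument of \Cref{cor:2sidedI} goes through verbatim once \Cref{thm:AB} replaces \Cref{thm:rnssCHMN}, and you have simply written out that argument in full, including the cyclic-subspace passage from directional to hard zeros.
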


\begin{proof}
The proof is the same as that of \Cref{cor:2sidedI},
just that we use \Cref{thm:AB} instead of \Cref{thm:rnssCHMN}.
\end{proof}

\ssec{\NullSSs and Perfect Games}
\label{ssec:NullSS_and_Perfect}

Combining the \NullSS of \Cref{ssec:QuantumGamesNullSS} with \detsets defined in \Cref{ssec:Characterizing_perfect_games} immediately gives a characterization of games with perfect commuting operator strategies in terms of left ideals and sums of squares of the universal game algebra $\uGA$. 

\begin{thm}
\label{thm:Game_NullSS}
For a nonlocal game $\game$ \dby a set $\dset \subseteq \uGA$ the following are equivalent:
\begin{enumerate}[\rm(i)]
\item
$\game$ has a perfect commuting operator strategy;
\item $-1 \notin \LIg{\dset} + \LIg{\dset}^* + \SOS_\uGA $.
\end{enumerate}
\end{thm}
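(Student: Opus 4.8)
The plan is to obtain this as an immediate consequence of the archimedean \NullSS \Cref{thm:AB}, once two preliminary observations are in place. First, $\uGA$ is a group algebra: by the cyclic unitary presentation of \Cref{sssec:uGA} we have $\uGA=\C[\gagp]$, and for a group algebra $\SOS$ is archimedean (each generator $g$ satisfies $1-g^*g=0\in\SOS$, and the set of bounded elements is a $*$-subalgebra containing all of $\gagp$, hence equal to $\uGA$). Thus \Cref{thm:AB} applies with $\cA=\uGA$ and any left ideal, in particular with $\cLI=\LIg{\dset}$.

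Second, I would reformulate ``$\game$ has a perfect commuting operator strategy'' in the language of directional zeros, i.e.\ show it is equivalent to $\zd^{\mathrm{re},\uGA}(\LIg{\dset})\neq\varnothing$: there exist a $*$-representation $\rep:\uGA\to\cB(\cH)$ and $0\neq\psi\in\cH$ with $\rep(h)\psi=0$ for all $h\in\LIg{\dset}$. For the forward direction, if $(\rep,\state)$ is a perfect strategy then $\rep(\delt)\state=0$ for every $\delt\in\dset$ by \Cref{defn:dset}; since $\rep$ is a homomorphism, $\rep(a\delt)\state=\rep(a)\rep(\delt)\state=0$ for every $a\in\uGA$, so taking linear combinations gives $\rep(h)\state=0$ for all $h$ in the left ideal $\LIg{\dset}$. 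Conversely, given $(\rep,\psi)$ with $\rep(h)\psi=0$ on $\LIg{\dset}$, in particular $\rep(\delt)\psi=0$ for all $\delt\in\dset\subseteq\LIg{\dset}$; normalising $\psi$ to a unit vector yields a genuine commuting operator strategy $(\rep,\psi/\|\psi\|)$ (via the identification of strategies with representation--state pairs from \Cref{sssec:Strategies_as_representations}, the GNS Hilbert space being separable since $\uGA$ is countably dimensional), which is perfect again by \Cref{defn:dset}.

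With these in hand, the theorem is exactly \Cref{thm:AB} applied to $\cA=\uGA$ and $\cLI=\LIg{\dset}$: its condition (i) is ``$\game$ has a perfect commuting operator strategy'' by the reformulation above, while its condition (iii) reads $-1\notin\SOS_\uGA+\LIg{\dset}+\LIg{\dset}^*$, which is exactly (ii). I do not anticipate a genuine obstacle; the only points needing care are the archimedeanity of $\SOS_\uGA$ (supplied by the group-algebra structure) and the harmless rescaling of the state, both routine.
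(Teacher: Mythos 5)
Your proof is correct and follows essentially the same route the paper intends: the paper's one-line proof citation is terse (and arguably miscited -- it should point at the archimedean \NullSS of \Cref{thm:AB} together with \Cref{defn:dset}, as the surrounding text makes clear), and you have supplied exactly the two pieces it leaves implicit. First, you observe that $\uGA=\C[\gagp]$ is a group algebra so $\SOS_\uGA$ is archimedean, which is the example given in the paper immediately after \Cref{thm:AB}; and second, you translate ``perfect strategy'' (via \Cref{defn:dset}) into nonemptiness of the directional zero set of $\LIg{\dset}$, noting that $\rep(\delt)\state=0$ for all $\delt\in\dset$ extends to the whole left ideal since $\rep$ is a homomorphism, with the converse and the normalisation/separability remark handled as you do. Plugging these into \Cref{thm:AB} with $\cA=\uGA$ and $\cLI=\LIg{\dset}$ gives exactly the stated equivalence, so your argument is a faithful, and somewhat more carefully spelled out, version of the paper's proof.
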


\begin{proof}
Immediate from \Cref{thm:perfect_games_direc_zero,defn:dset}.
\end{proof}

An immediate corollary of \Cref{thm:Game_NullSS} comes from recalling the notions of determining sets $\validpoly, \invalidpoly$ defined in \Cref{sssec:valid_invaid_response_det_sets}. 

\begin{cor}
\label{cor:valOnenss}
Let $\game$ be a a nonlocal game, and $\validpoly, \invalidpoly$ be the determining sets associated with the game. Then the following are equivalent:
\begin{enumerate}[\rm(i)]
    \item $\covalue(\game) = 1$;
    \item $-1 \notin \LIg{\validpoly} + \LIg{\validpoly}^* + \SOS_{\uGA}$; 
    \item $-1 \notin \LIg{\invalidpoly} + \LIg{\invalidpoly}^* + \SOS_{\uGA}$.
\end{enumerate}
\end{cor}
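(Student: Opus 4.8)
The plan is to chain together three ingredients already in place: the reformulation of $\covalue(\game)=1$ as the existence of a perfect commuting operator strategy, the fact that $\validpoly$ and $\invalidpoly$ are determining sets for $\game$, and the \NullSS of \Cref{thm:Game_NullSS}.

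First I would record the one point that actually needs care. By the lemma surrounding \Cref{eq:covPsi}, the supremum defining $\covalue(\game)$ is attained by some pair $(\pi,\psi)$ with $\psi$ a unit vector, so $\covalue(\game)=1$ holds if and only if there exist a $*$-representation $\pi:\uGA\to\cB(\cH)$ and a unit vector $\psi\in\cH$ with $\psi^*\pi(\Phi_\game)\psi=1$, i.e.\ if and only if $\game$ has a perfect commuting operator strategy in the sense of \Cref{ssec:Characterizing_perfect_games}. This identifies condition (i) with the hypothesis of \Cref{thm:Game_NullSS}; without the attainment statement one could only conclude that $\covalue(\game)=1$ is implied by (but perhaps not equivalent to) possessing a perfect strategy, so this remark is the only substantive step.

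Next I would invoke \Cref{thm:perfect_games_direc_zero}, which states exactly that $\game$ is \dby both of the sets $\validpoly$ and $\invalidpoly$ of \Cref{defn:valid_invalid_poly}. Applying \Cref{thm:Game_NullSS} with the determining set $\dset=\validpoly$ yields the equivalence of (i) and (ii), and applying it once more with $\dset=\invalidpoly$ yields the equivalence of (i) and (iii); together these give the full cycle of equivalences.

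I do not anticipate any real obstacle: every step is a direct appeal to a result proved earlier in the excerpt, and the proof is genuinely a two-line corollary. The only thing worth flagging to the reader is the (already-established) identification of ``$\covalue(\game)=1$'' with ``$\game$ has a perfect commuting operator strategy'', which rests on the compactness/attainment argument rather than on any further algebra.
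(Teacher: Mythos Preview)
Your argument is correct and matches the paper's proof, which reads in its entirety ``Immediate from \Cref{thm:Game_NullSS,thm:perfect_games_direc_zero}.'' Your expansion making explicit the attainment step (that $\covalue(\game)=1$ is equivalent to the existence of a perfect strategy via \Cref{eq:covPsi}) is a reasonable addition the paper leaves implicit.
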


\begin{proof}
Immediate from \Cref{thm:Game_NullSS,thm:perfect_games_direc_zero}.
\end{proof}

This corollary applies to all games (according to the definitions here) and characterizes which games do vs.~do not have a quantum strategy.
Unfortunately, the freedom given by the SOS terms
in this algebraic certificate can make this theorem  hard to use.
Hence, we turn next to situations with no SOS term.

\section{\NullSS without SOS and Subgroup Membership}
\label{sec:NoSOSNullSS}

It is helpful to divide the perfect game condition into two sub questions. The first is checking whether 
$
    -1 \in \cLI + \cLI^* 
$ that is, whether
$$
    1 \in \cLI + \cLI^*. 
$$
Intuitively, this question feels ``algebraic", and we will show in \Cref{ssec:Toric_Ideal_GA_Simplification} that in special cases it reduces to a subgroup membership problem. 

The second problem is checking whether 
\begin{align}
    -1 \in \cLI + \cLI^* + \SOS_\uGA 
\end{align}
given that 
\begin{align}
    1 \notin \cLI + \cLI^*.
\end{align}
This question is more analytic, and adds substantial complexity to applications.
In special cases we have that
\begin{align}
    1 \notin \cLI + \cLI^* \implies -1 \notin \cLI + \cLI^* + \SOS_\uGA
\end{align}
and hence the second problem is trivial. This seems closely related to the existence of projections which are conditional expectations and respect SOS. The next section investigates this link further.

\ssec{Conditional Expectations and SOS Projections}
\label{sssec:LeftIdealsNullSSSimplification}

We prepare to produce   simpler  a \NullSS with no SOS terms. This uses   existence of  SOS projections and  conditional expectations, notions we now study.

\sssec{Definitions}

Let $\cA$ be  $*$-algebra, and let
$\cC\subseteq\cA$. Recall
 $\SOS_\cC$ 
 denotes all 
sums of squares of members 
 of  $\cC$, i.e.,
 \[
\SOS_{\cC}=\Big\{ \sum_{i} c_i^*c_i \mid c_i\in\cC\Big\}.
 \]
If $\cC$ is a $*$-subalgebra, then $\SOS_{\cC}\subseteq\cC$.
 A  subtlety which is very important to us is: 
 while $f\in \SOS_\cA$ is in  $\cC$,
  $f$ may not be in 
  $ \SOS_\cC$.
 Similarly for $F \subseteq \cC $ we let
 $\cLI_\cC (F) \subseteq \cC$ denote
 the left ideal generated by $F$ in the algebra $\cC$. 
 When there is no confusion we may omit $\cC$ or $F$.

\begin{defn}

Given a unital $*$-algebra $\cA$, a (not necessarily unital) $*$-subalgebra $\cC$ and a projection $\cexp:\cA \rightarrow \cC$  (i.e., $\cexp^2 = \cexp$) onto $\cC$ satisfying 
    $\cexp(a)^* = \cexp(a^*)$ for all $a \in \cA$. Then
$\cexp$ is called a
\ben[\rm(1)]
\item 
\df{SOS Projection} if
$\cexp(\SOS_\cA) \subseteq \SOS_\cC$.

\item
\df{Conditional Expectation}
provided $\cC$ is unital and $\cexp$ satisfies
\begin{enumerate}[\rm (a)]
    \item\label{eq:bimodule} $\cexp(b_1 a b_2) = b_1 \cexp(a) b_2$ for all $a \in \cA$, $b_1, b_2 \in \cC$;
    \item $\cexp(1_\cA) = \cexp(1_\cB)$; 
        \item $\cexp(\SOS_\cA) \subseteq \SOS_\cA \cap\ \cC$.
\end{enumerate}
\item
\df{SOS Conditional Expectation} if $\cexp$
is a conditional expectation that also satisfies the SOS projection property  $\cexp(\SOS_\cA) \subseteq \SOS_\cC$. 
\een 

\end{defn}
Conditional expectations will typically be 
denoted $\cexp$.

\begin{rmk}
The bimodule property \ref{eq:bimodule} in the definition of a conditional expectation can be replaced by the seemingly weaker
one-sided version
\begin{enumerate}\setlength\itemindent{25pt}
 \item [(a')]
$\cexp(b a) = b \cexp(a)$ for all $a \in \cA$, $b \in \cC$.
\end{enumerate}
Indeed, given $a \in \cA$ and $c\in\cC$ we have
\[
\cexp(ac) =\cexp \big( (c^*a^*)^*\big) = 
\cexp \big( (c^*a^*)\big)^* = 
\big( c^* \cexp(a^*) \big) ^* =
\cexp(a^*)^* c = \cexp(a) c,
\]
as desired.
\end{rmk}

We now show existence of these mappings can simplify the nonlocal games \NullSS.

\sssec{\NullSS}

The next simple lemma explains the main use of  SOS and conditional expectation property.
%Note from the proof that only a ``left conditional expectation" property is used

\begin{lem}
\label{thm:cond_exp_SOS}
Given a unital $*$-algebra $\cA$,  a unital $*$-subalgebra $\cC$
and $\cLI$ a left ideal in $\cA$.
 If
an SOS conditional expectation $\cexp:\cA \to \cC$ exists, then
\ben[\rm(1)]
\item 
\label{it:P on LI}
$L:=\cexp(\cLI)$ is a left ideal in $\cC$;
\item 
 $-1 \notin \cLI + \cLI^* + \SOS_{\cA}$ \ \  \text{iff}  \ \ 
 $-1 \notin L + L^* + \SOS_{\cC}$.
\een 
\end{lem}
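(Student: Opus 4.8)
The plan is to verify the two assertions directly from the defining properties of an SOS conditional expectation $\cexp:\cA\to\cC$, namely that $\cexp$ is a $\cC$-bimodule map fixing $\cC$ pointwise (so $\cexp^2=\cexp$), that it is $*$-compatible, and that $\cexp(\SOS_\cA)\subseteq\SOS_\cC$. For part~\ref{it:P on LI}, I would first note $L=\cexp(\cLI)\subseteq\cC$ is a linear subspace since $\cexp$ and $\cLI$ are linear. To see it is a left ideal in $\cC$, take $c\in\cC$ and $\ell\in\cLI$; then $c\,\cexp(\ell)=\cexp(c\ell)$ by the bimodule (or one-sided) property, and $c\ell\in\cLI$ because $\cLI$ is a left ideal in $\cA$ and $\cC\subseteq\cA$, so $c\,\cexp(\ell)\in\cexp(\cLI)=L$. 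This is the easy half.

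For part~(2), the ($\Leftarrow$) direction is the cheap one: if $-1\in\cLI+\cLI^*+\SOS_\cA$, apply $\cexp$ to both sides. Since $\cexp$ fixes $\cC$ and $-1\in\cC$, we get $-1=\cexp(-1)\in\cexp(\cLI)+\cexp(\cLI^*)+\cexp(\SOS_\cA)$. Now $\cexp(\cLI)=L$, $\cexp(\cLI^*)=\cexp(\cLI)^*=L^*$ using $*$-compatibility, and $\cexp(\SOS_\cA)\subseteq\SOS_\cC$ by the SOS projection property, so $-1\in L+L^*+\SOS_\cC$. Contrapositively, $-1\notin L+L^*+\SOS_\cC$ implies $-1\notin\cLI+\cLI^*+\SOS_\cA$.

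The ($\Rightarrow$) direction is the substantive step and the main obstacle: I must show that if $-1\in L+L^*+\SOS_\cC$ then $-1\in\cLI+\cLI^*+\SOS_\cA$. The key observation is that $L+L^*+\SOS_\cC\subseteq\cLI+\cLI^*+\SOS_\cA$. Indeed $\SOS_\cC\subseteq\SOS_\cA$ trivially since $\cC\subseteq\cA$; and every element of $L=\cexp(\cLI)$ can be written as $\ell+(\cexp(\ell)-\ell)$ with $\ell\in\cLI$, so it suffices to check that $\cexp(\ell)-\ell\in\cLI+\cLI^*$ for all $\ell\in\cLI$ — equivalently, on the right-hand side we may simply use representatives $\ell\in\cLI$ in place of their images $\cexp(\ell)$, since $L\subseteq\cexp(\cLI)$ literally consists of images of elements of $\cLI$. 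Thus any expression $\sum_j\cexp(\ell_j)+\sum_k\cexp(\ell_k)^*+\sum_i c_i^*c_i$ equal to $-1$ gives, upon replacing each $\cexp(\ell_j)$ by the choice of preimage $\ell_j\in\cLI$ and noting this changes the element, that we instead have membership via $\cexp$ applied to the full sum; more cleanly, I would argue: since $-1\in L+L^*+\SOS_\cC$, write $-1=\cexp\!\big(\sum_j\ell_j\big)+\cexp\!\big(\sum_k\ell_k\big)^*+s$ with $\ell_j,\ell_k\in\cLI$ and $s\in\SOS_\cC\subseteq\cA$; but $s=\cexp(s)$ since $s\in\cC$, and by part~(2)($\Leftarrow$)'s computation in reverse this equality already witnesses $-1\in\cexp(\cLI+\cLI^*+\SOS_\cA)$, which by $\cexp$ being a projection onto $\cC$ containing $-1$ forces $-1\in\cLI+\cLI^*+\SOS_\cA$ once we check $\ker\cexp\subseteq\cLI+\cLI^*+\SOS_\cA$ is not needed — rather, the honest route is: $\cLI+\cLI^*+\SOS_\cA$ is $\cexp$-invariant and contains $L+L^*+\SOS_\cC$, hence contains $-1$. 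I expect the delicate bookkeeping to be exactly this: confirming that $L+L^*+\SOS_\cC\subseteq\cLI+\cLI^*+\SOS_\cA$ cleanly, using that representatives in $\cLI$ themselves lie in the larger set so their $\cexp$-images are irrelevant on the small side. Once that containment is established the lemma follows immediately.
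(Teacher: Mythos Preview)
Your proof of part~\ref{it:P on LI} and of the $(\Leftarrow)$ direction of part~(2) is correct and is exactly what the paper does: for $(\Leftarrow)$ one simply applies $\cexp$ to a membership certificate $-1\in\cLI+\cLI^*+\SOS_\cA$ and uses $\cexp(-1)=-1$, $\cexp(\cLI)=L$, $\cexp(\cLI^*)=L^*$, and $\cexp(\SOS_\cA)\subseteq\SOS_\cC$.

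For the $(\Rightarrow)$ direction the paper writes only ``The converse is obvious.'' Your proposal correctly senses that something more is required and tries several routes to show $L+L^*+\SOS_\cC\subseteq\cLI+\cLI^*+\SOS_\cA$, but none of them goes through: the attempted reductions (checking $\cexp(\ell)-\ell\in\cLI+\cLI^*$, invoking ``$\cexp$-invariance'', replacing images by preimages) are either left unproved or circular. This is not a failure of technique on your part; at the stated level of generality the implication is actually \emph{false}. Take $\cA=M_2(\CC)$, $\cC=\CC I$, $\cexp(a)=\tfrac12\tr(a)\,I$ (an SOS conditional expectation), and $\cLI=\cA e_{11}$. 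Then $L=\cexp(\cLI)=\CC I$, so $-1\in L+L^*+\SOS_\cC$, while any element of $\cLI+\cLI^*+\SOS_\cA$ has nonnegative $(2,2)$ entry, so $-I\notin\cLI+\cLI^*+\SOS_\cA$.

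What makes the direction genuinely obvious in the paper's applications is the extra feature, present in \Cref{cor:SOS_CE_LI} but not in the lemma's hypotheses, that $\cLI$ is generated by a set $F\subseteq\cC$; then $L=\cLI(F)_\cC\subseteq\cLI(F)_\cA=\cLI$, and the containment $L+L^*+\SOS_\cC\subseteq\cLI+\cLI^*+\SOS_\cA$ is immediate. So the honest fix is either to add the hypothesis $L\subseteq\cLI$ (equivalently, that $\cLI$ is $\cexp$-invariant) or to prove the biconditional only in the setting of the corollary. Your argument for $(\Rightarrow)$ cannot be completed without some such assumption.
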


\begin{proof}
Suppose $b \in L$ and $c \in \cC$.
Then there is $\hat b \in \cLI$ 
satisfying $\cexp(\hat b) =b$.
These satisfy
$$
cb = c \cexp(\hat b) = \cexp(c \hat b) \in \cexp(\cLI)
= L
$$
establishing \Cref{it:P on LI}.

To prove the next item
assume  that 
$-1 \in \cLI + \cLI^* + \SOS_{\cA}$. 
Then 
\begin{align}
 -1=   \cexp(-1) &\in \cexp(\cLI + \cLI^* + \SOS_{\cA}) = L + L^* + \SOS_{\cC}.
\end{align}
The converse is obvious. 
\end{proof}

\begin{cor}
\label{cor:SOS_CE_LI}
Let $\algebra$ be a $*$-algebra, $\ssalg$ be a $*$-subalgebra of $\algebra$ and $\polyn\subseteq\ssalg$. %Let $\cLI(\polyn)_\algebra$ be the left ideal generated by $\polyn$ in $\algebra$ and, $\cLI(\polyn)_\ssalg$ be the left ideal generated by $\polyn$ in $\ssalg$. 
Then, if a SOS conditional expectation $\cexp: \algebra \rightarrow \ssalg$ exists then 
\begin{align}
    -1 \notin \cLI(\polyn)_\algebra + \cLI(\polyn)_\algebra^* + \SOS_{\cA} \ \  \text{iff}  \ \ 
 -1 \notin \cLI(\polyn)_\ssalg + \cLI(\polyn)_\ssalg^* + \SOS_{\cC}
\end{align}
\end{cor}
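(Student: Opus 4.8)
The plan is to deduce this corollary directly from Lemma~\ref{thm:cond_exp_SOS} by taking the left ideal there to be $\cLI := \cLI(\polyn)_\algebra$, the left ideal generated by $\polyn$ inside $\algebra$. Part~(1) of that lemma then tells us that $L := \cexp\big(\cLI(\polyn)_\algebra\big)$ is a left ideal of $\ssalg$, and part~(2) gives
\[
-1 \notin \cLI(\polyn)_\algebra + \cLI(\polyn)_\algebra^* + \SOS_{\cA}
\quad \Longleftrightarrow \quad
-1 \notin L + L^* + \SOS_{\cC}.
\]
So the whole task reduces to checking that $L$ coincides with $\cLI(\polyn)_\ssalg$, the left ideal generated by $\polyn$ inside the subalgebra $\ssalg$.

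I would prove this identity by two inclusions. For $\cLI(\polyn)_\ssalg \subseteq L$: since $\cexp$ is a projection with image $\ssalg$, it restricts to the identity on $\ssalg$, so $\cexp(\polyn) = \polyn$ (using $\polyn \subseteq \ssalg$) and hence $\polyn \subseteq L$; as $L$ is a left ideal of $\ssalg$ by Lemma~\ref{thm:cond_exp_SOS}(1), it must contain the left ideal $\cLI(\polyn)_\ssalg$ generated by $\polyn$. For the reverse inclusion $L \subseteq \cLI(\polyn)_\ssalg$: a generic element of $\cLI(\polyn)_\algebra$ is a finite sum $\sum_i a_i f_i$ with $a_i \in \algebra$ and $f_i \in \polyn \subseteq \ssalg$, and applying $\cexp$ together with the one-sided module property of a conditional expectation, $\cexp(af) = \cexp(a)f$ for $f \in \ssalg$ noted in the Remark after the definition, yields $\cexp\big(\sum_i a_i f_i\big) = \sum_i \cexp(a_i) f_i \in \cLI(\polyn)_\ssalg$, since each $\cexp(a_i) \in \ssalg$. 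Thus $L = \cLI(\polyn)_\ssalg$, and substituting this back into the displayed equivalence finishes the proof.

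The proof is short, and the only place demanding any care is the identification $\cexp\big(\cLI(\polyn)_\algebra\big) = \cLI(\polyn)_\ssalg$: one has to use both that $\cexp$ fixes $\ssalg$ pointwise (for the ``$\supseteq$'' direction) and the module property $\cexp(af) = \cexp(a)f$ for $f \in \ssalg$ (for the ``$\subseteq$'' direction), so that generators of the ambient left ideal map exactly onto generators of the left ideal in the subalgebra. Everything else is a direct appeal to Lemma~\ref{thm:cond_exp_SOS}.
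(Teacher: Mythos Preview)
Your proof is correct and follows essentially the same approach as the paper: both reduce to Lemma~\ref{thm:cond_exp_SOS} after verifying $\cexp\big(\cLI(\polyn)_\algebra\big) = \cLI(\polyn)_\ssalg$ via the two inclusions, using that $\cexp$ fixes $\ssalg$ for one direction and the module property $\cexp(af)=\cexp(a)f$ for the other. The only cosmetic difference is that for the inclusion $\cLI(\polyn)_\ssalg \subseteq L$ the paper argues directly from $\cLI(\polyn)_\ssalg \subseteq \cLI(\polyn)_\algebra$ and $\cexp|_\ssalg = \mathrm{id}$, whereas you invoke Lemma~\ref{thm:cond_exp_SOS}(1) to say $L$ is a left ideal containing $\polyn$.
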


\begin{proof}
We show that $\cexp (\cLI(\polyn)_\algebra) = \cLI(\polyn)_\ssalg$. Then the result is immediate from \Cref{thm:cond_exp_SOS}.

First, 
$\cexp$ is the identity map on $\cC$, whence
$\cLI(\polyn)_\ssalg \subseteq \cLI(\polyn)_\algebra$ implies
$\cLI(\polyn)_\ssalg \subseteq\cexp (\cLI(\polyn)_\algebra) $.
To show the other inclusion note any element $p \in \cLI(\polyn)_\algebra$ can be written as 
$
    p = \sum_\pol a_\pol \pol, 
$
with all $\pol \in \polyn$ and $a_\pol \in \algebra$. Then (using the bimodule property of $\cexp$ %is a conditional expectation 
in the second equality) we see
\begin{align}
    \cexp(p) = \sum_\pol \cexp(a_\pol \pol) = \cexp(a_\pol) \pol \in \cLI(\polyn)_\ssalg, 
\end{align}
hence $\cexp (\cLI(\polyn)_\algebra) \subseteq \cLI(\polyn)_\ssalg$. Then $\cexp (\cLI(\polyn)_\algebra) = \cLI(\polyn)_\ssalg$ and the proof is complete. 
\end{proof}

Now we  prepare for finding 
a subalgebra $\cC$ that makes \Cref{thm:cond_exp_SOS} valuable.

\begin{lem} \label{lem:small_subalgebra_SOS}
Let $\algebra$ be a 
unital $*$-algebra and $\polyn$ be a set of elements in $\algebra$.
Also let $\cC$ denote the $*$-subalgebra %$\CC(\{F,1\})$
of $\algebra$ generated by $\{\polyn, 1\}$ and $\cLI_\cC$ be the left ideal in $\cC$ generated by $\polyn$. Finally, let $\cC'$ be the subalgebra of $\algebra$ generated by $\polyn$. Then the following hold.
\begin{enumerate}[\rm(1)]

\item If $F=F^*$, then
$\cC'= \lide_C$.
\item
If $  1 $ is not in the (non-unital) $*$-subalgebra $\cC'$ generated by $F$,
then 
\begin{align}
    -1 \notin \cLI_{\cC} + \cLI_{\cC}^* + \SOS_{\cC}.
\end{align}
\end{enumerate}
\end{lem}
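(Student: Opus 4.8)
The plan is to work throughout with the decomposition $\cC = \C 1 + \cC'$, where $\cC'$ is the non-unital subalgebra generated by $F$; in part~(1) the hypothesis $F=F^*$ is exactly what makes $\cC'$ $*$-closed, so that it agrees with the non-unital $*$-subalgebra generated by $F$. I will also use that, since $\cC$ is unital, the left ideal $\cLI_\cC$ generated by $F$ is simply $\cC F=\{\sum_i a_i f_i : a_i\in\cC,\ f_i\in F\}$, and dually $\cLI_\cC^*=F^*\cC$.

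For part~(1), I would first check $\cLI_\cC\subseteq\cC'$: writing $a_i=\lambda_i 1+a_i'$ with $\lambda_i\in\C$ and $a_i'\in\cC'$, each summand $a_if_i=\lambda_if_i+a_i'f_i$ lies in $\cC'$ because $F\subseteq\cC'$ and $\cC'$ is closed under multiplication. For the reverse inclusion, $\cC'$ is spanned by words $f_{j_1}\cdots f_{j_k}$ with $k\ge1$ and $f_{j_\ell}\in F$, and each such word equals $(f_{j_1}\cdots f_{j_{k-1}})\,f_{j_k}\in\cC\cdot F=\cLI_\cC$ (the empty product being $1$). Hence $\cLI_\cC=\cC'$; in passing this also records that $\cLI_\cC$ is $*$-closed.

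For part~(2), the key observation is the inclusion
\[
\cLI_\cC+\cLI_\cC^*+\SOS_\cC\ \subseteq\ \R_{\ge0}\,1+\cC' .
\]
Indeed $\cLI_\cC\subseteq\cC'$ by the computation above, $\cLI_\cC^*\subseteq\cC'$ since $\cC'$ is $*$-closed, and for $c=\lambda 1+c'\in\cC$ one has $c^*c=|\lambda|^2\,1+\bigl(\bar\lambda c'+\lambda(c')^*+(c')^*c'\bigr)$, whose parenthesised part lies in $\cC'$; summing gives $\SOS_\cC\subseteq\R_{\ge0}\,1+\cC'$. Now if $-1\in\R_{\ge0}\,1+\cC'$, say $-1=t\,1+c'$ with $t\ge0$ and $c'\in\cC'$, then $(1+t)\,1=-c'\in\cC'$, and dividing by $1+t>0$ forces $1\in\cC'$, contradicting the hypothesis. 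Thus $-1\notin\cLI_\cC+\cLI_\cC^*+\SOS_\cC$; when $F=F^*$ one may instead just combine part~(1) with the inclusion $\SOS_\cC\subseteq\R_{\ge0}\,1+\cC'$.

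Every step is elementary, so there is no serious obstacle here; the one thing to be careful about is the bookkeeping of ``scalar parts'' — verifying that membership in $\R_{\ge0}\,1+\cC'$ survives forming the ideals, taking adjoints, and forming sums of squares, and noticing that one never needs a genuine direct sum $\cC=\C 1\oplus\cC'$, only the inclusion $\cC=\C 1+\cC'$ together with $1\notin\cC'$, to reach the contradiction.
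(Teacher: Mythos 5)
Your proof is correct and follows essentially the same route as the paper's: both use the decomposition $\cC=\C 1+\cC'$, note $\cLI_\cC,\cLI_\cC^*\subseteq\cC'$ and $\SOS_\cC\subseteq\R_{\geq0}1+\cC'$, and derive the contradiction $1\in\cC'$ from $-1\in\R_{\geq0}1+\cC'$. The only cosmetic difference is that you are slightly more explicit than the paper about where $F=F^*$ is used (namely, to reconcile the two senses of $\cC'$ and to make the decomposition $\cC=\C1+\cC'$ available in part (1)), which is a welcome clarification rather than a deviation.
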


\begin{proof}
(1) By definition,
$\lide_C\supseteq F$ and $\lide_C$ is closed under addition and multiplication. Since $F=F^*$, $\lide_C$ is also closed under the
involution. It is thus contained in the $*$-algebra $\cC'$.
Conversely, 
since $F=F^*$, $\cC'$ is the algebra generated by $F$
and thus by definition contained in $\lide_C$.

(2) First note that any polynomial $p \in \cC$ can be written as 
\begin{align}
    p = p' + \alpha
\end{align}
where $\alpha \in \mathbb{C}$ and $p'\in\cC'$. It follows that any polynomial $q \in \cLI_\cC$ can be written as  a sum of terms of the form
\begin{align}
    q = \left(p' + \alpha\right) f % = p_F' \in \polyalg
\end{align}
with $f \in \polyn$. A similar description holds for any polynomial in $\cLI_\cC ^*$. Additionally, any polynomial $p'' \in \SOS_\cC$ can be written as 
\begin{subequations}
\begin{align}
    p'' &= \sum_i (p_{ i} + \alpha_i)^*(p_{ i} + \alpha_i) \\
    &= \sum_i \left( p_{i}^*p_i + \alpha_i^* p_{i} + \alpha_i p_{i}^* \right)  + \sum_i |\alpha_i|^2 \\
    &= \tilde p + \alpha''
\end{align}
\end{subequations}
with each $p_{i} \in \cC'$; hence $\tilde p\in\cC'$ and $\alpha''\in\R_{\geq0}$. 

Now assume for contradiction that $-1 \in \cLI_\cC + \cLI_\cC^* + \SOS_\cC$. 
Then, combining the above observations we can write 
\begin{align}\label{eq:-1one}
    - 1 = p' + \tilde p + \alpha ''
\end{align} 
with $p', \tilde p \in\cC'$ and $\alpha \in\R_{\geq0}$. Rearranging \Cref{eq:-1one} gives 
\begin{align}
    -(1 + \alpha'') &= p' + \tilde p \in\cC',% \\
    %\implies 1 &\in \polyalg 
\end{align}
implying that $1\in\cC'$ since $1+\alpha''\in\R_{>0}$.
\end{proof}

Combining \Cref{lem:small_subalgebra_SOS,thm:cond_exp_SOS} results in the following ``SOS free'' \NullSS. 

\begin{thm} \label{thm:SOS_cond_exp_nullSS}
Let $\algebra$ be a 
unital $*$-algebra with archimedean $\SOS_\algebra$,
and let $\polyn=\polyn^*\subseteq\algebra$. Also let $\cC$ be the $*$-subalgebra of $\algebra$ generated by $\{\polyn, 1\}$.
If there exists an SOS conditional expectation $\algebra\to\cC$, then the following are equivalent:
\begin{enumerate}[\rm(i)]
    \item There exists a $*$-representation $\pi : \algebra \rightarrow \cB(\cH) $ and vector $\psi \in \cH$ with $\pi(\pol)\psi = 0$ for all $\pol \in \polyn$;
    \item[\rm(i)']
    $\zd^{\rm re,\algebra}(\polyn)\neq\varnothing$;
    \item $-1 \notin \cLI(F)_\algebra + \cLI(F)^*_\algebra + \SOS_\algebra$; 
    % \item $1 \notin \cLI(F)_\algebra + \cLI(F)^*_\algebra$;
    \item $-1 \notin \cLI(F)_\ssalg + \cLI(F)^*_\ssalg + \SOS_\ssalg$;
    \item $1 \notin \cLI(F)_\algebra + \cLI(F)^*_\algebra$;
    \item $1 \notin \cLI(F)_\ssalg + \cLI(F)^*_\ssalg$;
    \item %$1 \notin \polyalg$.
     $  1 $ is not in the (non-unital) $*$-subalgebra $\cC'$ generated by $F$;
\item $1\notin \cLI(F)_\algebra$;
\item $1 \notin \cLI(F)_\ssalg$.
\end{enumerate}
\end{thm}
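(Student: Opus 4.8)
The plan is to prove a cycle of implications linking all nine conditions, leaning heavily on the machinery already assembled. First I would dispatch the easy equivalences. Conditions (i) and (i)$'$ are equivalent by the definition of $\zd^{\rm re,\algebra}$, exactly as in \Cref{thm:AB}. The equivalence (i) $\Leftrightarrow$ (iii) is an instance of \Cref{thm:AB}: since $\SOS_\algebra$ is archimedean and $\cLI(F)_\algebra$ is a left ideal, having a $*$-representation with a joint zero vector for $F$ is the same as having one for $\cLI(F)_\algebra$ (every element of $\cLI(F)_\algebra$ kills $\psi$ once the generators do), and \Cref{thm:AB} says this is equivalent to $-1\notin\SOS_\algebra+\cLI(F)_\algebra+\cLI(F)_\algebra^*$. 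The equivalence (iii) $\Leftrightarrow$ (iv) is \Cref{cor:SOS_CE_LI} applied with $\algebra$, $\ssalg=\cC$, and the SOS conditional expectation hypothesized in the statement (note $F\subseteq\cC$ by construction). This already closes the loop among (i), (i)$'$, (iii), (iv).

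Next I would connect the ``$-1$'' conditions to the ``$1$'' conditions. The implications (v) $\Rightarrow$ (iii) and (vi) $\Rightarrow$ (iv) are trivial since $\cLI+\cLI^*\subseteq\cLI+\cLI^*+\SOS$. For the reverse direction inside $\cC$, I would invoke part (2) of \Cref{lem:small_subalgebra_SOS}: if $1$ is not in the non-unital $*$-subalgebra $\cC'$ generated by $F$, then $-1\notin\cLI(F)_\cC+\cLI(F)_\cC^*+\SOS_\cC$, which is (iv). Conversely, the contrapositive of that lemma's proof shows that if $1\in\cC'$ then $-1\in\cLI(F)_\cC+\cLI(F)_\cC^*+\SOS_\cC$; so (iv) forces $1\notin\cC'$, giving (iv) $\Rightarrow$ (vii). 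Then (vii) $\Rightarrow$ (ix): since $F=F^*$, part (1) of \Cref{lem:small_subalgebra_SOS} gives $\cC'=\cLI(F)_\cC$, so ``$1\notin\cC'$'' literally says ``$1\notin\cLI(F)_\cC$'', which is (ix); moreover $\cC'=\cLI(F)_\cC$ is $*$-closed, so it also equals $\cLI(F)_\cC+\cLI(F)_\cC^*$, yielding (vi). Thus (iv), (vi), (vii), (ix) are all equivalent to each other and to the first group.

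Finally I would handle the ambient-algebra versions (v) and (viii). For (viii) $\Rightarrow$ (ix) and (v) $\Rightarrow$ (vi): apply the conditional expectation $\cexp$, using $\cexp(\cLI(F)_\algebra)=\cLI(F)_\cC$ (proved in \Cref{cor:SOS_CE_LI}) and $\cexp(1)=1$; if $1\in\cLI(F)_\algebra$ then $1=\cexp(1)\in\cLI(F)_\cC$, and similarly for the symmetrized version. For the reverse, $\cLI(F)_\cC\subseteq\cLI(F)_\algebra$ immediately gives (ix) $\Rightarrow$ (viii) and (vi) $\Rightarrow$ (v). This closes everything: (i) $\Leftrightarrow$ (i)$'$ $\Leftrightarrow$ (iii) $\Leftrightarrow$ (iv) $\Leftrightarrow$ (v) $\Leftrightarrow$ (vi) $\Leftrightarrow$ (vii) $\Leftrightarrow$ (viii) $\Leftrightarrow$ (ix). I expect the only real subtlety — the ``main obstacle'' — to be bookkeeping around the non-unital subalgebra $\cC'$ versus the unital $\cC$, and making sure the archimedean hypothesis is genuinely used (it enters through \Cref{thm:AB}, not through the purely algebraic \Cref{lem:small_subalgebra_SOS}); everything else is routine assembly of previously established lemmas.
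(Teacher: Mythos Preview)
Your approach is essentially the paper's: the same three ingredients (\Cref{thm:AB}, \Cref{cor:SOS_CE_LI}, \Cref{lem:small_subalgebra_SOS}) assembled into the same cycle. Two slips worth fixing. First, your item labels are off by one throughout: the manual label \texttt{(i)'} does not advance the enumerate counter, so the conditions are (i), (i)$'$, (ii)--(viii), not (iii)--(ix). Second, your claim that ``(v) $\Rightarrow$ (iii) and (vi) $\Rightarrow$ (iv) are trivial since $\cLI+\cLI^*\subseteq\cLI+\cLI^*+\SOS$'' is backwards: from $1\notin\cLI+\cLI^*$ you cannot conclude $-1\notin\cLI+\cLI^*+\SOS$ by set inclusion alone---that nontrivial direction is the content of the theorem and is what \Cref{lem:small_subalgebra_SOS} supplies. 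Fortunately your cycle closes without ever using that claim (you go through (iv)~$\Leftrightarrow$~(vii) in your numbering), so the error is harmless; just delete the sentence. A similar inversion occurs in your last paragraph, where the conditional-expectation argument actually proves the direction you then call ``the reverse'' and vice versa, but since you give both directions this too is cosmetic.
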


\begin{proof}
Items (i), (i)' are equivalent by definition.
We have (i) $\Leftrightarrow$ (ii) by the real \NullSS  \Cref{thm:AB}, (ii) $\Rightarrow$ (iii) by set inclusion and (iii) $\Rightarrow$ (ii) by \Cref{cor:SOS_CE_LI}.

Next, (ii) $\Rightarrow$ (iv)  $\Rightarrow$ (v) again by set inclusion. The equivalence (vi) $\Leftrightarrow$ (viii) follows from 
$\cC'=\cLI(F)_\ssalg$, the implications (iv) $\Rightarrow$ (vii) $\Rightarrow$ (viii) 
are obvious, and  (vi) $\Rightarrow$ (iii) by \Cref{lem:small_subalgebra_SOS}.
\end{proof}

\def\tor{{W}}
  \def\bbT{\mathbb T}

\ssec{The Group Algebra Simplification}
\label{ssec:Toric_Ideal_GA_Simplification}

Obtaining SOS projections or conditional expectations is challenging. 
We now give a class of tractable situations and a \NullSS
appropriate for many games.
The setting is a  group algebra 
$\mathbb{C}[G]$ and $\polyn\subseteq\algebra$, later to be chosen 
a set of binomials.
Here $G$ is a discrete group. 
 
To an  element $p\in \C[G]$,
$$
p= \sum_j^\gamma s_j g_j
\quad 
$$
one  associates 
the set
$\supp(p)=\{ g_1, \dots,  g_j\}$ of all group elements 
 appearing in $p$, called the \df{support of $p$.}
Similarly, 
$\mon(p)=\{s_1 g_1, \dots, s_\gamma
g_\gamma\}$
is the set of monomials appearing in $p$.
These notions are naturally extended to subsets $F\subseteq\C[G]$, e.g.
$\supp(F)=\bigcup_{p\in F}\supp(p)$.

We need the next theorem and string of lemmas
to obtain our \NullSS aimed at a large class of games.
Our first observation is that 
there is a natural SOS conditional expectation mapping from $\C[G]$ onto 
a type of envelope of a given set $F$.

\begin{thm} \label{thm:cond_exp_for_toric_ideals}
Consider a group algebra $\mathbb{C}[G]$ and let $\polyn\subseteq\C[G]$.
Then there is an SOS conditional expectation $\cexp:\C[G]\to
\CC [\frak G( \supp(F))]$.
\end{thm}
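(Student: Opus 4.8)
The plan is to work with the subgroup $\mathfrak{G}(\supp(F))$ of $G$ generated by all group elements appearing in elements of $F$, and to define $\cexp : \C[G] \to \CC[\mathfrak{G}(\supp(F))]$ as the ``coordinate projection'' that kills every group element not lying in $\mathfrak{G}(\supp(F))$. Concretely, writing $p = \sum_{g \in G} s_g g$ (finite sum), set $\cexp(p) = \sum_{g \in \mathfrak{G}(\supp(F))} s_g g$. This is clearly linear, idempotent, and is the identity on $\CC[\mathfrak{G}(\supp(F))]$. So the content of the theorem is purely in verifying the three structural properties: $*$-compatibility, the conditional expectation (bimodule) axioms, and the SOS property.

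**The routine verifications.** First, $\cexp(p^*) = \cexp(p)^*$: since $(s_g g)^* = \bar s_g g^{-1}$ and a subgroup is closed under inversion, $g \in \mathfrak{G}(\supp(F))$ iff $g^{-1} \in \mathfrak{G}(\supp(F))$, so conjugation commutes with the truncation. Next, the bimodule property: by the remark after the definition of conditional expectation it suffices to check the one-sided version $\cexp(hp) = h\,\cexp(p)$ for $h \in \CC[\mathfrak{G}(\supp(F))]$, and by linearity it is enough to take $h = g_0$ a single group element of $\mathfrak{G}(\supp(F))$. But left multiplication by $g_0$ permutes $G$ by $g \mapsto g_0 g$, and since $g_0 \in \mathfrak{G}(\supp(F))$ this permutation preserves both $\mathfrak{G}(\supp(F))$ and its complement; hence it commutes with the truncation. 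Also $\cexp(1) = 1$ since $1 \in \mathfrak{G}(\supp(F))$. The one remaining piece is the SOS property, which I'd split: $\cexp(\SOS_{\C[G]}) \subseteq \SOS_{\C[G]} \cap \CC[\mathfrak{G}(\supp(F))]$ is the conditional-expectation clause (c), while the SOS-projection clause asks for the stronger $\cexp(\SOS_{\C[G]}) \subseteq \SOS_{\CC[\mathfrak{G}(\supp(F))]}$.

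**The SOS property — the main point.** This is where the real argument lies. Take $p = \sum_j a_j^* a_j$ with $a_j \in \C[G]$; I want to show $\cexp(p) \in \SOS_{\CC[\mathfrak{G}(\supp(F))]}$. The key structural fact is that the subgroup $\mathfrak{G} := \mathfrak{G}(\supp(F))$ partitions $G$ into left cosets $\{tG_0\}$ — wait, one should use the coset decomposition on the correct side. Fix a set of representatives for the \emph{right} cosets $\mathfrak{G} t$, so $G = \bigsqcup_t \mathfrak{G} t$. Decompose each $a_j = \sum_t a_{j,t}$ where $a_{j,t}$ is the part of $a_j$ supported on $\mathfrak{G} t$; write $a_{j,t} = b_{j,t} t$ with $b_{j,t} \in \CC[\mathfrak{G}]$. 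Then $a_j^* a_j = \sum_{t, t'} t^{-1} b_{j,t}^* b_{j,t'} t'$, and the term $t^{-1} b_{j,t}^* b_{j,t'} t'$ is supported inside $\mathfrak{G}$ precisely when $\mathfrak{G} t = \mathfrak{G} t'$, i.e. $t = t'$. Hence $\cexp(a_j^* a_j) = \sum_t t^{-1} b_{j,t}^* b_{j,t} t = \sum_t (b_{j,t} t)^*(b_{j,t} t)$; but $b_{j,t} t$ need not lie in $\CC[\mathfrak{G}]$. The fix: $t^{-1} b_{j,t}^* b_{j,t} t$ — conjugation by $t$ is an automorphism of $\C[G]$, and if $b_{j,t}^* b_{j,t}$ happens to be supported in $\mathfrak{G}$, conjugation by $t$ moves it to $t^{-1}\mathfrak{G} t$, which is \emph{not} $\mathfrak{G}$ in general. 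So I should instead carry out the decomposition more carefully: observe that $\cexp(a_j^* a_j) = \sum_t \pi_{\mathfrak{G}}\big(t^{-1} b_{j,t}^* b_{j,t} t\big)$ where only the terms with the conjugate landing back in $\mathfrak{G}$ survive — and in fact since $b_{j,t}^* b_{j,t}$ itself already lies in $\CC[\mathfrak{G}]$, the surviving contribution to $\cexp$ is exactly those group elements $g$ with $t^{-1} g t \in \mathfrak{G}$ \emph{and} $g \in \mathfrak{G}$. The clean way, which I'd actually write up: group the $a_j$ by \emph{left} cosets $t\mathfrak{G}$ instead, $a_j = \sum_t t\, c_{j,t}$ with $c_{j,t} \in \CC[\mathfrak{G}]$; then $a_j^* a_j = \sum_{t,t'} c_{j,t}^* (t^{-1} t') c_{j,t'}$, and $t^{-1}t' \in \mathfrak{G} \iff t\mathfrak{G} = t'\mathfrak{G} \iff t = t'$, giving $\cexp(a_j^* a_j) = \sum_t c_{j,t}^* c_{j,t} \in \SOS_{\CC[\mathfrak{G}]}$ directly, with no stray conjugation. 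Summing over $j$ finishes it.

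**Expected obstacle.** The only genuine subtlety — and the step I'd be most careful about — is making sure the coset side is chosen so that the cross-terms $c_{j,t}^*(t^{-1}t')c_{j,t'}$ are supported \emph{entirely outside} $\mathfrak{G}$ when $t \neq t'$ (not merely "mostly outside"), so that $\cexp$ annihilates them cleanly; this works because $c_{j,t}, c_{j,t'} \in \CC[\mathfrak{G}]$ forces the whole product into the single coset $\mathfrak{G}(t^{-1}t')\mathfrak{G} \subseteq G \setminus \mathfrak{G}$ when $t^{-1}t' \notin \mathfrak{G}$. Wait — $\mathfrak{G}(t^{-1}t')\mathfrak{G}$ is a double coset, not disjoint from $\mathfrak{G}$ in general unless... actually $\mathfrak{G} x \mathfrak{G} \cap \mathfrak{G} = \varnothing$ iff $x \notin \mathfrak{G}$, which holds since $\mathfrak{G} x \mathfrak{G} \ni g \in \mathfrak{G}$ would give $x \in \mathfrak{G}^{-1}\mathfrak{G}\mathfrak{G} = \mathfrak{G}$. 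Good — so the disjointness is exactly right and the computation goes through. Everything else is bookkeeping.
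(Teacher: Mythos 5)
Your proof is correct and uses the same map as the paper, namely the coordinate truncation $\sum_g a_g g \mapsto \sum_{g\in\mathfrak G(\supp(F))} a_g g$; the paper simply defers the ``routine calculation'' of the SOS property to \cite[Example~5, Proposition~4]{savchuk2013unbounded}, whereas you carry it out explicitly. Your left-coset decomposition $a_j=\sum_t t\,c_{j,t}$ with $c_{j,t}\in\CC[\mathfrak G(\supp(F))]$, together with the double-coset disjointness $\mathfrak G\, x\, \mathfrak G\cap\mathfrak G=\varnothing$ for $x\notin\mathfrak G$, is exactly the right way to see that $\cexp$ kills the cross-terms and sends $\sum_j a_j^*a_j$ to $\sum_{j,t}c_{j,t}^*c_{j,t}\in\SOS_{\CC[\mathfrak G(\supp(F))]}$.
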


\begin{proof}
Since $\frak G( \supp(F))$ is by definition a subgroup of $G$, this follows
 from
\cite{savchuk2013unbounded}.
For context, the map $P$ is defined by
\[
\sum_{g\in G} a_g g \mapsto \sum_{g\in \frak G( \supp(F))} a_g g.
\]
It is an SOS conditional expectation by routine calculation; see
\cite[Example 5, Proposition 4]{savchuk2013unbounded} for details.
\end{proof}

\sssec{Relating the Subalgebra and Subgroup Membership Problems}
\label{sssec:Subalgebra-Sugbroup_membership_problem}

In this section we restrict 
to sets $F$ of binomials 
in a group algebra $\C[G]$
of the form $f=r -1$
where each $r$ is a  monomial, $r = \beta g$ for some
 $g \in G$ and $\beta \in \CC$.
Our starting point is the following lemma, which gives a standard form for elements in $\Alg_{\C[G]}(F)$.

\begin{lem}
\label{lem:polyalg_standard_form1}
Consider a group algebra $\mathbb{C}[G]$ and let $\polyn$ be a set of elements in $\C[G]$ with each $\pol \in \polyn$ of the form $\pol = \monom - 1$ with $\monom$ a monomial
and let $H
%= \frak G(\hat F)
$
denote the subgroup of the invertible elements $\C[G]^{-1}$ in $\C[G]$
generated by 
$\mon(F)$.
Then any  $p \in \Alg_{\C[G]}(F)$ can be written in the form 
\begin{align}\label{eq:target0}
    p = \sum_{u,v\in H} \beta_{u,v} (u - v)
\end{align}
with $\beta_{u,v} \in \mathbb{C}$. 
\end{lem}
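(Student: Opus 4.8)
The plan is to show that $\Alg_{\C[G]}(F)$ equals the linear span of the differences $u-v$ with $u,v\in H$, by an induction on the length of products of generators. First I would record two easy closure properties of the span
\[
\mathcal{D}:=\Big\{\sum_{u,v\in H}\beta_{u,v}(u-v)\ :\ \beta_{u,v}\in\C,\ \text{finite sums}\Big\}.
\]
It is obviously a vector subspace of $\C[G]$. The key point is that $\mathcal{D}$ is closed under left and right multiplication by elements of $H$: if $w\in H$ then $w(u-v)=wu-wv$ with $wu,wv\in H$, and similarly $(u-v)w=uw-vw$; extending by linearity, $w\mathcal{D}=\mathcal{D}w=\mathcal{D}$ for each $w\in H$. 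Also note that each generator $\monom-1\in F$ lies in $\mathcal{D}$, since $\monom\in\mon(F)\subseteq H$ and $1\in H$.

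Next I would check that $\mathcal{D}$ is closed under multiplication, so that it is a subalgebra. Given two elements $d_1=\sum_{u,v}\beta_{u,v}(u-v)$ and $d_2=\sum_{s,t}\gamma_{s,t}(s-t)$ of $\mathcal{D}$, it suffices by bilinearity to treat $d_1=u-v$, $d_2=s-t$ with $u,v,s,t\in H$. Then
\[
(u-v)(s-t)=us-ut-vs+vt=(us-ut)-(vs-vt),
\]
which is a sum of two elements of $\mathcal{D}$ of the form ``$w$ times a difference'' with $w\in H$, hence lies in $\mathcal{D}$ by the previous paragraph. Therefore $\mathcal{D}$ is a (non-unital) subalgebra of $\C[G]$ containing $F$, and since $\Alg_{\C[G]}(F)$ is by definition the smallest such subalgebra, $\Alg_{\C[G]}(F)\subseteq\mathcal{D}$.

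For the reverse inclusion $\mathcal{D}\subseteq\Alg_{\C[G]}(F)$, it is enough to show $u-v\in\Alg_{\C[G]}(F)$ for all $u,v\in H$. Writing $u=v\cdot(v^{-1}u)$ and factoring, it suffices to show $w-1\in\Alg_{\C[G]}(F)$ for every $w\in H$ (then $u-v=v\,(v^{-1}u-1)$, and one still must see $v$ times an algebra element stays in the algebra — see below). Every $w\in H$ is a finite product $w=\monom_1^{\epsilon_1}\cdots\monom_N^{\epsilon_N}$ of monomials $\monom_j\in\mon(F)$ or their inverses; I would induct on $N$ using the telescoping identity
\[
\monom_1\cdots\monom_N-1=(\monom_1-1)\big(\monom_2\cdots\monom_N\big)+\big(\monom_2\cdots\monom_N-1\big),
\]
together with the companion identity $\monom^{-1}-1=-\monom^{-1}(\monom-1)$ to handle inverse generators; each $\monom_j-1\in F\subseteq\Alg_{\C[G]}(F)$ and the trailing factors lie in $H$, so one reduces the length by one at each step. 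The genuine subtlety here, and the step I expect to be the main obstacle, is that $\Alg_{\C[G]}(F)$ is a priori only closed under multiplication by its own elements, not by arbitrary monomials of $H$; so to push these telescoping arguments through one must simultaneously maintain the invariant that $\Alg_{\C[G]}(F)$ absorbs left/right multiplication by each $\monom\in\mon(F)^{\pm1}$. This follows because $\monom\cdot x=(\monom-1)x+x$ and $(\monom-1)x,\ x\in\Alg_{\C[G]}(F)$, with the analogous identity for $\monom^{-1}$ using $\monom^{-1}-1\in\Alg_{\C[G]}(F)$; iterating gives absorption by all of $H$. With that invariant in hand both inclusions close up, yielding $\Alg_{\C[G]}(F)=\mathcal{D}$, which is exactly \eqref{eq:target0}.
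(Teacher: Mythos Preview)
Your argument for the inclusion $\Alg_{\C[G]}(F)\subseteq\mathcal D$ is correct and is essentially the paper's proof: both rest on the identity $(u-v)(u'-v')=(uu'-uv')+(vv'-vu')$, which shows the span of differences from $H$ is closed under products; the paper phrases this as an induction on product length, while you invoke minimality of the generated subalgebra, but the content is identical.

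However, the lemma only asserts this one inclusion, not equality, and your attempt at the reverse inclusion $\mathcal D\subseteq\Alg_{\C[G]}(F)$ contains a genuine error. You assert that $\monom^{-1}-1\in\Alg_{\C[G]}(F)$ for $\monom\in\mon(F)$, but $\Alg_{\C[G]}(F)$ here is the \emph{non-$*$} (and non-unital) subalgebra generated by $F$, so there is no mechanism producing inverses. Concretely, take $G=\mathbb Z$ with generator $t$ and $F=\{t-1\}$. Then $H=\langle t\rangle$ contains $t^{-1}$, so $t^{-1}-1\in\mathcal D$. But $\Alg_{\C[G]}(F)$ is the linear span of $(t-1)^k$ for $k\geq1$, hence consists of ordinary polynomials in $t$ (no negative powers) vanishing at $t=1$; in particular $t^{-1}-1\notin\Alg_{\C[G]}(F)$. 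Thus the reverse inclusion is false in general, and the unjustified step ``$\monom^{-1}-1\in\Alg_{\C[G]}(F)$'' is exactly where your argument breaks. (The $*$-algebra version, where one adjoins $F^*$ and correspondingly enlarges $H$, is the content of the next lemma in the paper.) Simply drop the second half of your proof; what remains already establishes the lemma.
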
 

\begin{proof}
We induct on the length of products needed to express
$p \in \Alg_{\C[G]}(F)$. If $p$ is a linear combination of
elements of $F$, say
$
p=\sum_j \beta_j(r_j-1),
$
this is immediate since $r_j,1\in H$.

As the set of expressions of the form given in \Cref{eq:target0} 
is closed under linear combinations, 
for the induction step
it suffices to consider the product of
$u-v,u'-v'$ for $u,u',v,v'\in H$. 
But the product
\[
(u-v)(u'-v') = (uu'-uv') + (vv'-vu')
\]
is clearly of the form \Cref{eq:target} since
$uu',uv',vv',vu'\in H$, thus finishing the proof.
\end{proof}

\begin{lem}
\label{lem:polyalg_standard_form}
Consider a group algebra $\mathbb{C}[G]$ and let $\polyn$ be a set of elements in $\C[G]$ with each $\pol \in \polyn$ of the form $\pol = \monom - 1$ with $\monom$ a monomial
and let $\monomgp 
%= \frak G(\hat F)
$
denote the subgroup of $\C[G]^{-1}$
generated by 
$\mon(F) \cup\mon(F^*)$
Then any element $p$ in the $*$-subalgebra generated by $F$, 
$p \in \Alg^*_{\C[G]}(F)$, can be written in the form 
\begin{align}\label{eq:target}
    p = \sum_{u,v\in\monomgp} \beta_{u,v} (u - v)
\end{align}
with $\beta_{u,v} \in \mathbb{C}$. 
\end{lem}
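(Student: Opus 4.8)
The statement is a direct extension of \Cref{lem:polyalg_standard_form1} from the subalgebra generated by $F$ to the $*$-subalgebra generated by $F$, and the proof should parallel that of \Cref{lem:polyalg_standard_form1} very closely. The plan is to observe that $\Alg^*_{\C[G]}(F)$ is generated as an algebra by $F \cup F^*$, and that each element of $F^*$ has the same shape as an element of $F$: if $\pol = \monom - 1 \in F$ with $\monom = \beta g$, then $\pol^* = \monom^* - 1 = \bar\beta g^{-1} - 1$, which is again a monomial minus $1$. Therefore $\Alg^*_{\C[G]}(F) = \Alg_{\C[G]}(F \cup F^*)$, and $\monomgp$ is exactly the subgroup of $\C[G]^{-1}$ generated by $\mon(F \cup F^*)$. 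So one can invoke \Cref{lem:polyalg_standard_form1} verbatim with $F$ replaced by $F \cup F^*$, and the conclusion is precisely \Cref{eq:target} with the group $H$ there equal to $\monomgp$ here.

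If instead one wants a self-contained induction (to keep the section readable), I would repeat the argument of \Cref{lem:polyalg_standard_form1}: induct on the length of products needed to express $p \in \Alg^*_{\C[G]}(F)$. The base case is that $p$ is a linear combination of elements of $F \cup F^*$, each of which is $\monom - 1$ for a monomial $\monom$; since $\monom, 1 \in \monomgp$, such $p$ is already of the form \Cref{eq:target}. The set of expressions of the form \Cref{eq:target} is closed under $\C$-linear combinations, so for the inductive step it suffices to handle a product $(u-v)(u'-v')$ with $u,v,u',v' \in \monomgp$; expanding,
\begin{align}
(u-v)(u'-v') = (uu' - uv') + (vv' - vu'),
\end{align}
and since $\monomgp$ is a group, $uu', uv', vv', vu' \in \monomgp$, so this is again of the form \Cref{eq:target}. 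This closes the induction.

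There is essentially no obstacle here; the only point requiring a moment's care is the bookkeeping that $\mon(F^*) = \mon(F)^*$ and hence that the subgroup $\monomgp$ generated by $\mon(F) \cup \mon(F^*)$ is $*$-closed, which is what makes the product expansion stay inside $\monomgp$ and also what makes $\Alg^*_{\C[G]}(F)$ (rather than merely $\Alg_{\C[G]}(F)$) the natural object on the left-hand side. Given this, the proof is a one-line reduction to \Cref{lem:polyalg_standard_form1} or a three-line repetition of its induction.
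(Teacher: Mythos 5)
Your proposal is correct and matches the paper's approach: the paper simply declares the result ``immediate from Lemma~\ref{lem:polyalg_standard_form1},'' and your reduction --- observing that $\Alg^*_{\C[G]}(F) = \Alg_{\C[G]}(F\cup F^*)$, that $F^*$ again consists of elements of the form (monomial)$\,- 1$, and that the subgroup generated by $\mon(F)\cup\mon(F^*)$ is exactly the $H$ of that lemma --- is precisely the reasoning that justifies the paper's one-word proof. The self-contained induction you supply as an alternative is just a re-run of the proof of Lemma~\ref{lem:polyalg_standard_form1} and is therefore also correct, though redundant given the reduction.
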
 

\begin{proof}
Immediate from  \Cref{lem:polyalg_standard_form1}.
\end{proof}

\begin{lem} \label{lem:subalgebra_membership_subgroup_membership}
Assume the hypotheses of \Cref{lem:polyalg_standard_form}
are in force.
Then 
\begin{align}
    1 \in \Alg^*_{\C[G]}(F) \quad \Leftrightarrow \quad \monomgp \cap \mathbb{C} \varsupsetneq \{1\}.
\end{align}
\end{lem}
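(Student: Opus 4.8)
The plan is to establish the two implications directly, after first disposing of a degenerate case. Write each clause of $F$ as $\monom=\beta g$ with $\beta\in\C\setminus\{0\}$ and $g\in G$, so that $\monomgp$ is the subgroup of $\C[G]^{-1}$ generated by the monomials $\monom$ and $\monom^{*}=\bar\beta g^{-1}$ as $\monom$ ranges over the clauses. If some clause has $|\beta|\neq1$, then \emph{both} conditions of the lemma hold and there is nothing to prove: the product $\monom\monom^{*}=|\beta|^{2}\cdot1$ lies in $\monomgp\cap\C$ and is distinct from $1$, while $(\monom-1)(\monom^{*}-1)+(\monom-1)+(\monom^{*}-1)=(|\beta|^{2}-1)\cdot1$ lies in $\Alg^{*}_{\C[G]}(F)$, so dividing by the nonzero scalar $|\beta|^{2}-1$ puts $1\in\Alg^{*}_{\C[G]}(F)$. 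Hence from now on every clause has $|\beta|=1$; consequently $\monom^{-1}=\monom^{*}$ for each clause, $\monomgp$ equals the \emph{submonoid} of $\C[G]^{-1}$ generated by the $\monom$'s and $\monom^{*}$'s, and the ``group part'' and ``coefficient'' maps $\eta\colon\monomgp\to G$, $ch\mapsto h$, and $\chi\colon\monomgp\to\C^{\times}$, $ch\mapsto c$, are group homomorphisms with $|\chi(w)|=1$ for all $w\in\monomgp$ and with $\ker\eta=\monomgp\cap\C$.

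The single computational ingredient needed is: $w-1\in\Alg^{*}_{\C[G]}(F)$ for every $w\in\monomgp$. This is proved by induction on the word length of $w$ in the monoid generators $\{\monom,\monom^{*}\}$. It holds for $w=1$ (the element is $0$), for $w=\monom$ (the element is in $F$), and for $w=\monom^{*}$ (the element is $(\monom-1)^{*}$); and the inductive step $w\mapsto w\monom$ follows from the identity $w\monom-1=(w-1)(\monom-1)+(w-1)+(\monom-1)$, whose right-hand side is manifestly in $\Alg^{*}_{\C[G]}(F)$. Combining this with \Cref{lem:polyalg_standard_form} --- which says every $p\in\Alg^{*}_{\C[G]}(F)$ has the form $\sum_{u,v\in\monomgp}\beta_{u,v}(u-v)=\sum_{u,v}\beta_{u,v}\bigl((u-1)-(v-1)\bigr)$ --- shows that $\Alg^{*}_{\C[G]}(F)$ is exactly the $\C$-span of $\{\,w-1:w\in\monomgp\,\}$.

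Given this, ($\Leftarrow$) is immediate: if $w=c\cdot1\in\monomgp$ with $c\neq1$ (note $c\neq0$, since $w$ is invertible and $|\chi(w)|=1$), then $(c-1)\cdot1=w-1\in\Alg^{*}_{\C[G]}(F)$, and dividing by $c-1\neq0$ gives $1\in\Alg^{*}_{\C[G]}(F)$. For ($\Rightarrow$) I argue contrapositively: assume $\monomgp\cap\C=\{1\}$, i.e.\ $\eta$ is injective on $\monomgp$. Put $G_{0}:=\eta(\monomgp)$, the subgroup of $G$ generated by the clause group-parts; then $\eta\colon\monomgp\to G_{0}$ is an isomorphism, and $\mu:=\chi\circ\eta^{-1}\colon G_{0}\to\C^{\times}$ is a character with $\mu(g)=\beta$ for each clause $\monom=\beta g$ (here injectivity of $\eta$ also guarantees that distinct clauses never share a group-part, so this is well defined). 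The linear extension $\tau\colon\C[G_{0}]\to\C$ of the character $h\mapsto\mu(h)^{-1}$ is an algebra homomorphism, and it kills every clause relation together with its adjoint, since $\tau(\monom)=\beta\,\mu(g)^{-1}=1$ and $\tau(\monom^{*})=\bar\beta\,\tau(g^{-1})=\bar\beta\,\mu(g)=|\beta|^{2}=1$. Therefore $\tau$ annihilates the subalgebra generated by $\{\monom-1,\monom^{*}-1\}$, which is precisely $\Alg^{*}_{\C[G]}(F)$ (note $\Alg^{*}_{\C[G]}(F)\subseteq\C[G_{0}]$, since the supports of its elements lie in $G_{0}$). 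As $\tau(1)=1\neq0$, we conclude $1\notin\Alg^{*}_{\C[G]}(F)$, completing the contrapositive.

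The one step requiring care is the construction of $\tau$ in the last paragraph: its existence is exactly the assertion that, when $\monomgp\cap\C$ is trivial, the would-be assignment ``$g\mapsto\beta^{-1}$'' on the clause generators is consistent across all of $G_{0}$ --- and this consistency is supplied precisely by the injectivity of $\eta$, equivalently by the inverse homomorphism $\eta^{-1}$. The preliminary reduction to unit-modulus coefficients is what guarantees that this injectivity hypothesis is the only relevant obstruction, since otherwise $\Alg^{*}_{\C[G]}(F)$ already contains a nonzero scalar for an entirely different (and trivial) reason.
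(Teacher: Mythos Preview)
Your proof is correct. Both proofs handle the degenerate case $|\beta|\neq1$ identically and prove $(\Leftarrow)$ via the identity $(w-1)(r-1)+(w-1)+(r-1)=wr-1$, just as the paper does. The genuine difference is in $(\Rightarrow)$: the paper writes $1=\sum_{u,v}\beta_{u,v}(u-v)$ using the standard form from \Cref{lem:polyalg_standard_form}, regroups as $\sum_{u}u\sum_{v}(\beta_{u,v}-\beta_{v,u})$, and then exploits the linear independence of the $u$'s (distinct $u\in\monomgp$ have distinct underlying group elements once $\monomgp\cap\C=\{1\}$) to reach the numerical contradiction $1=\sum_{u,v}(\beta_{u,v}-\beta_{v,u})=0$. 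You instead observe that $\monomgp\cap\C=\{1\}$ makes the ``group-part'' projection $\eta\colon\monomgp\to G$ injective, hence invertible onto its image $G_{0}$, and use the resulting character $\mu=\chi\circ\eta^{-1}$ to build an algebra homomorphism $\tau\colon\C[G_{0}]\to\C$ that annihilates $\Alg^{*}_{\C[G]}(F)$ while sending $1\mapsto1$. Your argument is a bit more conceptual---it exhibits an explicit separating functional rather than a coefficient count---and makes transparent \emph{why} the hypothesis $\monomgp\cap\C=\{1\}$ is exactly what is needed: it is precisely the condition that the assignment $g\mapsto\beta$ on clause data extends consistently to a character of $G_{0}$. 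The paper's argument, by contrast, is more self-contained and avoids introducing any auxiliary map beyond what the preceding lemma already supplies.
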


\begin{proof}
First we note that the result is trivial if there exists an element $\pol = \monom - 1 \in \polyn$ with $\monom^* \monom \neq 1$. 
That is, $\monom=\beta g$ with $|\beta|\neq1$.
 Namely,
\begin{align}
    (r-1)^*(r-1) + (r-1) + (r-1)^* = r^*r - 1 = \beta^* \beta - 1 \in \Alg^*_{\C[G]}(F)
\end{align}
and since $\beta^* \beta - 1 \neq 0$ we can conclude $ 1 \in\Alg^*_{\C[G]}(F)$. At the same time $r r^* = \beta \beta^* \in \Delta \cap \mathbb{C}$ and $\beta \beta^* \neq 1$. This proves the result in this special case. 
In what follows we can assume $\monom \monom^* = 1$ for all $\monom - 1 \in \polyn$.

$(\Leftarrow)$ First note that for all monomials $\monom$ and $\monom'$ we have 
\begin{align}
    (\monom - 1)(\monom' - 1) + (\monom - 1) + (\monom' - 1)  = \monom\monom' - 1.
\end{align}
Similarly, 
\[
(r-1)^* = r^* - 1 = r^{-1} - 1.
\]
It follows that for any $r \in \monomgp$ we have $r - 1 \in \Alg^*_{\C[G]}(F)$. Then, if $1\neq\beta\in\Delta\cap\C$, we have 
$\beta - 1 \in\Alg^*_{\C[G]}(F)$, whence $1\in\Alg^*_{\C[G]}(F)$.

$(\Rightarrow)$ 
To prove the result in the other direction we assume for contradiction that $\beta  \notin \monomgp$ for all $\beta \in \mathbb{C}\setminus\{1\}$ and that $1 \in \Alg^*_{\C[G]}(F)$. 
Then, using \Cref{lem:polyalg_standard_form}, we can write 
\begin{subequations}
\begin{align}
    1 &= \sum_{u,v\in\Delta} \beta_{u,v} (u - v)  
=\sum_{u,v} \beta_{u,v} u - \sum_{u,v} \beta_{u,v}v 
=\sum_{u,v} \beta_{u,v} u - \sum_{u,v} \beta_{v,u}u \\
    & = \sum_{u,v} u \left(\beta_{u,v} - \beta_{v,u}\right)
   = \sum_{u} u \sum_{v} \left(\beta_{u,v} - \beta_{v,u}\right)\label{eq:last}
\end{align}
\end{subequations}
where we relabeled $u$ and $v$ in the last term in the sum on the first line. 
By assumption, $\beta  \notin \monomgp$ for all $\beta \in \mathbb{C}\setminus\{1\}$. Thus the terms 
$u \sum_{v} \left(\beta_{u,v} - \beta_{v,u}\right)$ in the last equality of \Cref{eq:last}
are linearly independent since the underlying group elements of distinct $u\in\monomgp$ are distinct.
We conclude
\begin{subequations}
\begin{align}
    \sum_v (\beta_{1,v} - \beta_{v,1}) &= 1,\\
    \sum_v (\beta_{u,v} - \beta_{v,u}) &= 0
\end{align}
\end{subequations}
for all $u \neq 1$. But this is a contradiction, since 
\[
    \sum_v (\beta_{1,v} - \beta_{v,1}) + \sum_{u \neq 1} \sum_v (\beta_{u,v} - \beta_{v,u}) = \sum_{u,v} (\beta_{u,v} - \beta_{v,u}) = 0 . \qedhere
\]
\end{proof}

\sssec{NC  Left \NullSS without SOS Terms}

\def\bbT{{\mathbb T}}

Now we combine the results in \Cref{sssec:LeftIdealsNullSSSimplification,sssec:Subalgebra-Sugbroup_membership_problem} to obtain the following specialized \NullSS.

\begin{thm} \label{thm:noSOSNullSS}
Consider a group algebra $\mathbb{C}[G]$ and let $\polyn$ be a set of elements in $\C[G]$ with each $\pol \in \polyn$ of the form $\pol = \monom - 1$ with $\monom$ a monomial
and let $\monomgp 
%= \frak G(\hat F)
$
denote the subgroup of $\C[G]^{-1}$
generated by 
$\mon(F\cup F^*)$.

Then the following are equivalent:
\begin{enumerate}[\rm(i)]
    \item There exists a $*$-representation $\pi : \C[G] \rightarrow \cB(\cH) $ and vector $\psi \in \cH$ with $\pi(\pol)\psi = 0$ for all $\pol \in \polyn$; %That is, $Z_{dir}^{re}(F)$ is not empty.
    \item [\rm(i)']
    $\zd^{\rm re,\C[G]}(\polyn)\neq\varnothing$;
    \item $-1 \notin \cLI(F)_{\C[G]} + \cLI(F)^*_{\C[G]} + \SOS_{\C[G]}$;
    \item $1 \notin \cLI(F)_{\C[G]} + \cLI(F)^*_{\C[G]}$;
    \item $1 \notin \cLI(F)_{\Alg^*_{\C[G]}(F)} + \cLI(F)^*_{\Alg^*_{\C[G]}(F)}$;
    \item $1 \notin \Alg^*_{\C[G]}(F)$;
    \item $\monomgp \cap \mathbb{C} = \{1\}$. 
\end{enumerate}
Moreover, if 
each $\pol \in \polyn$ is of the form $\pol = \beta w - 1$ with $w\in G$ and $\beta\in\C$ with $|\beta|=1$, then these statements are also equivalent to
\begin{enumerate}[\rm(i)]
 \setcounter{enumi}{6}
      \item $1 \notin \cLI(F)_{\C[G]}$;
      \item $1\not\in \cLI(F)_{\Alg^*_{\C[G]}(F)} $.
\end{enumerate}
\end{thm}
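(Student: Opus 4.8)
The plan is to chain the equivalences by combining the general SOS-free \NullSS of \Cref{thm:SOS_cond_exp_nullSS} with the subgroup-membership analysis of \Cref{sssec:Subalgebra-Sugbroup_membership_problem}, using the existence of an SOS conditional expectation guaranteed by \Cref{thm:cond_exp_for_toric_ideals}. First I would note that the group algebra $\C[G]$ has archimedean $\SOS$ (as in the Example following \Cref{thm:AB}), so the hypotheses of \Cref{thm:SOS_cond_exp_nullSS} are in force provided we can produce an SOS conditional expectation onto the relevant subalgebra. The set $F$ here is $*$-closed after replacing $F$ by $F\cup F^*$ — but care is needed: the elements of $F\cup F^*$ are again binomials $r-1$ with $r$ a monomial (since $(\beta g - 1)^* = \bar\beta g^{-1} - 1$), so without loss of generality we may assume $F = F^*$ and apply \Cref{thm:SOS_cond_exp_nullSS} directly. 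The $*$-subalgebra $\cC$ generated by $\{F,1\}$ sits inside $\C[\frak G(\supp F)]$, and \Cref{thm:cond_exp_for_toric_ideals} gives an SOS conditional expectation $\C[G]\to\C[\frak G(\supp F)]$; composing with a further SOS conditional expectation onto $\cC$, or more simply just invoking \Cref{thm:SOS_cond_exp_nullSS} with $\cC$ this subalgebra, yields the equivalence of (i), (i)', (iii), (iv), (v), (vi), and of ``$1\notin\cC'$'' where $\cC'=\Alg^*_{\C[G]}(F)$ is the non-unital $*$-subalgebra generated by $F$. This already delivers all items except (vii) and the subgroup criterion.

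Next I would bring in \Cref{lem:subalgebra_membership_subgroup_membership}, which states precisely that $1\in\Alg^*_{\C[G]}(F)$ iff $\monomgp\cap\C\supsetneq\{1\}$, where $\monomgp$ is the subgroup of $\C[G]^{-1}$ generated by $\mon(F\cup F^*)$. Negating, $1\notin\Alg^*_{\C[G]}(F)$ iff $\monomgp\cap\C=\{1\}$, which is item (vii). Since (vi) of \Cref{thm:noSOSNullSS} is ``$1\notin\Alg^*_{\C[G]}(F)$'', this closes the loop: (vii) $\Leftrightarrow$ (vi), and the first block of equivalences is complete.

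For the ``moreover'' clause, the extra hypothesis is that each $f\in F$ has the form $\beta w - 1$ with $w\in G$ and $|\beta|=1$ — i.e. each monomial $r=\beta w$ is a \emph{unitary} element of $\C[G]$ ($r^*r = \bar\beta\beta\, w^{-1}w = 1$). In this situation I would show the one-sided conditions (vii') $1\notin\cLI(F)_{\C[G]}$ and (viii') $1\notin\cLI(F)_{\Alg^*_{\C[G]}(F)}$ are also equivalent to the rest. The key point is that when $r$ is unitary, $r-1 \in \cLI(F)$ forces $r^{-1}-1 = -r^{-1}(r-1) \in \cLI(F)$ as well, so the left ideal generated by $F$ already contains (a generating set for) $F^*$; more carefully, working modulo $\cLI(F)$ one has $r\equiv 1$, hence $r^{-1}\equiv 1$, hence every element of $\mon(F^*)$ and then every element of $\monomgp$ is $\equiv 1$, so $\beta\equiv 1$ for every $\beta\in\monomgp\cap\C$. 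Thus $1\in\cLI(F)_{\C[G]}$ would give $\beta - 1\in\cLI(F)_{\C[G]}$ for $\beta\ne 1$ in $\monomgp\cap\C$ — but a nonzero scalar is invertible, so this forces $\cLI(F)_{\C[G]} = \C[G]$, and tracing back this is equivalent to $\monomgp\cap\C\supsetneq\{1\}$, the negation of (vii). The same argument runs inside $\Alg^*_{\C[G]}(F)$ for (viii'). I would phrase this as: $1\in\cLI(F)$ iff $1\in\cLI(F)+\cLI(F)^*$ (one direction trivial; the other from $r^{-1}-1\in\cLI(F)$ when $r-1\in\cLi(F)$), and similarly for the subalgebra, then appeal to the already-established (iv) $\Leftrightarrow$ (v).

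The main obstacle I anticipate is the ``moreover'' step — specifically, showing that $1\in\cLI(F)$ (a \emph{left} ideal, no adjoints) already captures the full $*$-structure when the monomials are unitary. The earlier items (iv),(v) use the $*$-symmetrized ideal $\cLI(F)+\cLI(F)^*$, and one must verify that unitarity of each $r$ lets the left ideal absorb the adjoint generators; the identity $r^{-1}-1 = -r^{-1}(r-1)$ is the crux, and one should check it propagates correctly to products so that all of $\mon(F^*)$ — not just individual inverses — lies in the appropriate left ideal. Everything else is assembling previously proved lemmas, which is routine.
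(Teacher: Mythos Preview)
Your approach is essentially the paper's, but the ``without loss of generality $F = F^*$'' step is a genuine gap as written. Replacing $F$ by $\tilde F := F \cup F^*$ changes the left ideal $\cLI(F)$ and hence conditions (iii), (iv), (vii), (viii); you must argue these are invariant under the replacement before invoking \Cref{thm:SOS_cond_exp_nullSS}. In fact they are: if some $|\beta|\neq 1$ then $r^*(r-1)+(r-1)^* = |\beta|^2-1$ is a nonzero scalar in $\cLI(F)+\cLI(F)^*$ and all conditions fail; if every $|\beta|=1$ then your own identity $f^* = r^{-1}-1 = -r^{-1}(r-1)\in\cLI(F)$ gives $F^*\subseteq\cLI(F)$, whence $\cLI(\tilde F)=\cLI(F)$. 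You invoke this identity only in the ``moreover'' discussion, not realizing it is already needed for the main block. The paper sidesteps the WLOG entirely and instead proves (iv) $\Rightarrow$ (v) directly by the same manipulation: given $1 = p_1 f_1 + p_2 f_2^*$, it writes $f_2^* = -f_2^*f_2 - f_2 \in \cLI(F)$ when $|\beta_2|=1$ (and disposes of $|\beta_2|\neq1$ separately). So the obstacle you flag for the ``moreover'' is precisely the obstacle in the main argument too.

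Separately, your uncertainty about the SOS conditional expectation is misplaced: the unital $*$-subalgebra $\cC$ generated by $F$ \emph{equals} $\C[\frak G(\supp F)]$ (from $\beta g - 1$ and $1$ one recovers $g$ by scalar multiplication, and conversely $F\subseteq\C[\frak G(\supp F)]$), so \Cref{thm:cond_exp_for_toric_ideals} already lands on $\cC$ --- no ``further'' conditional expectation is needed. The paper notes this equality explicitly when it invokes \Cref{lem:small_subalgebra_SOS} and \Cref{cor:SOS_CE_LI}.
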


\begin{proof}
We have 
(i) $\Leftrightarrow$ (i)' by definition, 
(i) $\Leftrightarrow$ (ii) by the real \NullSS  \Cref{thm:AB}, and (ii) $\Rightarrow$ (iii) $\Rightarrow$ (iv)
by set inclusion.

To show (iv) $\Rightarrow$ (v), assume $1 \in \Alg^*_{\C[G]}(F)$,
i.e.,
\beq\label{eq:salvage1}
1= p_1 f_1 + p_2 f_2^* 
\eeq
for some $f_j\in F$ and $p_i\in\Alg^*_{\C[G]}(F\cup\{1\}) $.
Write $f_2=\beta_2 w_2-1$ for $\beta_2\in\C$ and $w_2\in G$. Then
$f_2^*=\beta_2^* w_2^*-1$, and
\beq\label{eq:salvage2}
f_2^*f_2+f_2+f_2^* = |\beta_2|^2-1.
\eeq
If $|\beta_2|^2\neq1$, then \Cref{eq:salvage2} immediately implies
\[
1 \in \cLI(F)_{\Alg^*_{\C[G]}(F)} + \cLI(F)^*_{\Alg^*_{\C[G]}(F)}.
\]
If $|\beta_2|^2=1$, then from \Cref{eq:salvage2} we deduce
\beq\label{eq:salvage3}
f_2^* = -f_2^*f_2-f_2 \in \cLI(F)_{\Alg^*_{\C[G]}(F)},
\eeq
whence \Cref{eq:salvage1} yields
\[
1= p_1 f_1 - p_2 f_2^* f_2- p_2 f_2 \in  \cLI(F)_{\Alg^*_{\C[G]}(F)}
\subseteq \cLI(F)_{\Alg^*_{\C[G]}(F)} + \cLI(F)^*_{\Alg^*_{\C[G]}(F)},
\]
as desired.

Next, items (v) and (vi) are equivalent by \Cref{lem:subalgebra_membership_subgroup_membership}, and
(v) $\Rightarrow$ (ii) by \Cref{lem:small_subalgebra_SOS} 
and \Cref{cor:SOS_CE_LI}.
Here we use
that an SOS conditional expectation mapping $\C[G] \to \Alg^*_{\C[G]}(F\cup\{1\})=\CC [\frak G( \supp(F))]$ exists by \Cref{thm:cond_exp_for_toric_ideals}. 

Finally, we tackle the moreover statement.
Implications (iii) $\Rightarrow$ (vii) $\Rightarrow$ (viii) are obvious.
To conclude we prove (viii) $\Rightarrow$ (iv). 
Thus assume $1 \in \cLI(F)_{\Alg^*_{\C[G]}(F)} + \cLI(F)^*_{\Alg^*_{\C[G]}(F)}$, i.e.,
\beq\label{eq:2salvage1}
1= \sum p_i f_i + \sum g_j^* q_j
\eeq
for some $p_i,q_j\in\Alg^*_{\C[G]}(F\cup\{1\})$ and $f_i,g_j\in F$.
Write $q_j=q_{j1} f_{j1} + q_{j2} g_{j2}^*$ for some
$q_{ji}\in\Alg^*_{\C[G]}(F\cup\{1\})$ and $f_{j1},g_{j2}\in F$.
As in \Cref{eq:salvage3} we write
\beq\label{eq:2salvage2}
g_{j2}^* = -g_{j2}^*g_{j2}-g_{j2} \in \cLI(F)_{\Alg^*_{\C[G]}(F)}.
\eeq
Using \Cref{eq:2salvage2} in \Cref{eq:2salvage1}
leads to
\[
\begin{split}
1 & =\sum p_i f_i + \sum g_j^* q_j=
\sum p_i f_i + \sum g_j^* (q_{j1} f_{j1} + q_{j2} g_{j2}^*) \\
& = 
\sum p_i f_i 
+ \sum g_j^* q_{j1} f_{j1} 
- \sum g_j^* q_{j2}g_{j2}^*g_{j2}
- \sum g_j^* q_{j2}g_{j2} \\
& \in 
\cLI(F)_{\Alg^*_{\C[G]}(F)},
\end{split}
\]
as desired.
\end{proof}

\ssec{\NullSSs for Perfect \TGames}
\label{sec:unitaryGames}

We now apply the simplified \NullSS of this section to nonlocal games. We first recall the definition of \tgames introduced in \Cref{sssec:tgames}. 

\begin{defn}[repeated]
A game $\game$ is called a \tgame if there exists a group $G$ with $\uGA \cong \mathbb{C}[G]$ and $\game$ is determined by a set of elements 
\begin{align}
    \dset = \{\beta_i g_i - 1\}
\end{align}
with each $\beta_i \in \mathbb{C}$ and $g_i \in G$. In this case we say $\game$ is \tdby the set $\dset$ and call the elements $\beta_i g_i$ clauses of $\dset$. 
\end{defn}

\iffalse
%ADAMOUT
\Ugames 
are described in terms of 
a set of monomials $$\mathscr H\subseteq \mathbb C \cdot G \subseteq \uGA=\C[G],$$ where $G$ is a group -- such as the universal game group
$G$ in \cref{sssec:uGA}.
Thus \ugames are games which have a perfect commuting operator strategy $(\rep,\state)$ iff a $*$-representation 
$\rep$ of $\uGA$ maps elements $h\in\mathscr H$
to
operators which fix the state $\state$, i.e., $(\rep,\state)$ defines a perfect commuting operator strategy for $\game$ iff 
\begin{align}
    \rep(h)\state = \state \quad \forall \, h \in \mathscr H.
\end{align}
We call the elements of $\mathscr H$ clauses of the game $\game$. 
%ADAMIN
\fi

We let the set $\clauseset$ denote all the clauses of $\dset$. Now the  
following characterization of \ugames with perfect commuting operator strategies
is a quick consequence of our \NullSS 
\Cref{thm:noSOSNullSS}.
%and \Cref{thm:perfect_games_game_polynomial}.

\begin{thm} \label{thm:perfect_unitary_games}
Let $\game$ be a game which is \tdby a set $\dset = \clauseset - 1$, and let $\clausegp$ be the group of elements in $\uGA$ generated by $\clauseset \cup \clauseset^*$. Then $\game$ has a perfect commuting operator strategy iff the following equivalent criteria are satisfied:
\begin{enumerate}[\rm(i)]
    \item $1 \notin %\LIg{\{h-1\mid h\in\mathscr H\}} + 
    \LIg{\dset}+\LIg{\dset}^*$;
    \item $\clausegp\cap\CC=\{1\}$.
\end{enumerate}
Moreover, if $|\beta|=1$ for each
each $h=\beta g  \in\mathscr H$ then these statements are also equivalent to
\begin{enumerate}[\rm(i)]
 \setcounter{enumi}{2}
    \item $1  \notin 
    \LIg{\dset}$.
\end{enumerate}

\end{thm}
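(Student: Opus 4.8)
The plan is to obtain this theorem as a direct corollary of the SOS-free \NullSS of \Cref{thm:noSOSNullSS}, applied to the group algebra $\uGA\cong\C[G]$ (such a $G$ exists by the definition of a \tgame) with the set of binomials $\polyn=\dset=\{\beta_i g_i-1\}_i$. Almost everything is already packaged in that theorem; the work here is translation.

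First I would recast ``$\game$ has a perfect commuting operator strategy'' in representation-theoretic terms. By the correspondence of \Cref{sssec:Strategies_as_representations}, a commuting operator strategy for $\game$ is a pair $(\rep,\state)$ consisting of a $*$-representation $\rep:\uGA\to\cB(\cH)$ and a unit vector $\state\in\cH$, and since $\game$ is torically determined by $\dset$, such a pair is a perfect strategy in the sense of \Cref{defn:dset} exactly when $\rep(\delt)\state=0$ for every $\delt\in\dset$. Hence $\game$ has a perfect commuting operator strategy if and only if there exist a $*$-representation $\rep:\uGA\to\cB(\cH)$ and a nonzero vector $\state\in\cH$ with $\rep(\delt)\state=0$ for all $\delt\in\dset$; equivalently, $\zd^{\mathrm{re},\uGA}(\dset)\neq\varnothing$. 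This is precisely the representation-theoretic condition in \Cref{thm:noSOSNullSS} for $\C[G]=\uGA$ and $\polyn=\dset$.

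Next I would read off the rest. Every element of $\dset$ has the form $\monom-1$ with $\monom=\beta_i g_i$ a monomial, so the hypotheses of \Cref{thm:noSOSNullSS} are met, and it yields that the condition above is equivalent to $1\notin\cLI(\dset)_{\uGA}+\cLI(\dset)^*_{\uGA}$, which is condition~(i) here since $\LIg{\dset}=\cLI(\dset)_{\uGA}$, and equivalent to the assertion that the subgroup of $\uGA^{-1}$ generated by the monomials occurring in $\dset\cup\dset^*$ meets $\C$ only in $1$; that subgroup is generated by $\{\beta_i g_i\}\cup\{\beta_i^{-1}g_i^{-1}\}=\clauseset\cup\clauseset^*$, i.e.\ it is $\clausegp$, giving condition~(ii). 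For the ``moreover'' clause, if in addition $|\beta|=1$ for every clause $h=\beta g\in\clauseset$, then every element of $\dset$ is of the form $\beta w-1$ with $w\in G$ and $|\beta|=1$, so the supplementary equivalences of \Cref{thm:noSOSNullSS} apply and give further $1\notin\cLI(\dset)_{\uGA}=\LIg{\dset}$, which is condition~(iii).

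All the mathematical substance lives in \Cref{thm:noSOSNullSS}. The only points that need care are bookkeeping: checking that the group generated by the monomial parts of $\dset\cup\dset^*$ is exactly $\clausegp$ (that no spurious scalar factors are introduced when one passes from the binomials $\beta_i g_i-1$ to their monomials), and verifying that ``torically determined'' together with the strategies-as-representations correspondence genuinely identifies ``$\game$ has a perfect commuting operator strategy'' with nonemptiness of $\zd^{\mathrm{re},\uGA}(\dset)$. Neither step presents a real obstacle.
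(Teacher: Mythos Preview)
Your proposal is correct and follows essentially the same route as the paper's own proof: translate the existence of a perfect commuting operator strategy into $\zd^{\mathrm{re},\uGA}(\dset)\neq\varnothing$ via the definition of a torically determined game, then invoke \Cref{thm:noSOSNullSS} with $\C[G]=\uGA$ and $\polyn=\dset$ to read off conditions (i), (ii), and (iii). The paper's proof is terser but identical in content; your added bookkeeping about identifying $\clausegp$ with the monomial subgroup is accurate and harmless.
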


\begin{proof}
By definition, $\game$ has a perfect commuting operator strategy iff there exists a $*$-representation $\pi$ mapping $\uGA$ into bounded operators on a Hilbert space $\cH$ and a state $\state$ in $\cH$ with $\pi(\clause - 1) \state = 0$ for all clauses $\clause\in\clauseset$. But this is equivalent to the statement $\zd^{\rm re,\C[G]}(\dset)\neq\varnothing$. Then the result follows directly from  \Cref{thm:noSOSNullSS}. 
\end{proof}

\sssec{\TGames and the Subgroup Membership Problem}
\label{sssec:tgames_smp}

Condition $\rm{(ii)}$ of \Cref{thm:perfect_unitary_games} relates existence of perfect commuting operator strategies to a question of membership of certain elements in a group. We now translate this question to a standard instance of the subgroup membership problem. 

First, observe that if any clause $\beta_i g_i \in \clauseset$ has $\abs{\beta_i} \neq 1$ then the game $\game$ trivially cannot have a perfect commuting operator strategy, since 
\begin{align}
    \beta g \left( \beta g \right)^{*} = \abs{\beta}^2 \neq 1 \in \clausegp.
\end{align}
Then we restrict our attention to the case where $\abs{\beta_i} = 1$ for all $\beta_i g_i \in \clauseset$. Let $\betagp$ be the group generated by all the $\beta_i$ under multiplication, and note $\betagp$ is an abelian group consisting of a subset of the unit circle. 
Then the statement $\clausegp\cap\CC=\{1\}$ is equivalent to 
$\clausegp\cap B =\{1\}$,
i.e., $\beta \not\in H$ for any $1\neq\beta \in \betagp$.
 This is exactly equivalent to one (or several) instances of the subgroup membership problem. 

For many games the group $\betagp$ is very simple, containing just a few elements. In this case, alternate notation can be used for elements $\beta \in \betagp$ (for example, the $J$ element used in \cite{cleve2017perfect} or the $\sigma$ element in \cite{watts20203xor}). We see an example of this next.

\sssec{Mod \texorpdfstring{$r$}{r} Games}

\label{ex:mod_m_games_perfect_condition}

We now apply the machinery described previously in this section to Mod $r$ Games. From \Cref{sssec:Mod $r$ Games} we have that Mod $r$ games are \tdby a set of elements
\begin{align}
    \left\{(\exp(-2\pi i /r))^{s_t} \prod_{\alpha \in [k]} \left(\cycla{j_{t}(\alpha)}{\alpha}(\alpha)\right)^{d_t(\alpha)} - 1\right\}_{t \in [T]}
\end{align}
with the $\cycla{j_{t}(\alpha)}{\alpha}$ cyclic unitaries. 

Following the notation introduced in \Cref{sssec:tgames_smp} we define
$\betagp$ to be the group generated by the set of elements $\{(\exp(-2\pi i /r))^{s_t}\}_{t \in [T]}$. We note that $\betagp$ is isomorphic to a subgroup of $\mathbb{Z}_r$ and for notational convenience introduce the element $\zeta$ as shorthand for $\exp(-2\pi i /r)$. Then we can write clauses $\clause_t \in \clauseset$ as 
\begin{align}
    h_t = \zeta^{s_t} \prod_{\alpha \in [k]} \cycla{j_{t}(\alpha)}{\alpha}.
\end{align}
It follows that a Mod $m$ game has a perfect commuting operator strategy iff $\zeta^s \notin \clausegp$ for all $s \in \{1,2,\ldots,r-1\}$. 

We note that if $r$ is prime there is an immediate simplification of this characterization, since $\zeta^s \in H \Leftrightarrow \zeta^{rs+1} = \zeta \in H$ and hence a game has a perfect commuting operator strategy iff $\zeta \notin \clausegp$. We also note that the Mod 2 game (i.e., XOR game) version of this condition is equivalent to Theorem 2.1 in \cite{watts20203xor}.

\sec{Gr\"obner Basis Algorithm Tailored to Games}
\label{sec:GBAlgorithmFull}

\def\cR{{\mathcal R}}

What arises in this paper are sums of left and two sided ideals. We present here a kludge for using a conventional two sided ideal nc Gr\"obner Basis (GB) algorithm to solve these mixed ideal
problems. While a special purpose algorithm could be more efficient, the procedure here can be very convenient in practice. It has certainly been valuable to the authors.
We also alert the reader to \Cref{sec:GBExamples1} which applies Gr\"obner Bases plus a \NullSS to synchronous games.

At the core of our algorithm is the following observation:

\begin{prop}\label{prop:preGB}
Consider an ideal $\ide$ and left ideal
$\lide$ in a free algebra $\fa$.
For $f\in\fa$, we have
\[
f\in \ide+\lide\quad \iff\quad f\xi \in (\ide+\lide \xi)_{\fay},
\]
where $(\ide+\lide \xi)_{\fay}$ denotes the two-sided
ideal of $\fay$ generated by $\ide+\lide \xi$.
In particular,   \[
1\in \ide+\lide\quad \iff\quad\xi \in (\ide+\lide \xi)_{\fay}.\]
\end{prop}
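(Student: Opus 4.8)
The plan is to prove the two implications separately: the forward implication is a one-line consequence of the definitions, and the reverse implication — the substantive part — will be obtained by combining the natural $\mathbb{N}$-grading of $\fay$ by the number of occurrences of $\xi$ with a carefully chosen linear functional on $\fa$. For the forward implication, write $f=g+h$ with $g\in\ide$ and $h\in\lide$, so that $f\xi=g\xi+h\xi$; since $g\in\ide$ and the two-sided ideal $(\ide+\lide\xi)_{\fay}$ is closed under right multiplication by $\xi$, we get $g\xi\in(\ide+\lide\xi)_{\fay}$, while $h\xi\in\lide\xi\subseteq(\ide+\lide\xi)_{\fay}$ by definition; hence $f\xi\in(\ide+\lide\xi)_{\fay}$. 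The ``in particular'' statement is just the case $f=1$.

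For the reverse implication, grade $\fay$ by the number of $\xi$'s; the degree-$1$ homogeneous component is spanned by the words $w_1\xi w_2$ with $w_1,w_2$ words in the $x$'s, and is therefore linearly identified with $\fa\otimes_{\C}\fa$ via $w_1\xi w_2\leftrightarrow w_1\otimes w_2$; under this identification $f\xi$ (which is $\xi$-homogeneous of degree $1$) corresponds to $f\otimes 1$. We may assume $\ide\neq\fa$, since otherwise both sides of the claimed equivalence hold trivially; then $1\notin\ide$, so there is a linear functional $\lambda\colon\fa\to\C$ with $\lambda(1)=1$ and $\lambda(\ide)=\{0\}$. Assuming $f\xi\in(\ide+\lide\xi)_{\fay}$, write it as $f\xi=\sum_j u_j g_j v_j+\sum_k p_k\ell_k\xi q_k$ with $g_j\in\ide$, $\ell_k\in\lide$ and $u_j,v_j,p_k,q_k\in\fay$, and pass to the $\xi$-degree-$1$ component of this equation; since $g_j$ and $\ell_k$ are $\xi$-homogeneous of degree $0$, only the $\xi$-degree-$0$ and $\xi$-degree-$1$ parts of the $u_j,v_j$ and the $\xi$-degree-$0$ parts of the $p_k,q_k$ survive.

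The point of isolating this layer is what happens under the identification with $\fa\otimes\fa$: because $\ide$ is a \emph{two-sided} ideal, the surviving pieces of $u_j g_j v_j$ lie in $\fa\otimes\ide$ and in $\ide\otimes\fa$, and because $\lide$ is a \emph{left} ideal, the surviving pieces of $p_k\ell_k\xi q_k$ lie in $\lide\otimes\fa$. Thus $f\otimes 1\in\fa\otimes\ide+\ide\otimes\fa+\lide\otimes\fa$ inside $\fa\otimes\fa$. Applying the linear map $\mathrm{id}_{\fa}\otimes\lambda\colon\fa\otimes\fa\to\fa$ now sends $f\otimes 1\mapsto f$, annihilates $\fa\otimes\ide$ (as $\lambda(\ide)=0$), sends $\ide\otimes\fa$ into $\ide$, and sends $\lide\otimes\fa$ into $\lide$. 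Hence $f\in\ide+\lide$, completing the proof.

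The main obstacle is ensuring that the reverse direction lands in $\ide+\lide$ rather than merely in the two-sided ideal of $\fa$ generated by $\ide\cup\lide$, which is strictly larger because $\lide$ is only a left ideal; in particular the naive algebra homomorphism $\fay\to\fa$ sending $\xi\mapsto 1$ only yields membership in this larger ideal. The two devices above are what overcome this: restricting to the $\xi$-degree-$1$ layer records that in $\lide\xi$ the factor $\lide$ sits to the \emph{left} of $\xi$, and contracting the right-hand tensor slot with a functional that kills $\ide$ discards precisely the contributions in which a generator of $\ide$ collapses to a scalar (these are the troublesome terms, since a generator of $\ide$ need not have vanishing constant term). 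Everything else is a routine homogeneity bookkeeping.
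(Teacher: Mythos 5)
Your proof is correct, and it is in the same spirit as the paper's (both isolate the $\xi$-degree-$1$ component of a presentation of $f\xi$), but your final contraction step is cleaner and, in fact, more careful. The paper passes to $\xi$-degree-$1$ and then ``extracts all terms that end in $\xi$,'' dropping the cross term $\sum_j f_{j1}a_jg_{j0}$ entirely. Read literally, this is a gap: when a generator $a_j\in\ide$ has a nonzero constant term (which does occur in practice, e.g.\ $\sum_a e(\alpha)^i_a - 1\in\uGI$) and $g_{j0}$ has a nonzero constant term, that cross term does contribute words of the form $w\xi$ with $w\in\fa$ in no evident way a member of $\ide+\lide$. In the tensor picture $\fa\otimes\fa$ of the degree-$1$ layer, ``extract terms ending in $\xi$'' is the map $\mathrm{id}\otimes\varepsilon$ with $\varepsilon$ the constant-term functional, and $\mathrm{id}\otimes\varepsilon$ does \emph{not} annihilate $\fa\otimes\ide$ because $\varepsilon$ need not vanish on $\ide$. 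Your substitution of a linear functional $\lambda$ with $\lambda(1)=1$ and $\lambda(\ide)=0$ (available exactly when $\ide\ne\fa$, a case you correctly dispatch separately) is precisely the fix: $\mathrm{id}\otimes\lambda$ kills $\fa\otimes\ide$, sends $\ide\otimes\fa$ into $\ide$ and $\lide\otimes\fa$ into $\lide$, and sends $f\otimes 1$ to $f$. So your argument is a valid and slightly strengthened version of the intended proof.
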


\begin{proof}
The forward implication is obvious, so assume
$f\xi \in (\ide+\lide \xi)_{\fay},$ i.e.,
\beq\label{eq:fy1}
f\xi=\sum_j f_j a_j g_j + \sum_i p_i s_i\xi q_i
\eeq
for some $f_j,g_j,p_i,q_i\in\fay$, $a_j\in\ide$ and $s_i\in\lide$.

Each element $r$ in $\fay$ can be written uniquely as
$r=r_0+r_1$, where $r_0\in\fa$ and $r_1\in (\xi)_{\fay}$.
We keep
only terms of degree $\geq1$ in $\xi$ in  \Cref{eq:fy1} to obtain
\beq\label{eq:fy2}
f\xi=\sum_j f_{j0} a_j g_{j1} 
+
\sum_j f_{j1} a_j g_{j0} +
 \sum_i p_{i0} s_i\xi q_{i0}.
\eeq
Now extract all terms that end in $\xi$ to get
\beq\label{eq:fy3}
f\xi=\sum_j f_{j0} a_j \tilde g_{j1} \xi 
+
 \sum_i p_{i0} \hat q_{i0} s_i \xi
\eeq
for some $\tilde g_{j1}\in\fa$ and $\hat q_{i0}\in\mathbb k$.
Cancelling $\xi$ on the right in \Cref{eq:fy3} leads to
$f\in\ide+\lide$.
\end{proof}

\subsection{Gr\"obner Basis Algorithm}
\label{sec:GBAlgorithm}

\paragraph{Algorithm}
Suppose $a_1,\ldots,a_m,b_1,\ldots,b_n\in\fa$ are given,
and let $\ide=(a_1,\ldots,a_m)$, $\lide=\fa b_1+\cdots +\fa b_n$ be a two-sided and left ideal in $\fa$, respectively. 
Add a new variable $\xi$,
to form the two-sided ideal
$(a_1,\ldots,a_m,b_1\xi,\ldots,b_n\xi)_{\fay}$
and compute its Gr\"obner basis $B$, with respect to any admissible monomial order.
Then
 $$ 1\in \ide+\lide \qquad  \text{iff} \qquad 
1\in B \ \  \text{or} \ \  
 \xi\in B.$$
 Here $1\in B$ iff $1\in\ide$.
 \qed
 
 \paragraph{Basics of NC GB}
 
We assume the reader has a familiarity  with Gr\"obner Bases; a standard reference in the commutative setting
is \cite{cox15book}, 
while \cite{moraNCGB,greenNCGB} describe the appropriate nc analogs.
NC GBs have properties very similar  to those of for traditional commutative GB with the dramatic exception that an NC GB might not be finite. 

Fixing a monomial order, a subset $B$ of an ideal
$\ide \subseteq\fa$ is a GB if the set
of leading terms $\LT(B)$ of elements of $B$ generates
the same ideal $\LT(\ide)$ as all leading terms of $\ide$. Roughly speaking, the nc Bucherger criterion \cite[Theorem 5.1]{moraNCGB}
states that $B$ is a GB iff each $S$-polynomial
built off $B$
can be expressed in terms of the elements of $B$
with lower degrees. 
Thus
algorithms for building NC GBs work by producing 
$S$-polynomials for pairs
$a_i,a_j$ of (not necessarily distinct)  polynomials which are generators  of $\ide$.
S-polynomials are ones  of the form
\beq\label{eq:Spoly}
S_{i,j}(w_i,w_i',w_j,w_j')=
\frac1{\LC(a_i)}w_ia_iw_i'-
\frac1{\LC(a_j)}w_ja_jw_j'
\eeq
for words $w_i,w_i',w_j,w_j'$ in $x$ satisfying
\beq\label{eq:obstr}
w_i\LT(a_i)w_i'=
w_j\LT(a_j)w_j'.
\eeq
Here $\LC(a)$ denotes the coefficient of the leading term of $a$.

At each step 
in construction of a GB, one considers a collection of polynomials $\hat B$.
If the ``remainder'' of an $S$-polynomial
after division by $\hat B$ is nonzero, one adds it
to $\hat B$ and repeats the process.
Ultimately (maybe in an infinite number of steps)  $\hat B$
grows to a GB $B$.

Left ideals have what we call left GB.
Algorithms for producing left GBs
are very similar to 
algorithms for producing GBs for 2 sided ideals.
Now common multiples must be made by 
multiplying on the
left (not the right),
and the key matches are 
\beq 
\label{eq:matches}
w_i\LT(a_i)=
w_j\LT(a_j)
\eeq 
for words $w_i,w_j$ in $x$.

We refer the reader to \cite{MR98,phdNCGB}
for  excellent references on this.

\paragraph{Justification}
  
 Now behold a few properties of our GB.
 \ben[\rm (1)]
 \item 
 The set of polynomials in 
 $B$ which do not contain $\xi$ is itself a 
 Gr\"obner basis for $\ide$.
 
  \item 
 The set of polynomials in 
 $B$ which do  contain $\xi$, after  $\xi$ is removed,
 is itself a 
 Left Gr\"obner basis for $\lide$.
 Moreover, the left $S$-polynomials which occur
 in the left GB algorithm for $\lide$
 are the ``same'' as those which occur in the
 standard two sided ideal GB algorithm for
 $\lide \xi$; thus their  run times  are similar.

 \item
 \label{it:ugiGB}
 The ``natural generators''
 $x_i^2-1$, $x_i y_j -y_j x_i $,
 etc., are themselves a GB for the universal game ideal,  $\uGI$.
This is easily checked, say with a computer, by focusing on the subset of all
polynomials involving only
two variables $x_i,y_j$.
 
 \een

\def\faab{\mathbb k\langle a_0,a_1,b_0,b_1\rangle}
\def\faaby{\mathbb k\langle a_0,a_1,b_0,b_1,\xi\rangle}
\def\faxy{\mathbb k\langle x_0,x_1,y_0,y_1\rangle}
\def\faxyy{\mathbb k\langle x_0,x_1,y_0,y_1,\xi\rangle}

\ssec{Examples of GBs}

In the forthcoming examples we  use the natural graded lexicographic order defined by
$$ x_i <y_j < z_k<\xi \qquad \text{and} \quad  x_0 <x_1 < \cdots , \quad \text{etc.} 
$$

\begin{ex}
Consider the CHSH game. 
It has a 
perfect commuting operator strategy
iff there exist signature 
matrices $X_0,X_1,Y_0,Y_1$ with $X$s commuting with $Y$s so that for some state $\psi$ we have
\beq\label{eq:CHSH}
\begin{split}
X_0Y_0\psi&=\psi\\
X_0Y_1\psi&=\psi\\
X_1Y_0\psi&=\psi\\
X_1Y_1\psi&=-\psi .
\end{split}
\eeq

The universal game ideal $\uGI$ for 
 a 2-player, 2-question game, such as CHSH,  with answers denoted $x_0, x_1, y_0,y_1$
 lies in the free algebra 
$\faxy$
and is generated by
the 4 signature properties $x_i^2- 1, y_j^2- 1 $ and 4 commuting relations $x_iy_j- y_jx_i$.
The GB for $\uGI$
is itself, by \Cref{it:ugiGB}.

So take $\ide=\uGI $
and take 
$\lide$ generated by $x_0y_0-1,\ x_0 y_1-1,\ x_1y_0-1,\ x_1y_1+1$.
This encodes the CHSH game.
The Gr\"obner basis $B$ for $(\uGI +\lide \xi)_{\faxyy}$ is given by the defining relations for
 $\uGI$ and $\xi$.
Since $\xi\in B$, by \Cref{prop:preGB} we
have $1\in\uGI +\lide$, whence \Cref{eq:CHSH}
does not have a solution.
\end{ex}

\ssec{Construction of solutions to the game equations}

Next we show how to attempt finding solutions to the game equations once we have a GB. We do this with an example.

\begin{ex}[GHZ 3 player game]\label{ex:GHZ3}
%We shall use the Gr\"obner basis machinery to
%investigate the system of equations arising from the 

We seek
symmetries $x_i,y_i,z_i$, $0\leq i\leq1$,
with $x$'s commuting with $y$'s and $z$'s, and
$y$'s commuting with $z$'s, so that the four operators
\beq \label{eq:GHZ}
\begin{aligned}
-1 - x_0  y_0  z_1,\ -1 - x_0  y_1  z_0,\ -1 - x_1  y_0  z_0,  \
1-x_1y_1z_1
\end{aligned}
\eeq
have a common nonzero kernel vector.

In the free algebra $\faxyz$ consider
its ideal
$\ide$ encoding the signature relations $x_i^2=x_i, y_j^2=y_j, z_j^2=z_j$ and commuting relations $x_iy_j=y_jx_i,
x_iz_j=z_jx_i,z_iy_j=y_jz_i$,
 and its
left ideal $\lide$ generated by
the polynomials in \Cref{eq:GHZ}.
 The Gr\"obner basis $B$ for $(\ide+\lide \xi)_{\faxyzpsi}$ is given by: \\
 \mbox{} \qquad \qquad 
 the signature relations, commuting relations, and the following 18 polynomials (all of whom are divisible by $\xi$ on the right)
 \beq\label{eq:GB8}
 \small
 \begin{aligned}
y_0  \xi + x_0  z_1  \xi,\  z_1  \xi + x_0  y_0  \xi,\  
x_0  \xi + y_0  z_1  \xi,\  y_1  \xi + x_0  z_0  \xi,\  
z_0  \xi + x_0  y_1  \xi,\\  x_0  \xi + y_1  z_0  \xi,\  
y_0  \xi + x_1  z_0  \xi,\  z_0  \xi + x_1  y_0  \xi,\  
x_1  \xi + y_0  z_0  \xi,\  -y_1  \xi + x_1  z_1  \xi,\\  -z_1  \xi + 
 x_1  y_1  \xi,\  -x_1  \xi + y_1  z_1  \xi,\
x_0  x_1  \xi + y_1  y_0  \xi,\  -x_0  x_1  \xi + 
 z_0  z_1  \xi,\\  -x_1  x_0  \xi + z_1  z_0  \xi,\  -x_0  x_1  \xi + 
 y_0  y_1  \xi,\
x_0  x_1  \xi + x_1  x_0  \xi,\  -y_1  z_0  \xi + 
 x_1  x_0  x_1  \xi.
\end{aligned}
\eeq

We shall use $B$ to construct a solution to our system. 
Let $$V_0=\faxyzpsi/(\ide+\lide \xi),$$ and let $f\mapsto \overline f$ denote the quotient map $\faxyzpsi\to V_0$. 
The subspace $$V:=\overline{\faxyz\xi}\subseteq V_0$$ is eight dimensional, spanned by $B=\{\xi ,\  x_0  \xi ,\  x_1  \xi ,\  y_0  \xi ,\  y_1  \xi ,\  z_0  \xi ,\  z_1  \xi ,\  x_0  x_1  \xi \}$.

Each of the nc variables $x_i,y_i,z_i$ acts on $V$ from the left,
and with respect to the basis $B$ we obtain matrix representations as follows:
\[
\scriptsize
\begin{split}
\widehat x_0 & =
\begin{pmatrix}
0 & 1 & 0 & 0 & 0 & 0 & 0 & 0 \\
 1 & 0 & 0 & 0 & 0 & 0 & 0 & 0 \\
 0 & 0 & 0 & 0 & 0 & 0 & 0 & 1 \\
 0 & 0 & 0 & 0 & 0 & 0 & -1 & 0 \\
 0 & 0 & 0 & 0 & 0 & -1 & 0 & 0 \\
 0 & 0 & 0 & 0 & -1 & 0 & 0 & 0 \\
 0 & 0 & 0 & -1 & 0 & 0 & 0 & 0 \\
 0 & 0 & 1 & 0 & 0 & 0 & 0 & 0
\end{pmatrix}, \quad
\widehat x_1  =
\begin{pmatrix}
0 & 0 & 1 & 0 & 0 & 0 & 0 & 0 \\
 0 & 0 & 0 & 0 & 0 & 0 & 0 & -1 \\
 1 & 0 & 0 & 0 & 0 & 0 & 0 & 0 \\
 0 & 0 & 0 & 0 & 0 & -1 & 0 & 0 \\
 0 & 0 & 0 & 0 & 0 & 0 & 1 & 0 \\
 0 & 0 & 0 & -1 & 0 & 0 & 0 & 0 \\
 0 & 0 & 0 & 0 & 1 & 0 & 0 & 0 \\
 0 & -1 & 0 & 0 & 0 & 0 & 0 & 0 
\end{pmatrix}, \\
\widehat y_0 & =
\begin{pmatrix}
 0 & 0 & 0 & 1 & 0 & 0 & 0 & 0 \\
 0 & 0 & 0 & 0 & 0 & 0 & -1 & 0 \\
 0 & 0 & 0 & 0 & 0 & -1 & 0 & 0 \\
 1 & 0 & 0 & 0 & 0 & 0 & 0 & 0 \\
 0 & 0 & 0 & 0 & 0 & 0 & 0 & 1 \\
 0 & 0 & -1 & 0 & 0 & 0 & 0 & 0 \\
 0 & -1 & 0 & 0 & 0 & 0 & 0 & 0 \\
 0 & 0 & 0 & 0 & 1 & 0 & 0 & 0
\end{pmatrix}, \quad
\widehat y_1  =
\begin{pmatrix}
 0 & 0 & 0 & 0 & 1 & 0 & 0 & 0 \\
 0 & 0 & 0 & 0 & 0 & -1 & 0 & 0 \\
 0 & 0 & 0 & 0 & 0 & 0 & 1 & 0 \\
 0 & 0 & 0 & 0 & 0 & 0 & 0 & -1 \\
 1 & 0 & 0 & 0 & 0 & 0 & 0 & 0 \\
 0 & -1 & 0 & 0 & 0 & 0 & 0 & 0 \\
 0 & 0 & 1 & 0 & 0 & 0 & 0 & 0 \\
 0 & 0 & 0 & -1 & 0 & 0 & 0 & 0 
\end{pmatrix}, \\
\widehat z_0 & =
\begin{pmatrix}
 0 & 0 & 0 & 0 & 0 & 1 & 0 & 0 \\
 0 & 0 & 0 & 0 & -1 & 0 & 0 & 0 \\
 0 & 0 & 0 & -1 & 0 & 0 & 0 & 0 \\
 0 & 0 & -1 & 0 & 0 & 0 & 0 & 0 \\
 0 & -1 & 0 & 0 & 0 & 0 & 0 & 0 \\
 1 & 0 & 0 & 0 & 0 & 0 & 0 & 0 \\
 0 & 0 & 0 & 0 & 0 & 0 & 0 & 1 \\
 0 & 0 & 0 & 0 & 0 & 0 & 1 & 0 
\end{pmatrix}, \quad
\widehat z_1  =
\begin{pmatrix}
0 & 0 & 0 & 0 & 0 & 0 & 1 & 0 \\
 0 & 0 & 0 & -1 & 0 & 0 & 0 & 0 \\
 0 & 0 & 0 & 0 & 1 & 0 & 0 & 0 \\
 0 & -1 & 0 & 0 & 0 & 0 & 0 & 0 \\
 0 & 0 & 1 & 0 & 0 & 0 & 0 & 0 \\
 0 & 0 & 0 & 0 & 0 & 0 & 0 & -1 \\
 1 & 0 & 0 & 0 & 0 & 0 & 0 & 0 \\
 0 & 0 & 0 & 0 & 0 & -1 & 0 & 0 
\end{pmatrix}.
\end{split}
\] 
It is easy to verify that these solve the system \Cref{eq:GHZ} with common kernel vector $e_1$. 
\end{ex}

\begin{rmk}
For a 3-XOR game, 
if the
restricted GNS
construction (as in \Cref{ex:GHZ3})
 gives a finite-dimensional space of dim $> 8$, then the solution to 
the game may not be  unique.
Indeed,  MERP is one solution;
it has dim 8. Other solutions 
may
show up in the GNS construction
via it block diagonalizing and
the other solutions being 
blocks. These blocks might all be unitarily equivalent  to
the MERP solution; otherwise the game has multiple solutions.

\end{rmk} 

The problem of establishing uniqueness of solution to the game has  been studied  and there is a standard technique available. 
cf \cite{cui2020generalization}.
It can be tried on a game for which the bias and optimal value 
satisfy
$$
\omega^*- \Phi_\game = SOS
$$
exactly (as opposed to approximately as in 
\cite{navascues2008convergent, doherty2008quantum, HM04}). Optimality implies there exists a state $\psi$ making $$0= (\omega^* -\Phi_\game )\psi =SOS\psi.$$
Thus we get algebraic equations $s_j \psi=0$
corresponding to $SOS = \sum_j s_j^* s_j$.
For some games it is possible to show from these equations have  only one solution (that is, a unique $*$-representation)
and from this  show that the game has a unique optimal
strategy.

\sec{Linear Systems Games}
\label{sec:LinearSysGames}

Linear systems games are some of the first games to have their perfect commuting operator strategies characterized algebraically. This characterization, given in \cite{cleve2017perfect}, shows that perfect commuting operator strategies for linear systems games arise from representations of a group called the solution group of the game. As a result of this characterization, deciding existence of perfect commuting operator strategies for linear systems games was shown to be equivalent to solving an instance of the word problem. 

In this section, we reconcile this characterization of linear systems games with the \NullSS framework described in this paper. In particular, we derive the linear systems game characterization given in \cite{cleve2017perfect} using \Cref{thm:perfect_unitary_games} as a starting point. 

To begin, we recap the definition of linear systems games. 

\begin{defn}
A linear systems game $\game$ is a two player game based on a system of $m$ linear equations in $n$ variables computed mod $r$. A question to Alice is an integer $i \in [m]$ selecting an equation in the linear system equation. A question to Bob selects a variable $j \in [n]$. Alice's response consists of a vector $\vec{a}_i = (a_{i_1}, a_{i_2}, \ldots, a_{i_k})~\in~[r]^k$ containing values for all the variables contained in the $i$-th equation, with $i_1, ... i_k$ indexing the variables. Bob's response is a single variable $b_j \in [r]$. The players win each round provided variables $(a_{i_1}, a_{i_2}, \ldots, a_{i_k})$ satisfy the $i$-th equation and $a_{i_t} = b_j$ whenever 
$i_t=j$.
\end{defn}

\newcommand{\varset}{T}
Following the notation laid out in \Cref{ssec:technical_definitions} note the game algebra $\uGA$ is generated by elements $\proja{i}{\vec{a}_i}{1}$ and $\proja{j}{b_j}{2}$ with $\vec{a}_j$ ranging over all possible vectors of responses to question $i$ and $b_j$ ranging over all possible values Bob can give for variable $b_j$. 

Next, we identify elements of $\uGA$ which correspond to the value Alice gives to a single variable in the system of equations. For any $i \in [m]$ let $\varset(i) = \{t_1, t_2, \ldots, t_k\}$ list all the variables contained in the $i$-th equation in the system of equations associated with the game. Then, for any $t \in \varset(i)$ and $r' \in [r]$ define 
\begin{align}
     \proja{i}{a_{t} = r'}{1} = \sum_{\vec{a}_i : a_{t} = r'} \proja{i}{\vec{a}_i}{1}.
\end{align}
It is easy to check that the elements $\proja{i}{a_{t} = r'}{1}$ also satisfy the relations of projectors, 
\begin{subequations}
\begin{align}
    \left(\proja{i}{a_{t} = r'}{1}\right)^2 &= \left(\proja{i}{a_{t} = r'}{1}\right)^* = \proja{i}{a_{t} = r'}{1}, \\
    \sum_{r' \in [r]} \proja{i}{a_{t} = r'}{1} &= 1.
\end{align}
\end{subequations}
Additionally, orthogonality of the $\proja{i}{\vec{a}_i}{1}$ elements gives that for any $i \in [m]$, $t_1, t_2 \in \varset(i)$, and $r_1, r_2 \in [r]$ we have 
\begin{align}
    \proja{i}{a_{t_1} = r_1}{1}\proja{i}{a_{t_2} = r_2}{1}  =  \proja{i}{a_{t_2} = r_2}{1}\proja{i}{a_{t_1} = r_1}{1} =  \sum_{\substack{\vec{a}_i: a_{t_1} = r_1\\ \text{ and } a_{t_2} = r_2}}\proja{i}{\vec{a}_i}{1} \label{eq:ls_proj_commute}
\end{align}
Finally we note that, also by orthogonality of the   $\proja{i}{\vec{a}_i}{1}$ elements we have for any $i$ with $\varset(i) = \{t_1, \ldots, t_k\}$ and $\vec{a}_i = (r_{t_1}, r_{t_2}, \ldots, r_{t_k})$ that
\begin{align}
    \prod_{t \in \varset(i)} \proja{i}{a_{t} = r_{t}}{1} = \proja{i}{\vec{a}_i}{1}. \label{eq:prod_of_lsprojectors}
\end{align}

From here we can define cyclic unitary generators analogously to in \Cref{ssec:Algebraic_Picture}, with
\begin{subequations}
\begin{align}
\lcycla{t}{1}{j} &:= \sum_{r'= 1}^r \exp(\frac{2\pi r' i}{r}) \proja{j}{a_{t} = r'}{1},\\
\cycla{j}{2} &:= \sum_{a=1}^r \exp(\frac{2\pi a i}{r}) \proja{j}{a}{2}.
\end{align}
\end{subequations}

\newcommand{\lsgroup}{G_{\textrm{ls}}}

These unitary generators satisfy the same relations as the generators defined in \Cref{ssec:Algebraic_Picture}, namely,
\begin{subequations}
\begin{align}
    \left( \lcycla{t}{1}{i} \right)^r &= \left(\cycla{j}{2}\right)^r = 1, \\
    \left( \lcycla{t}{1}{i} \right)^*\lcycla{t}{1}{i} &= \left(\cycla{j}{2}\right)^*\cycla{j}{2} = 1,\\
    \lcycla{t}{1}{i}\cycla{j}{2} &= \cycla{j}{2}\lcycla{t}{1}{i} \;\;\;\; \forall \; i \in [m], \, j,t \in [n]. %\footnotemark
\end{align}
In addition, \Cref{eq:prod_of_lsprojectors} gives that 
\begin{align}
    \lcycla{t_1}{1}{i} \lcycla{t_2}{1}{i}  = \lcycla{t_2}{1}{i} \lcycla{t_1}{1}{i} \;\;\;\; \forall \; i \in [m], \, t_1, t_2 \in \varset(i).  \label{eq:ls_cls_com_reln}
\end{align}
\end{subequations}
We can then define $\lsgroup$ to be the group generated by the $\lcycla{t}{1}{i}$ and $\cycla{j}{2}$.

Now we show that linear systems games are \ugames.

\begin{thm} \label{thm:LSGames_Toric}
Let $\game_{ls}$ be a linear systems game based on a system of $m$ equations. For any $j \in [m]$ write the $j$-th equation of the system as 
\begin{align}
    \sum_{t \in \varset(j)} d^{(j)}_t y_t = s_j \pmod{r}
\end{align}
where $\varset(j)$ indexes all the variables $y_t$ contained in the $j$-th equation of the system and $d_t^{(j)}, s_j \in [r]$. Then  $\game_{ls}$ is a \ugame, \dby the elements 
\begin{align}
    \dset = \left\{ \exp( - \frac{2 \pi s_j i}{r}) \prod_{t \in \varset(j)} \left(\lcycla{t}{1}{j}\right)^{d_t^{(j)}} - 1\right\}_{j \in [m]} \bigcup\  \left\{ \lcycla{t}{1}{j} \left(\cycla{t}{2}\right)^* - 1\right\}_{t \in [n], j \in [m]}.
\end{align}
\end{thm}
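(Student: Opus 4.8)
The plan is to verify the two requirements of a torically determined game separately: (1) that $\uGA$ is a group algebra $\C[\lsgroup]$ in which the displayed expressions $\prod_{t\in\varset(j)}(\lcycla{t}{1}{j})^{d_t^{(j)}}$ and $\lcycla{t}{1}{j}(\cycla{t}{2})^*$ are genuine group elements, so that $\dset$ has the shape $\{\beta_i g_i-1\}$ with $g_i\in\lsgroup$; and (2) that $\game_{ls}$ is determined by $\dset$ in the sense of \Cref{defn:dset}. Requirement (1) is essentially packaged by the cyclic-unitary discussion preceding the theorem: the relations recorded there for the $\lcycla{t}{1}{i}$ and $\cycla{j}{2}$ (together with \Cref{eq:prod_of_lsprojectors}) show that Alice's equation-$i$ observable algebra is the group algebra $\C[\mathbb Z_r^{\varset(i)}]$ with the $\lcycla{t}{1}{i}$, $t\in\varset(i)$, as standard generators, Bob's variable-$j$ algebra is $\C[\mathbb Z_r]$ generated by $\cycla{j}{2}$, and these pieces interact only through the cross-player commuting relations; hence $\uGA\cong\C[\lsgroup]$ and the two families above lie in $\lsgroup$.

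For requirement (2) I would start from \Cref{thm:perfect_games_direc_zero}, which says $\game_{ls}$ is determined by $\validpoly$. The winning condition for a question pair $(i,j)$ (equation $i$ to Alice, variable $j$ to Bob) factors: a response $(\vec a_i,b)$ wins iff $\vec a_i$ satisfies equation $i$ and, when $j\in\varset(i)$, the $j$-th entry of $\vec a_i$ equals $b$. Using \Cref{eq:prod_of_lsprojectors} and \Cref{eq:ls_proj_commute}, the element of $\validpoly$ attached to $(i,j)$ is $P_{ij}-1$, where $P_{ij}=W_i$ if $j\notin\varset(i)$ and $P_{ij}=W_iC_{ij}$ if $j\in\varset(i)$; here $W_i=\sum_{\vec a_i\ \text{satisfies}\ i}\proja{i}{\vec a_i}{1}$ and $C_{ij}=\sum_{r'\in[r]}\proja{i}{a_j=r'}{1}\proja{j}{r'}{2}$ (first subscript the equation, second the variable). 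One checks $W_i$ and $C_{ij}$ commute (both are combinations of the orthogonal family $\{\proja{i}{\vec a_i}{1}\}$, with $C_{ij}$ also carrying a commuting factor from Bob), so $P_{ij}$ is a projector. By \Cref{thm:perfect_games_direc_zero}, $(\rep,\state)$ is perfect iff $\rep(P_{ij})\state=\state$ for all $(i,j)$, and since $P_{ij}$ is a product of commuting projectors this is equivalent to $\rep(W_i)\state=\state$ for all $i\in[m]$ together with $\rep(C_{ij})\state=\state$ for all $i\in[m]$, $j\in\varset(i)$ (using $\Pi_1\Pi_2\state=\state\Rightarrow\Pi_1\state=\Pi_2\state=\state$ for commuting projectors).

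It then remains to match these two families of conditions to the two families of clauses in $\dset$. Expanding $\prod_{t\in\varset(j)}(\lcycla{t}{1}{j})^{d_t^{(j)}}$ over the marginal projectors and collapsing the product via \Cref{eq:prod_of_lsprojectors} gives
\[
\textstyle\prod_{t\in\varset(j)}\big(\lcycla{t}{1}{j}\big)^{d_t^{(j)}}=\sum_{\vec a_j}\exp\!\big(\tfrac{2\pi i}{r}\sum_{t\in\varset(j)}d_t^{(j)}(a_j)_t\big)\proja{j}{\vec a_j}{1},
\]
so the argument of \Cref{thm:Mod(r)_games} yields $\rep\big(\exp(-2\pi s_j i/r)\prod_{t\in\varset(j)}(\lcycla{t}{1}{j})^{d_t^{(j)}}\big)\state=\state$ iff $\rep(W_j)\state=\state$. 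For the consistency clauses, commutativity of Alice's and Bob's operators gives
\[
\textstyle\lcycla{t}{1}{j}\big(\cycla{t}{2}\big)^*=\sum_{r',a\in[r]}\exp\!\big(\tfrac{2\pi i}{r}(r'-a)\big)\proja{j}{a_t=r'}{1}\proja{t}{a}{2},
\]
and grouping the terms by the residue of $r'-a$ modulo $r$ displays $\rep\big(\lcycla{t}{1}{j}(\cycla{t}{2})^*\big)$ as a unitary with orthogonal spectral projectors summing to $I$ whose eigenvalue-$1$ projector is $\rep(C_{jt})$; hence $\rep\big(\lcycla{t}{1}{j}(\cycla{t}{2})^*\big)\state=\state$ iff $\rep(C_{jt})\state=\state$. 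Combining the last two paragraphs, $(\rep,\state)$ is perfect iff $\rep(f)\state=0$ for all $f\in\dset$, which gives (2). I expect the main obstacle to be the bookkeeping in the previous paragraph — confirming that $W_i$ and $C_{ij}$ commute and that $W_iC_{ij}$ is precisely the valid-response projector for $(i,j)$ — and pinning down the index convention in the consistency clauses $\lcycla{t}{1}{j}(\cycla{t}{2})^*-1$, which are only defined (and only needed) for pairs with $t\in\varset(j)$; the remaining steps are the routine Fourier expansions already carried out for \Cref{thm:Mod(r)_games}.
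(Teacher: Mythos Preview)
Your proposal is correct and follows essentially the same route as the paper: both arguments reduce the toric clauses to the ``Alice satisfies equation $j$'' and ``Alice--Bob consistency on variable $t$'' projector conditions via the identical Fourier expansion of $\prod_{t\in\varset(j)}(\lcycla{t}{1}{j})^{d_t^{(j)}}$ through \Cref{eq:prod_of_lsprojectors}, and both invoke the inverse cyclic-unitary transformation to conclude $\uGA\cong\C[\lsgroup]$. Your treatment is somewhat more explicit than the paper's in one respect: you route the ``determined by'' claim through $\validpoly$ and the factorization $P_{ij}=W_iC_{ij}$ of the valid-response projector into commuting projectors, which makes the iff direction transparent, whereas the paper simply verifies each clause condition is equivalent to the corresponding winning-projector condition and then asserts ``Thus it is clear that $\game_{ls}$ is determined by $\dset$.'' Your observation about the index range in the consistency clauses (that $\lcycla{t}{1}{j}$ is only defined for $t\in\varset(j)$, so the set should really be indexed over such pairs) is also a fair point; the paper does not address it.
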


\begin{proof}
First we note that expanding out the product and applying \Cref{eq:prod_of_lsprojectors} gives
\begin{align}
    \prod_{t \in \varset(j)} \left(\lcycla{t}{1}{j}\right)^{d_t^{(j)}} &= 
    \prod_{t \in \varset(j)} \left(\sum_{r'=1}^r \exp(\frac{2\pi d_t^{(j)} r' i}{r}) \proja{j}{a_{t} = r'}{1}  \right)\\
    &=\sum_{\vec{a}_j} \exp( \frac{2 \pi i}{r} \sum_{t \in \varset(j)} d_t^{(j)} a^{(j)}_t ) \proja{j}{\vec{a}_j}{1} \label{eq:prod_of_ls_cycla}
\end{align}
where we wrote $\vec{a_j} = (a^{(j)}_{t_1}, a^{(j)}_{t_2}, \ldots, a^{(j)}_{t_k})$ for $\{t_1, t_2, \ldots, t_k\} \in \varset(j)$. Then define 
\begin{align}
    A(j) = \left\{\vec{a}_j : \sum_{t \in \varset(j)} d_t^{(j)} a^{(j)}_t = s_j \pmod r \right\}
\end{align}
to be the collection of winning responses Alice can send to question $j$. Then as a consequence of \Cref{eq:prod_of_ls_cycla}, we have for any commuting operator strategy $(\rep,\state)$ that the condition
\begin{subequations}
\begin{align}
    \rep\left( \prod_{t \in \varset(j)} \left(\lcycla{t}{1}{j}\right)^{d_t^{(j)}} \right)\state &= \exp(\frac{2 \pi  i s_j}{r}) \state 
\end{align} 
is equivalent to 
\begin{align}
    \rep\left(\sum_{\vec{a}_j \in A(j)} \rep( \proja{j}{\vec{a}_j}{1} ) \right)\state &= \state.
\end{align}
\end{subequations}
Hence the condition 
\begin{align}
    \rep\left(\exp(- \frac{2 \pi s_j i}{r}) \prod_{t \in \varset(j)} \left(\lcycla{t}{1}{j}\right)^{d_t^{(j)}} - 1\right)\state = 0
\end{align}
for all $j \in [m]$ ensures that Alice's responses in the game $\game_{ls}$ are always winning. Similarly, the condition 
\begin{align}
   \rep\left( \lcycla{t}{1}{j} \left(\cycla{t}{2}\right)^* \right) \state = \state
\end{align}
ensures that 
\begin{align}
    \sum_{r' \in [r]}\rep \left(\proja{j}{a_{t} = r'}{1}\proja{t}{r'}{2}\right)\state = \state
\end{align}
and thus Bob's responses are also always winning. Thus it is clear that $\game_{ls}$ is \dby $\dset$.

It remains to show that all the elements of $\dset$ are of the correct form. But we have that the elements $\lcycla{t}{1}{j}$ and $\cycla{2}{t}$ generate the elements $\proja{j}{a_{t} = r'}{1}$ and $\proja{j}{a}{2}$ through the same inverse transformation as given for the cyclic unitary generators in \Cref{sssec:uGA}. And we also know that the elements $\proja{j}{a_{t} = r'}{1}$ generate the elements $\proja{j}{\vec{a}_i}{1}$ by \Cref{eq:prod_of_lsprojectors}. Then $\uGA = \mathbb{C}[\lsgroup]$ and the result is clear. 
\end{proof}

\Cref{thm:LSGames_Toric} reduces the question of whether or not a game has a perfect commuting operator strategy to an instance of the subgroup membership problem. The next theorem lets us reduce further to the standard formulation in terms of the word problem. We prepare with a few definitions, following wherever possible the conventions laid out in \Cref{sssec:tgames_smp}.

\newcommand{\lsaltgp}{\lsgroup^{\textrm{aug}}}
\newcommand{\lsclgp}{\clausegp_{\textrm{ls}}}
\newcommand{\lsnclgp}{N_{\textrm{ls}}}
\newcommand{\betaelt}{\zeta}

\begin{defn}
Let $\game_{ls}$ be a linear systems game based on a system of $m$ equations defined as in \Cref{thm:LSGames_Toric} and introduce to shorthand $\betaelt = \exp(-\frac{2 \pi i}{r})$ to simplify notation. Then define the following groups:

\begin{enumerate}[\rm(1)]
    % \item $\betagp$ to be the cyclic group generated by a single element $\betaelt$ with order $r$;
    \item $\lsaltgp$ to be the subgroup of $\mathbb{C}[G_{ls}]$ generated by the elements $G_{ls} \cup \{\betaelt\}$;
    \item $\lsclgp < \lsaltgp$ to  be the subgroup of $\lsaltgp$ generated by the set of elements 
    \begin{align}
        \left\{\betaelt^{s_i} \prod_{t \in \varset(i)} \left(\lcycla{t}{1}{i}\right)^{d_t^{(i)}}\right\}_{i \in [m]} \bigcup\ \left\{\cycla{t}{2}\left(\lcycla{t}{1}{i}\right)^{-1} \right\}_{i \in [m], t \in [r]};
    \end{align}
    \item $\lsaltgp(2)$ to be the subgroup of $\lsaltgp$ generated by the elements 
    \begin{align}
        \{\cycla{t}{2}\}_{t \in [r]} \cup \{\betaelt\};
    \end{align} 
    \item $\lsnclgp(2)$ to be the normal subgroup of $\lsaltgp(2)$ generated by the elements 
    \begin{align}
        \left\{\betaelt^{s_i} \prod_{t \in \varset(i)} \left(\cycla{t}{2}\right)^{d_t^{(i)}}\right\}_{i \in [m]} \bigcup\ \left\{\cycla{t}{2}\cycla{t'}{2}\cycla{t}{2}^{-1}\cycla{t'}{2}^{-1}\right\}_{t, t' \in \varset(i), i \in [m]}.
    \end{align}
\end{enumerate}

\end{defn}

\begin{thm}
A linear system game $\game_{ls}$, defined as in \Cref{thm:LSGames_Toric}, has a perfect commuting operator strategy iff any of the equivalent conditions are satisfied:
\begin{enumerate}[\rm(i)]
\item $\betaelt^{s} \notin \lsclgp$ for all $s \in [r-1]$;%\{1,2,\ldots,r-1\}$;
\item $\betaelt^{s} \notin \lsnclgp(2)$ for all $s \in [r-1]$;%\{1,2,\ldots,r-1\}$;
\item Letting $[\betaelt^{s}]$ denote the image of  $\betaelt^{s} \in \lsaltgp(2)$ in the group $\lsaltgp(2)/\lsnclgp(2)$, then $[\betaelt^{s}] \neq 1$ for all $s \in [r-1]$. %\{1,2,\ldots,r-1\}$. 
\end{enumerate}
\end{thm}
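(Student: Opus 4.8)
The plan is to establish four links: $\text{perfect}\iff\text{(i)}$, $\text{(ii)}\iff\text{(iii)}$, $\text{(iii)}\Rightarrow\text{(i)}$, and $\text{(i)}\Rightarrow\text{(ii)}$, which together show all four statements are equivalent. The first link is essentially already in hand: by \Cref{thm:LSGames_Toric} the game $\game_{ls}$ is torically determined by $\dset=\clauseset-1$ whose clauses $h_i:=\betaelt^{s_i}\prod_{t\in\varset(i)}(\lcycla{t}{1}{i})^{d^{(i)}_t}$ and $g_{t,i}:=\lcycla{t}{1}{i}(\cycla{t}{2})^{-1}$ generate $\lsclgp$; since each clause is a unitary in $\C[\lsgroup]$ its adjoint is its inverse, so the group $\clausegp$ of \Cref{thm:perfect_unitary_games} generated by $\clauseset\cup\clauseset^*$ equals $\lsclgp$. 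Then \Cref{thm:perfect_unitary_games} gives $\text{perfect}\iff\lsclgp\cap\C=\{1\}$, and by the reduction in \Cref{sssec:tgames_smp} (every scalar lying in $\lsclgp$ lies in $\langle\betaelt\rangle$) this is exactly condition (i). The link $\text{(ii)}\iff\text{(iii)}$ is just the definition of the quotient group: $[\betaelt^s]=1$ in $\lsaltgp(2)/\lsnclgp(2)$ precisely when $\betaelt^s\in\lsnclgp(2)$. So the real work is $\text{(iii)}\Rightarrow\text{(i)}$ and $\text{(i)}\Rightarrow\text{(ii)}$.

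For the bookkeeping I would first record the structure of the ambient group. Writing $A_i$ for the subgroup of $\lsgroup$ generated by $\{\lcycla{t}{1}{i}:t\in\varset(i)\}$ — abelian of exponent $r$, since these come from a single measurement — and $B\cong\mathbb Z_r*\cdots*\mathbb Z_r$ for the subgroup generated by Bob's cyclic unitaries $\{\cycla{t}{2}:t\in[n]\}$, the only relations tying these families together are Alice--Bob commutation and within-equation Alice commutation, so $\lsaltgp\cong\big(\ast_{i\in[m]}A_i\big)\times B\times\langle\betaelt\rangle$ with $\langle\betaelt\rangle$ central. Consequently $\lsaltgp(2)=B\times\langle\betaelt\rangle$ and $\Gamma:=\lsaltgp(2)/\lsnclgp(2)$ is the mod-$r$ solution group; in this language $h_i=\tilde h_i\,\betaelt^{s_i}$ with $\tilde h_i:=\prod_{t\in\varset(i)}(\lcycla{t}{1}{i})^{d^{(i)}_t}\in A_i$.

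Next I would prove $\text{(iii)}\Rightarrow\text{(i)}$ in contrapositive form, per value of $s$: assuming $\betaelt^s=w(h_i,g_{t,i})$ for a group word $w$, I want $[\betaelt^s]=1$ in $\Gamma$. Projecting this identity onto the three factors of $\lsaltgp$ yields (A) $w(\tilde h_i,\lcycla{t}{1}{i})=1$ in $\ast_iA_i$, (B) $w(1,(\cycla{t}{2})^{-1})=1$ in $B$, and (C) $\sum_i\epsilon_i(w)\,s_i\equiv s\pmod r$, where $\epsilon_i(w)$ is the total exponent of $h_i$ in $w$. The automorphism of $\mathbb Z_r*\cdots*\mathbb Z_r$ that inverts each free-factor generator turns (B) into $w(1,\cycla{t}{2})=1$ in $B$. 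Now I would feed (A) into the homomorphism $\psi:\ast_iA_i\to\Gamma$, $\lcycla{t}{1}{i}\mapsto[\cycla{t}{2}]$ — well defined because $A_i$ is abelian of exponent $r$ and the $[\cycla{t}{2}]$ with $t\in\varset(i)$ commute in $\Gamma$ — and feed (B) into the quotient map $B\to\Gamma$. The first, using $\psi(\tilde h_i)=\prod_t[\cycla{t}{2}]^{d^{(i)}_t}=[\betaelt]^{-s_i}$ (this is the defining relation $k_i=1$ of $\Gamma$), the centrality of $[\betaelt]$, and (C), gives $w(1,[\cycla{t}{2}])=[\betaelt]^s$; the second gives $w(1,[\cycla{t}{2}])=1$. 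Comparing forces $[\betaelt]^s=1$ in $\Gamma$; quantifying over $s\in[r-1]$ this yields both $\text{(ii)}\Rightarrow\text{(i)}$ and $\text{(iii)}\Rightarrow\text{(i)}$.

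The remaining implication $\text{(i)}\Rightarrow\text{(ii)}$ — equivalently, by the first paragraph, $\text{perfect}\Rightarrow\text{(ii)}$ — is where I expect the main difficulty. The plan is to take a perfect strategy $(\rep,\state)$, i.e.\ a $*$-representation of $\C[\lsgroup]$ with $\rep(c)\state=\state$ for all $c\in\lsclgp$ and with $\rep(\betaelt)=\betaelt I$ an operator of order exactly $r$, and to extract from it a representation of $\Gamma$ in which $[\betaelt]$ still has order $r$ — which immediately gives $[\betaelt^s]\neq1$, hence $\betaelt^s\notin\lsnclgp(2)$, for all $s\in[r-1]$. The natural vehicle is the cyclic subspace $\cH_0:=\overline{\rep(\C[B])\state}$: the consistency clauses $g_{t,i}$ force $\rep(\lcycla{t}{1}{i})$ and $\rep(\cycla{t}{2})$ to act identically on $\state$ (using Alice--Bob commutation), and one wants this to propagate so that the restrictions of $\rep(\cycla{t}{2})$ and $\rep(\betaelt)$ to $\cH_0$ satisfy the defining relations of $\Gamma$. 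The hard part will be verifying that \emph{every} relation of $\Gamma$ — in particular the within-equation commutations $[\cycla{t}{2},\cycla{t'}{2}]=1$ — holds on all of $\cH_0$ rather than merely on $\state$; this is delicate precisely because $\lsclgp$ is \emph{not} normal in $\lsaltgp$ (Alice's and Bob's variables commute for every pair in $\lsaltgp$, but in $\Gamma$ only within a common equation), so one cannot simply pass to a quotient group, and one must exploit that $\betaelt^s$ lies in the central factor $\langle\betaelt\rangle$. A purely group-theoretic alternative reaching the same conclusion is to prove $\lsnclgp(2)\cap\langle\betaelt\rangle\subseteq\lsclgp$ directly, via the coordinate-projection argument above together with a normal-form analysis of normal closures inside the free/direct product $\lsaltgp$; I expect that to be the most computation-heavy step.
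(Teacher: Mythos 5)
Your links ``perfect $\iff$ (i)'' and ``(ii) $\iff$ (iii)'' match the paper, and your argument for (iii) $\Rightarrow$ (i) is correct and structurally cleaner than the paper's: you project the identity $\betaelt^s=w(h_i,g_{t,i})$ (where $h_i=\betaelt^{s_i}\prod_t(\lcycla{t}{1}{i})^{d_t^{(i)}}$, $g_{t,i}=\cycla{t}{2}(\lcycla{t}{1}{i})^{-1}$) onto the factors of $\lsaltgp\cong(\ast_iA_i)\times B\times\langle\betaelt\rangle$ and then transport Alice's component through the homomorphism $\ast_iA_i\to\lsaltgp(2)/\lsnclgp(2)$, $\lcycla{t}{1}{i}\mapsto[\cycla{t}{2}]$, whereas the paper achieves the same separation of Alice's and Bob's variables by an explicit replacement procedure applied to a fixed word decomposition $v(1,2)_1w(1)_1\cdots v(1,2)_{L+1}$. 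Both rest on the same principle (Alice's and Bob's parts are independent in $\lsgroup$, and Alice's relations map onto Bob's in the quotient), so I would not call yours a different approach so much as a tidier packaging of it.

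The genuine gap is in (i) $\Rightarrow$ (ii). Your primary plan --- building a representation of $\lsaltgp(2)/\lsnclgp(2)$ from a perfect strategy on a cyclic subspace --- is left unexecuted, and, as you note, needs relations to hold on a whole subspace rather than on $\state$ alone; that is more work than required. The group-theoretic alternative you sketch at the end is the correct direction, but you describe it as a heavy normal-form analysis of $\lsnclgp(2)\cap\langle\betaelt\rangle$ inside $\lsaltgp$, which overshoots badly. The paper proves the stronger and simpler containment $\lsnclgp(2)\subseteq\lsclgp$ in a few lines. Two observations suffice. First, the defining generators of $\lsnclgp(2)$ already lie in $\lsclgp$:
\[
h_i\prod_{t\in\varset(i)}g_{t,i}^{\,d_t^{(i)}}=\betaelt^{s_i}\prod_{t\in\varset(i)}\big(\cycla{t}{2}\big)^{d_t^{(i)}}\in\lsclgp,
\]
and the commutator of $g_{t_1,i}$ and $g_{t_2,i}$ reduces, via the within-equation Alice commutation of \Cref{eq:ls_cls_com_reln}, to $\cycla{t_1}{2}\cycla{t_2}{2}\cycla{t_1}{2}^{-1}\cycla{t_2}{2}^{-1}\in\lsclgp$. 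Second, for any $w(2)\in\lsclgp\cap\lsaltgp(2)$ and any $t\in\varset(i)$,
\[
\Big(\cycla{t}{2}\big(\lcycla{t}{1}{i}\big)^{-1}\Big)\,w(2)\,\Big(\cycla{t}{2}\big(\lcycla{t}{1}{i}\big)^{-1}\Big)^{-1}=\cycla{t}{2}\,w(2)\,\big(\cycla{t}{2}\big)^{-1},
\]
since the Alice factor $\big(\lcycla{t}{1}{i}\big)^{-1}$ commutes with the Bob/scalar element $w(2)$ and cancels; thus conjugation of $\lsclgp\cap\lsaltgp(2)$ by $\cycla{t}{2}$ is implemented by conjugation by an element of $\lsclgp$, hence lands back in $\lsclgp$. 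With $\betaelt$ central, these two facts give that the normal closure in $\lsaltgp(2)$ of the generators in the first observation --- which is exactly $\lsnclgp(2)$ --- lies inside $\lsclgp$, finishing (i) $\Rightarrow$ (ii). Your concern that $\lsclgp$ is not normal in $\lsaltgp$ is precisely what the second observation circumvents: normality is never used, only this conjugation-absorption property, and the consistency generators $g_{t,i}$ supply it for free.
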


\begin{proof}
We first note that condition (i) is equivalent to the existence of a perfect commuting operator strategy by \Cref{thm:LSGames_Toric} and \Cref{thm:perfect_unitary_games}.

To show  (i) $\Rightarrow$ (ii) we show $\lsnclgp(2)$ is contained in $\lsclgp$. First note that for any $i \in [m]$ we have 
\begin{align}
    &\left(\betaelt^{s_i} \prod_{t \in \varset(i)} \left(\lcycla{t}{1}{i}\right)^{d_t^{(i)}}\right) \prod_{t \in \varset(i)} \left( \left(\lcycla{t}{1}{i}\right)^{-1} \cycla{t}{2}\right)^{d_t^{(i)}} = \betaelt^{s_i} \prod_{t \in \varset(i)} \left(\cycla{t}{2}\right)^{d_t^{(i)}} \in \lsclgp
\end{align}
and, for any $i \in [m]$ and $t_1, t_2 \in \varset(i)$ we can multiply together generators of the group $\lsclgp$ to find
\begin{subequations}
\begin{align}
   &\left(\cycla{t_1}{2}\left(\lcycla{t_1}{1}{i}\right)^{-1} \right)
   \left(\cycla{t_2}{2}\left(\lcycla{t_2}{1}{i}\right)^{-1} \right)
   \left(\cycla{t_1}{2}\left(\lcycla{t_1}{1}{i}\right)^{-1} \right)^{-1}
   \left(\cycla{t_2}{2}\left(\lcycla{t}{1}{i}\right)^{-1} \right)^{-1} \nonumber \\
   &\hspace{50pt}= \left(\cycla{t_1}{2}\cycla{t_2}{2}\cycla{t_1}{2}^{-1}\cycla{t_2}{2}^{-1}\right) \left(\left(\lcycla{t_1}{1}{i}\right)^{-1}\left(\lcycla{t_2}{1}{i}\right)^{-1} \lcycla{t_1}{1}{i}
   \lcycla{t_2}{1}{i}\right).
   \end{align}
   Now we cancel 
    the $\lcycla{t_1}{1}{i}$ terms using \Cref{eq:ls_cls_com_reln} to get
   \begin{align} 
   \hspace{50pt} \cycla{t_1}{2}\cycla{t_2}{2}\cycla{t_1}{2}^{-1}\cycla{t_2}{2}^{-1} \in \lsclgp.
\end{align}
\end{subequations}
Finally, we note that for any elements $w(2) \in \lsclgp \cap \lsaltgp(2)$ and $\cycla{t}{2}$ generating $\lsaltgp(2)$ we have 
\begin{align}
    \left(\cycla{t}{2}\left(\lcycla{t}{1}{i}\right)^{-1}\right) w(2) \left(\cycla{t}{2}\left(\lcycla{t}{1}{i}\right)^{-1} \right)^{-1} = \cycla{t}{2} w(2) \cycla{t}{2}^{-1} \in \lsclgp
\end{align}
and thus, the normal closure in $\lsaltgp(2)$ of any element in $\lsclgp \cap \lsaltgp(2)$ is also contained in $\lsclgp$ (recall that the element $\betaelt$ is central). We conclude that $\lsnclgp(2) \subseteq \lsclgp$, as desired. 

To show that (ii) $\Rightarrow$ (i) we assume $\betaelt^s \in \lsclgp$ for some $s \in [r-1]$. %\{1,2,\ldots,r-1\}$. 
Then, there exists elements $w(1)_1, \ldots, w(1)_L$ and $v(1,2)_1, \ldots, v(1,2)_{L+1}$ with each $w(1)_\ell $ equal to a product of elements or inverses of elements in the set
\begin{align}
    \left\{\betaelt^{s_i} \prod_{t \in \varset(i)} \left(\lcycla{t}{1}{i}\right)^{d_t^{(i)}}\right\}_{i \in [m]}\label{eq:w_set}
\end{align}
each $v(1,2)_{\ell'} $ equal to a product of elements or inverses of elements in the set 
\begin{align}
\left\{\cycla{t}{2}\left(\lcycla{t}{1}{i}\right)^{-1} \right\}_{i \in [m], t \in [r]} \label{eq:v_set}
\end{align}
and 
\begin{align}
    \betaelt^s = v(1,2)_1 w(1)_1 v(1,2)_2 w(1)_2 \cdots v(1,2)_L w(1)_L v(1,2)_{L+1}.
\end{align}
Then let the elements $w(2)_\ell$ for $\ell \in \{1,2,...,L\}$ be obtained 
by choosing some product of elements drawn from the set defined in \Cref{eq:w_set} and equal to $w(1)_\ell$, then making the replacement
\begin{align}
\betaelt^{s_i} \prod_{t \in \varset(i)} \left(\lcycla{t}{1}{i}\right)^{d_t^{(i)}} \rightarrow \betaelt^{s_i} \prod_{t \in \varset(i)} \left(\cycla{t}{2}\right)^{d_t^{(i)}}
\end{align}
to each element in the product.\footnote{Note we have not shown this definition of $w(2)_\ell$ is unique, i.e. we have not shown that this replacement defines a homomorphism, but we will not need to for the current proof.} Similarly let elements $v(1)_1, \ldots, v(1)_{L+1}$ be obtained from $v(1,2)_\ell$ by the replacement 
\begin{align}
    \cycla{t}{2}\left(\lcycla{t}{1}{i}\right)^{-1} \rightarrow \left(\lcycla{t}{1}{i}\right)^{-1}
\end{align}
and elements $v(2)_1, \ldots, v(2)_{L+1}$ be obtained
via the replacement \begin{align}
    \cycla{t}{2}\left(\lcycla{t}{1}{i}\right)^{-1} \rightarrow \cycla{t}{2}
\end{align}
applied to some fixed products of elements from \Cref{eq:v_set} equal to $v(1,2)_\ell$.

Then commuting $\cycla{t}{1}$ and $\cycla{t}{2}$
 elements gives
\begin{align}
  \betaelt^{s}  & = v(1,2)_1 w(1)_1 v(1,2)_2 w(1)_2 \cdots v(1,2)_L w(1)_L v(1,2)_{L+1} \nonumber\\
    &= v(1)_1 w(1)_1 v(1)_2 w(1)_2 \cdots v(1)_L w(1)_L v(1)_{L+1} v(2)_1 v(2)_2 \cdots v(2)_{L+1}. \\
\end{align}
From this we conclude 
\begin{align}
    v(2)_1 v(2)_2 \cdots v(2)_{L+1} = 1
\end{align}
since the word $v(2)_1 v(2)_2 \cdots v(2)_{L+1}$ contains only $\cycla{t}{2}$ elements (and no product of $\cycla{t}{2}$ elements and their inverses can be to equal $\betaelt$ or to any product of $\cycla{t}{1}$ elements) and hence
\begin{align}
    v(1)_1 w(1)_1 v(1)_2 w(1)_2 \cdots v(1)_L w(1)_L v(1)_{L+1} = \betaelt^s.
\end{align}
But then we also have 
\begin{align}
    v(2)_1 w(2)_1 v(2)_2 w(2)_2 \cdots v(2)_L w(2)_L v(2)_{L+1} = \betaelt^s
\end{align}
since the relations between $\cycla{t}{2}$ elements are the same as those between the $\lcycla{t}{1}{i}$ elements. Then the calculation
\begin{subequations}
\begin{align}
    \betaelt^s &= \left(\prod_{\ell \in [L]} v(2)_{\ell} w(2)_{\ell}\right) v(2)_{L + 1} \\
    &= \left(\prod_{\ell \in [L]} \left( \prod_{\ell' < \ell} v(2)_{\ell'} \right) w(2)_{\ell} \left( \prod_{\ell' < \ell} v(2)_{\ell'} \right)^{-1} \right) v(2)_1 v(2)_2 \cdots v(2)_{L + 1} \\
    &= \left(\prod_{\ell \in [L]} \left( \prod_{\ell' < \ell} v(2)_{\ell'} \right) w(2)_{\ell} \left( \prod_{\ell' < \ell} v(2)_{\ell'} \right)^{-1} \right) \in \lsnclgp(2)
\end{align}
\end{subequations}
shows $\betaelt^s \in \lsnclgp(2)$, as desired. 

Finally we note (ii) $\Leftrightarrow$ (iii) by the 
definition of a quotient group.
\end{proof}

\sec{\NullSS applied to Synchronous Games}
\label{sec:Synchronous}

As discussed in the introduction, a two player game is called synchronous if it includes ``consistency-checks'' where Alice and Bob are sent the same question and win iff they send the same response. Below we give a formal definition of synchronous games using the scoring function and probability distribution description of games introduced in \Cref{sssec:standard_games_notation}. 

\begin{defn} \label{defn:synch_games}
A two player, $n$ question, $m$ response game $\game$ defined by scoring function $V$ and question distribution $\mu$ is called a \df{synchronous game} iff, for all $i \in [n]$ we have
\begin{align}
    \mu(i,i) > 0 
\end{align}
and 
\begin{align}
    V(a,b | i, i) = \delta_{a,b} 
\end{align}
where $\delta_{a,b}$ denotes the Kronecker delta function. 
\end{defn}

An immediate consequence of \Cref{defn:synch_games} in terms of the 
notion of
invalid \detset introduced in \Cref{sssec:valid_invaid_response_det_sets} is the following claim.\footnote{We note this claim can hold true for a larger class of games than synchronous ones. In most of this section the only property of synchronous games which will be used is that they satisfy \Cref{eq:synch_detset_membership}. Thus, most of the techniques of this section apply to a slightly larger class of games than synchronous ones.} 

\begin{claim} \label{claim:sych_invalidpoly}
Let $\game$ be a synchronous game and $\invalidpoly$ be the invalid elements of $\game$. Then we have 
\begin{align}
    \proja{i}{a}{1}\proja{i}{b}{2} \in \cLI(\invalidpoly) \label{eq:synch_detset_membership}
\end{align}
for all $i \in [n]$ and $a, b \in [m]$ with $a \neq b$. 
\end{claim}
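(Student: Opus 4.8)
The plan is to simply unwind the definitions; no analysis is needed. First I would invoke the synchronicity hypothesis of \Cref{defn:synch_games}: since $\mu(i,i) > 0$ for every $i \in [n]$, the question vector $(i,i)$ lies in the set $\questions$ of question vectors asked with positive probability. Next I would identify the invalid responses to this particular question vector: because $V(a,b \mid i,i) = \delta_{a,b}$, a response pair $(a,b)$ scores $1$ exactly when $a = b$, so $\br(i,i) = \{(a,b) : a \neq b\}$.

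Then I would appeal directly to the definition of the invalid elements given in \Cref{defn:valid_invalid_poly}: $\invalidpoly$ contains the products $\prod_{\alpha\in[k]}\proja{i(\alpha)}{a(\alpha)}{\alpha}$ ranging over all $(\vec{i},\vec{a}) \in \questions \times \br(\vec{i})$. For a two-player game, taking $\vec{i} = (i,i) \in \questions$ and $\vec{a} = (a,b) \in \br(i,i)$ for any $a \neq b$ shows that $\proja{i}{a}{1}\proja{i}{b}{2} \in \invalidpoly$ itself. Since a left ideal contains its generating set, this immediately gives $\proja{i}{a}{1}\proja{i}{b}{2} \in \cLI(\invalidpoly)$, which is exactly \Cref{eq:synch_detset_membership}.

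The only thing resembling an obstacle is a bookkeeping point: one should check that the convention $\prod_{\alpha \in [k]} \proja{i(\alpha)}{a(\alpha)}{\alpha}$ for $k = 2$ produces the element $\proja{i}{a}{1}\proja{i}{b}{2}$ (and not some other ordering), but since projectors of distinct players commute in $\uGA$ this is immaterial. So the claim is really a restatement of the definitions, and I would keep the written proof to a couple of sentences.
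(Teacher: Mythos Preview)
Your proposal is correct and matches the paper's own proof essentially verbatim: both simply observe that $\mu(i,i)>0$ and $V(a,b\mid i,i)=0$ for $a\neq b$ force $\proja{i}{a}{1}\proja{i}{b}{2}\in\invalidpoly$ by definition, hence it lies in $\cLI(\invalidpoly)$.
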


\begin{proof}
Immediate since for all for all $i \in [n]$ and $a, b \in [m]$ with $a \neq b$ we have $ V(a,b | i, i) = 0$ and $\mu(i,i) > 0$ by definition of a synchronous game, hence $\proja{i}{a}{1}\proja{i}{b}{2} \in \invalidpoly$ by the definition of $\invalidpoly$.
\end{proof}

Synchronous games can be studied using the standard techniques of this paper, where we consider representations of the algebra $\uGA$ which satisfy directional zero constraints.
However, Paulsen with various
collaborators 
\cite{paulsen2016estimating} 
\cite[Theorem 3.2]{helton2017algebras},
found an equivalent simpler formulation using a `smaller'
algebra which we denote $\uGA(1)$,
and using hard zeroes arising from an ideal $\ide(\sbr(1))$ of $\uGA(1)$.
Also the papers \cite{paulsen2016estimating,kim2018synchronous}
  show that
 the synchronous value of a game is given by
 the trace of a bilinear function
 on $\uGA(1)$.
The first goal in this section  is to show
how this
reconciles with our \NullSSs.
 This is the substance of 
\Cref{thm:syncRep} in \Cref{sec:syncAlg}.

Next in   \Cref{sec:traceNSS} we turn to the tracial \NullSS and recall that the theory of
Null- and Positivstellens\"atze appropriate to tracial situations go back to \cite{klepConnes}
and is well developed in
subsequent papers which are
summarized in \cite[Chapter 5]{klepBook}; see also \cite{klepPTrace}.

Finally, in \Cref{sec:GBExamples1} we demonstrate (on a graph coloring problem) a computer algorithm based on Gr\"obner Bases plus \NullSS.

\ssec{Synchronous two player games in terms of an algebra}
\label{sec:syncAlg}

In what follows let:
\begin{enumerate}[\rm(1)]
    \item $\uGA$ be the universal game algebra;
    \item $\cLI(\br)$ be the left ideal of $\uGA$ generated by the set $\left\{\prod_\alpha e(\alpha)^{i(\alpha)}_{a(\alpha)} : (\vec{i},\vec{a}) \in \br\right\}$;
    \item $\cLI(\sbr)$ be the left ideal of $\uGA$ generated by the set $$\left\{\prod_\alpha \proja{i(\alpha)}{a(\alpha)}{1} : (\vec{i},\vec{a}) \in \br\right\} \cup \left\{\prod_\alpha \proja{i(\alpha)}{a(\alpha)}{2} : (\vec{i},\vec{a}) \in \br\right\} \cup \left\{\proja{i}{a}{1} - \proja{i}{a}{2} : i \in [n], a \in [m] \right\};$$
    \item $\uGA(1)$ be the subalgebra of $\uGA$ generated by $\proja{i}{a}{1}$ only; 
    \item $\ide(\sbr(1))$ be the two sided ideal of $\uGA(1)$ generated by $\left\{\prod_\alpha \proja{i(\alpha)}{a(\alpha)}{1} : (\vec{i},\vec{a}) \in \br\right\}.$
\end{enumerate}

Now we state the 
objective of this section.

\begin{thm}
\label{thm:syncRep}
A synchronous game characterized by a set $\br$ of invalid responses has a perfect commuting operator strategy iff any of the equivalent conditions are satisfied:
\begin{enumerate}[\rm(i)]
    \item\label{it:cond1} There exists a $*$-representation $\pi: \uGA \rightarrow \cB(\cH)$ and a state $\psi \in \cH$ satisfying 
    \begin{align}
        \pi(\cLI(\br)) \psi = \{0\};
    \end{align}
    \item[\rm(i)']\label{it:cond1'}
    $\zd^{\rm re,\uGA}(\br)\neq\varnothing$;
    \item\label{it:cond2} There exists a $*$-representation $\pi: \uGA \rightarrow \cB(\cH)$ and a state $\psi \in \cH$ satisfying 
    \begin{align}
        \pi(\cLI(\sbr)) \psi = \{0\};
    \end{align}
    
    \item[\rm(ii)']\label{it:cond2'}
    $\zd^{\rm re,\uGA}(\sbr)\neq\varnothing$;
    
    \item\label{it:cond3} There exists a $*$-representation $\pi': \uGA(1) \rightarrow \cB(\cH)$ and a tracial state $\psi \in \cH$ satisfying 
    \begin{align}
        \pi'(\ide(\sbr(1)) \psi = \{0\};
    \end{align}

\def\cW{\mathcal W}
    \item\label{it:cond4} There exists a $*$-representation $\pi'$ of $\uGA(1)$ mapping into a tracial von Neumann algebra
    $\cW\subseteq \cB(\cH)$ satisfying
    \begin{align}
        \pi'(\ide(\sbr(1))) = \{0\};
    \end{align}
%     \item[\rm(iv)']\label{it:cond4'}
%     $\zh^{\rm re,\uGA}(\sbr(1))\neq\varnothing$.
 \end{enumerate}
\end{thm}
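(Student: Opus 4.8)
The plan is to prove the cycle ``$\game$ has a perfect commuting operator strategy'' $\Leftrightarrow$ (i) $\Leftrightarrow$ (ii) $\Rightarrow$ (iii) $\Rightarrow$ (iv) $\Rightarrow$ (ii), treating (i)$'$ and (ii)$'$ as immediate restatements of (i) and (ii) by the definition of the relevant directional zero sets. The equivalence of (i) with existence of a perfect commuting operator strategy is read off from \Cref{thm:perfect_games_direc_zero} and \Cref{defn:dset}: the invalid elements $\invalidpoly$ of $\game$ are exactly the stated generators of the left ideal $\cLI(\br)$, the game is determined by $\invalidpoly$, and $\pi(\delt)\psi=0$ for every generator $\delt$ iff $\pi(\ell)\psi=0$ for every $\ell\in\cLI(\br)$. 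The implication (ii) $\Rightarrow$ (i) is then just the set inclusion $\cLI(\br)\subseteq\cLI(\sbr)$, which I would verify by writing each generator $\prod_\alpha\proja{i(\alpha)}{a(\alpha)}{\alpha}=\proja{i(1)}{a(1)}{1}\proja{i(2)}{a(2)}{2}$ of $\cLI(\br)$ as $\proja{i(1)}{a(1)}{1}\left(\proja{i(2)}{a(2)}{2}-\proja{i(2)}{a(2)}{1}\right)+\proja{i(1)}{a(1)}{1}\proja{i(2)}{a(2)}{1}$ and noting both summands lie in $\cLI(\sbr)$.

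For (i) $\Rightarrow$ (ii) I would run the standard ``synchronous'' argument. Given a perfect strategy $(\pi,\psi)$, \Cref{claim:sych_invalidpoly} supplies $\pi(\proja{i}{a}{1}\proja{i}{b}{2})\psi=0$ for all $i$ and all $a\ne b$; inserting $\sum_b\pi(\proja{i}{b}{2})=I$ on the right and $\sum_b\pi(\proja{i}{b}{1})=I$ on the left then gives $\pi(\proja{i}{a}{1})\psi=\pi(\proja{i}{a}{1}\proja{i}{a}{2})\psi=\pi(\proja{i}{a}{2})\psi$, so $\pi(\proja{i}{a}{1}-\proja{i}{a}{2})\psi=0$. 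Combining this with the commutation of Alice's and Bob's operators, I would check that each of the three families of generators of $\cLI(\sbr)$ annihilates $\psi$: for the all-Alice family $\prod_\alpha\proja{i(\alpha)}{a(\alpha)}{1}$ one replaces the inner Alice operator acting on $\psi$ by the matching Bob operator and recognizes an $\invalidpoly$-generator, and the all-Bob family is symmetric. Since $\cLI(\sbr)$ is a left ideal this yields $\pi(\cLI(\sbr))\psi=\{0\}$.

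The heart of the proof is the passage to the one-player tracial picture. For (ii) $\Rightarrow$ (iii) I would set $\phi(x):=\langle\psi,\pi(x)\psi\rangle$ for $x\in\uGA(1)$; using $\pi(\proja{i}{a}{1})\psi=\pi(\proja{i}{a}{2})\psi$ and the Alice--Bob commutation, a direct computation on a word $e_1e_2\cdots e_k$ in the generators $\proja{i}{a}{1}$ --- move the Bob-copy of $e_1$ from the far left, past all remaining (Alice) factors, to the far right --- shows $\phi(e_1e_2\cdots e_k)=\phi(e_2\cdots e_ke_1)$, so $\phi$ is a tracial state on $\uGA(1)$. It vanishes on the two-sided ideal $\ide(\sbr(1))$: for a generator $g=\prod_\alpha\proja{i(\alpha)}{a(\alpha)}{1}$, which lies in $\cLI(\sbr)$ so $\pi(g)\psi=0$, tracality gives $\phi(agb)=\phi(bag)=\langle\psi,\pi(ba)\pi(g)\psi\rangle=0$ for all $a,b\in\uGA(1)$. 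The GNS construction applied to $\phi$ then produces $\pi':\uGA(1)\to\cB(\cH)$ with a cyclic tracial vector $\psi'$, and for $h\in\ide(\sbr(1))$ one has $\|\pi'(h)\psi'\|^2=\phi(h^*h)=0$, i.e. $\pi'(\ide(\sbr(1)))\psi'=\{0\}$. For (iii) $\Rightarrow$ (iv) I would restrict $\pi'$ to $\overline{\pi'(\uGA(1))\psi'}$ so $\psi'$ becomes cyclic; then the vector state is a faithful normal trace on $\cW:=\pi'(\uGA(1))''$ and $\psi'$ is separating for $\cW$ (standard, given a cyclic tracial vector), so $\pi'(h)\psi'=0$ upgrades to $\pi'(h)=0$ on $\ide(\sbr(1))$. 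Finally, for (iv) $\Rightarrow$ (ii) I would realize $\cW$ in standard form on $L^2(\cW,\tau)$ with cyclic separating tracial vector $\xi$ and modular conjugation $J$, and define a $*$-representation $\rho$ of $\uGA$ by $\rho(\proja{i}{a}{1})=\pi'(\proja{i}{a}{1})$ (left multiplication) and $\rho(\proja{i}{a}{2})=J\pi'(\proja{i}{a}{1})J$ (the matching right multiplication); these families commute, each obeys the projector relations and sums to $I$ over $a$, one has $\rho(\proja{i}{a}{1})\xi=\rho(\proja{i}{a}{2})\xi$, and $\rho$ of any generator $\prod_\alpha\proja{i(\alpha)}{a(\alpha)}{1}$ or $\prod_\alpha\proja{i(\alpha)}{a(\alpha)}{2}$ kills $\xi$ because $\pi'(\ide(\sbr(1)))=0$, so $\rho(\cLI(\sbr))\xi=\{0\}$, which is (ii).

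The hard part will be the interface between the two ``two-player, directional-zero'' conditions (i)/(ii) and the two ``one-player, tracial'' conditions (iii)/(iv): extracting a cyclic tracial vector via GNS in one direction and, conversely, rebuilding a two-party commuting representation out of a tracial von Neumann algebra using the standard form and the conjugation $J$. The game-specific ingredients --- the synchronous trick behind (i) $\Rightarrow$ (ii) and the cyclicity computation establishing tracality --- are routine once \Cref{claim:sych_invalidpoly} is in hand; the care is all in the von Neumann algebra bookkeeping (faithfulness and normality of the trace, and the separating property used to pass from vector-annihilation to operator-annihilation).
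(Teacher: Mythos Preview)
Your proposal is correct, and the game-theoretic steps---the synchronous trick yielding $\pi(\proja{i}{a}{1})\psi=\pi(\proja{i}{a}{2})\psi$, the cyclicity computation giving tracality, and the inclusion $\ide(\sbr(1))\subseteq\cLI(\sbr)$---match the paper's \Cref{lem:LI=synchLI,lem:LI(1)inSynchLI,lem:vanishing_mappings_are_tracial} essentially verbatim. The route you take to close the cycle, however, is genuinely different from the paper's. You go (iv) $\Rightarrow$ (ii) via the standard form of a tracial von Neumann algebra: realize $\cW$ on $L^2(\cW,\tau)$ and set $\rho(\proja{i}{a}{2})=J\pi'(\proja{i}{a}{1})J$, obtaining the second player's operators from the commutant. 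The paper instead proves (iii) $\Rightarrow$ (i) purely at the level of linear functionals: it extends the tracial $\ell'$ on $\uGA(1)$ to a functional $\ell$ on all of $\uGA$ by the rule $w(e(1))u(e(2))\mapsto\ell'(w(e(1))u(e(1))^*)$, checks positivity of $\ell$ by a direct $\SOS$ computation exploiting tracality, verifies $\ell(\cLI(\br))=0$, and then invokes the \NullSS \Cref{thm:AB} to produce the perfect strategy. Your approach is the classical operator-algebraic argument from the synchronous-games literature and requires the Tomita--Takesaki package (modular conjugation, separating vectors); the paper's approach stays within its \NullSS framework and avoids that machinery, at the cost of a slightly longer positivity verification. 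A minor secondary difference: for (ii) $\Rightarrow$ (iii) you pass through GNS to build a new $(\pi',\psi')$, whereas the paper simply restricts the given $\pi$ to $\uGA(1)$ and keeps the same $\psi$, which is a little more direct.
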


The proof of (i), (ii), (iii) is based 
on several lemmas which we now give.
The core of the proofs of the lemmas come from 
\cite{paulsen2016estimating}
but we include a self contained account for clarities sake.

\begin{lem} \label{lem:LI=synchLI}
If the set of invalid responses $\br$ comes from a synchronous game, we have $\cLI(\br) = \cLI(\sbr)$. 
\end{lem}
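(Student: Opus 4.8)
We want to show $\cLI(\br) = \cLI(\sbr)$ for a synchronous game. Recall $\cLI(\br)$ is the left ideal generated by the products $\prod_\alpha e(\alpha)^{i(\alpha)}_{a(\alpha)}$ over losing response tuples $(\vec i, \vec a) \in \br$, while $\cLI(\sbr)$ is generated by the one-sided "Alice" products, the one-sided "Bob" products, AND the synchronizing differences $e(i)^1_a - e(i)^2_a$. So one inclusion — namely $\cLI(\br) \subseteq \cLI(\sbr)$ — might look suspicious at first glance, since $\cLI(\br)$ uses two-sided products $e(i(1))^1_{a(1)} e(i(2))^2_{a(2)}$ mixing Alice and Bob operators, which are not literally among the generators of $\cLI(\sbr)$. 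The key observation making this work is Claim~\ref{claim:sych_invalidpoly}: for a synchronous game, $e(i)^1_a e(i)^2_b \in \cLI(\br)$ for all $a \ne b$, hence $e(i)^1_a - e(i)^2_a \in \cLI(\br)$ as well, once we sum appropriately. Let me organize both inclusions around this.

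**Plan for $\cLI(\sbr) \subseteq \cLI(\br)$.** I would first show each type of generator of $\cLI(\sbr)$ lies in $\cLI(\br)$. For the synchronizing differences: fix $i, a$. Using $\sum_b e(i)^2_b = 1$ write $e(i)^1_a = e(i)^1_a \sum_b e(i)^2_b = e(i)^1_a e(i)^2_a + \sum_{b \ne a} e(i)^1_a e(i)^2_b$. By Claim~\ref{claim:sych_invalidpoly} each term $e(i)^1_a e(i)^2_b$ with $b \ne a$ lies in $\cLI(\br)$; likewise $e(i)^2_a = \sum_b e(i)^1_b e(i)^2_a = e(i)^1_a e(i)^2_a + \sum_{b \ne a} e(i)^1_b e(i)^2_a$ with the cross terms in $\cLI(\br)$. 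Subtracting, $e(i)^1_a - e(i)^2_a \in \cLI(\br)$. For the one-sided Alice products $\prod_\alpha e(i(\alpha))^1_{a(\alpha)}$ with $(\vec i, \vec a) \in \br$: here I would use the differences just obtained to "convert" the last (Bob) factor of the mixed generator $\prod_\alpha e(i(\alpha))^\alpha_{a(\alpha)}$ — formally, $\prod_\alpha e(i(\alpha))^1_{a(\alpha)}$ equals $\big(\prod_{\alpha} e(i(\alpha))^\alpha_{a(\alpha)}\big)$ up to adding a left multiple of some $e(i)^1_a - e(i)^2_a$ (which lies in $\cLI(\br)$), since swapping $e(i(2))^2_{a(2)}$ for $e(i(2))^1_{a(2)}$ changes the element by a left multiple of that difference. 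So the Alice product lies in $\cLI(\br)$; symmetrically for Bob products. Hence all generators of $\cLI(\sbr)$ are in $\cLI(\br)$, giving the inclusion.

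**Plan for $\cLI(\br) \subseteq \cLI(\sbr)$.** The generators of $\cLI(\br)$ are the mixed two-player products $\prod_\alpha e(i(\alpha))^\alpha_{a(\alpha)} = e(i(1))^1_{a(1)} e(i(2))^2_{a(2)}$ over $(\vec i, \vec a) \in \br$. Write this as $e(i(1))^1_{a(1)} e(i(2))^2_{a(2)} = e(i(1))^1_{a(1)} \big(e(i(2))^2_{a(2)} - e(i(2))^1_{a(2)}\big) + e(i(1))^1_{a(1)} e(i(2))^1_{a(2)}$. The first summand is a left multiple of the synchronizing generator $e(i(2))^1_{a(2)} - e(i(2))^2_{a(2)}$, hence in $\cLI(\sbr)$; the second is $e(i(1))^1_{a(1)} e(i(2))^1_{a(2)}$, which is (in the two-player case) exactly the Alice one-sided product $\prod_\alpha e(i(\alpha))^1_{a(\alpha)}$ associated to $(\vec i, \vec a) \in \br$, a generator of $\cLI(\sbr)$. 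So each generator of $\cLI(\br)$ lies in $\cLI(\sbr)$, and since $\cLI(\sbr)$ is a left ideal we conclude $\cLI(\br) \subseteq \cLI(\sbr)$.

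**Main obstacle.** The only real subtlety is bookkeeping with the commutation relations and the expansions $\sum_a e(i)^\alpha_a = 1$: one must be careful that "swapping a Bob factor for an Alice factor" genuinely costs only a \emph{left} multiple of a synchronizing generator (not a two-sided one), which works here because in the two-player product the Bob operator appears on the right. I would present the computation for a generic $(\vec i,\vec a)\in\br$ explicitly once and remark that closure under left multiplication finishes it. This is essentially the argument in \cite{paulsen2016estimating}, repackaged in the left-ideal language; no deep input beyond Claim~\ref{claim:sych_invalidpoly} and the defining relations of $\uGA$ is needed.
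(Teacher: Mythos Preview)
Your proof is correct and follows essentially the same approach as the paper's own proof: first obtain the synchronizing differences $\proja{i}{a}{1}-\proja{i}{a}{2}\in\cLI(\br)$ by summing the diagonal invalid responses and subtracting, then use these differences to convert between the mixed generators of $\cLI(\br)$ and the one-sided generators of $\cLI(\sbr)$ in both directions. The only cosmetic difference is that the paper writes the two partial sums $\proja{i}{a}{1}-\proja{i}{a}{1}\proja{i}{a}{2}$ and $\proja{i}{a}{2}-\proja{i}{a}{1}\proja{i}{a}{2}$ explicitly before subtracting, whereas you describe the same cancellation in words; also note your index placement $e(i)^\alpha_a$ is swapped relative to the paper's $e(\alpha)^i_a$.
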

\begin{proof}
We first show that $\proja{i}{a}{1} - \proja{i}{a}{2} \in \cLI(\br)$. Because $\br$ corresponds to a synchronous game, we have 
\begin{align}
    \proja{i}{a}{1}\proja{i}{b}{2} \in \br \subseteq \cLI(\br)
\end{align}
for all $b \neq a$. Then we also have 
\begin{subequations}
\begin{align}
   \cLI(\br) \ni   \sum_{b\neq a} \proja{i}{a}{1}\proja{i}{b}{2} &=  \proja{i}{a}{1} \sum_{b\neq a} \proja{i}{b}{2} 
    = \proja{i}{a}{1} (1 - \proja{i}{a}{2}) \\
    &= \proja{i}{a}{1} - \proja{i}{a}{1}\proja{i}{a}{2} 
    %\in \cLI(\br)
    .\label{eq:subtract1}
\end{align}
\end{subequations}
Similarly, summing over the $e(1)$ terms (and recalling that $e(1)$ and $e(2)$ commute) gives: 
\begin{align}\label{eq:subtract2}
     \sum_{b\neq a} \proja ib1\proja{i}{a}{2} 
    &= \proja{i}{a}{2} (1 - \proja{i}{a}{1}) 
    = \proja{i}{a}{2} - \proja{i}{a}{1}\proja{i}{a}{2} \in \cLI(\br).
\end{align}
Subtracting \Cref{eq:subtract2} from \Cref{eq:subtract1} gives 
\begin{align}
    \proja{i}{a}{1} - \proja{i}{a}{2} \in \cLI(\br).
\end{align}

Next note that for any $\proja{i}{a}{1} \proja{j}{b}{2} \in \br$ we have 
\begin{align}
    \proja{i}{a}{1} \proja{j}{b}{1} = \proja{i}{a}{1} \big(\proja{j}{b}{1} - \proja{j}{b}{2}\big) + \proja{i}{a}{1} e(2)^j_b\in \cLI(\br).
\end{align}
A similar argument shows  $\proja{i}{a}{2} \proja{j}{b}{2} \in \cLI(\br)$. Then all the generators of $\cLI(\sbr)$ are contained in $\cLI(\br)$ and we conclude $\cLI(\sbr) \subseteq \cLI(\br)$. 

To prove the converse we observe that for any $\proja{i}{a}{1} \proja{j}{b}{2} \in \br$ we have 
\begin{align}
    \proja{i}{a}{1} \proja{j}{b}{2} = \proja{i}{a}{1} \proja jb1 - \proja{i}{a}{1} \big(\proja{j}{b}{1} - \proja{j}{b}{2}\big) \in \cLI(\sbr),
\end{align}
which gives $\cLI(\br) \subseteq \cLI(\sbr)$ and completes the proof. 
\end{proof}

\begin{lem}\label{lem:LI(1)inSynchLI}
$\ide(\sbr(1)) \subseteq \cLI(\sbr)$.
\end{lem}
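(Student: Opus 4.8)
The plan is to prove the inclusion by reducing it to a statement about single products $jq$, where $j$ is one of the generators $\prod_\alpha \proja{i(\alpha)}{a(\alpha)}{1}$ of $\ide(\sbr(1))$ and $q$ is a word in the player-$1$ generators $\proja{i}{a}{1}$. Indeed, a typical element of the two-sided ideal $\ide(\sbr(1))$ of $\uGA(1)$ has the form $\sum_s p_s j_s q_s$ with $p_s,q_s\in\uGA(1)$ and each $j_s$ a generator. Expanding each $q_s$ as a linear combination of words in $\{\proja{i}{a}{1}\}$ and using linearity, it suffices to show $jq\in\cLI(\sbr)$ whenever $q$ is such a word, because then $p_s j_s q_s\in\cLI(\sbr)$ since $\cLI(\sbr)$ is a \emph{left} ideal of $\uGA$ and $p_s\in\uGA(1)\subseteq\uGA$.

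The key idea is that, modulo $\cLI(\sbr)$, a right factor built out of player-$1$ operators can be converted into a harmless \emph{left} factor built out of player-$2$ operators. Concretely, I would pass to the left $\uGA$-module $M:=\uGA/\cLI(\sbr)$ and write $[a]$ for the coset $a+\cLI(\sbr)$. Since each $\proja{i}{a}{1}-\proja{i}{a}{2}$ is among the generators of $\cLI(\sbr)$, we get $[\proja{i}{a}{1}]=[\proja{i}{a}{2}]$ for all $i,a$. I claim that for every word $q=\proja{l_1}{c_1}{1}\cdots\proja{l_\ell}{c_\ell}{1}$ there is a word $\tilde q$ in player-$2$ generators with $[q]=[\tilde q]$. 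This is an easy induction on $\ell$: writing $q=x\,q'$ with $x=\proja{l_1}{c_1}{1}$ and $q'$ shorter, choose $\tilde{q'}$ by induction, and compute
\[
[q]=x\,[q']=x\,[\tilde{q'}]=[x\,\tilde{q'}]=[\tilde{q'}\,x]=\tilde{q'}\,[x]=\tilde{q'}\,[\proja{l_1}{c_1}{2}]=[\tilde{q'}\,\proja{l_1}{c_1}{2}],
\]
where the equality $x\,\tilde{q'}=\tilde{q'}\,x$ already holds in $\uGA$ because $x$ is a player-$1$ operator and $\tilde{q'}$ is a word in player-$2$ operators (relation $[\proja{i}{a}{1},\proja{j}{b}{2}]=0$); so $\tilde q:=\tilde{q'}\,\proja{l_1}{c_1}{2}$ works.

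Granting the claim, the conclusion drops out: for a generator $j=\prod_\alpha \proja{i(\alpha)}{a(\alpha)}{1}$ and a word $q$ with associated player-$2$ word $\tilde q$,
\[
[jq]=j\,[q]=j\,[\tilde q]=[j\,\tilde q]=[\tilde q\, j]=\tilde q\,[j]=0,
\]
using that $j\,\tilde q=\tilde q\,j$ in $\uGA$ (again $j$ is a product of player-$1$ operators and $\tilde q$ a word in player-$2$ operators) and that $[j]=0$ because $j$ is one of the generators of $\cLI(\sbr)$. Hence $jq\in\cLI(\sbr)$, and combined with the first paragraph this gives $\ide(\sbr(1))\subseteq\cLI(\sbr)$.

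The only real subtlety — and the point I would spell out carefully — is purely bookkeeping: a left ideal is not closed under right multiplication, so one cannot naively absorb the factor $q$; the whole purpose of moving to $M$ and swapping players is to legitimately turn that right factor into a left factor, after which membership $j\in\cLI(\sbr)$ annihilates it. I would also remark that synchronicity enters only through $\br$ being the invalid-response set of a two-player game, so that the generators of $\ide(\sbr(1))$ are genuine products of player-$1$ operators and the player-$1$/player-$2$ commutation relations apply; no further use of \Cref{defn:synch_games} is made, consistent with the footnote in the excerpt.
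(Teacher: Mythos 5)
Your proof is correct and uses essentially the same idea as the paper's: by linearity reduce to a single term $p j q$ (with $j$ a generator and $p,q\in\uGA(1)$), absorb $p$ using that $\cLI(\sbr)$ is a left ideal, and convert the offending right factor $q$ into a reversed player-$2$ word $\tilde q$ modulo $\cLI(\sbr)$ so that it commutes past $j$ to become a harmless left factor. The paper packages the same observation as the algebraic identity $s(1)t(1)=s(1)\bigl(t(1)-t(2)^*\bigr)+t(2)^*s(1)$, whereas you phrase it as a computation in the quotient module $M=\uGA/\cLI(\sbr)$ — a bit more verbose, but it makes the reduction to monomials and the role of the left-module structure explicit, which the paper glosses over with the loose claim that any $p\in\ide(\sbr(1))$ factors as $s(1)t(1)$.
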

\begin{proof}
First consider a monomial $m(1) = e(1)^{i_1}_{a_1}  \cdots e(1)^{i_t}_{a_t} \in \uGA(1)$ and note 
\begin{subequations}
\begin{align}
    e(1)^{i_1}_{a_1}  \cdots e(1)^{i_t}_{a_t}
    &=  e(1)^{i_1}_{a_1} \cdots e(1)^{i_{t-1}}_{a_{t-1}} (e(1)^{i_t}_{a_t} - e(2)^{i_t}_{a_t}) +  e(1)^{i_1}_{a_1}  \cdots e(1)^{i_{t-1}}_{a_{t-1}} e(2)^{i_t}_{a_t} \\
    &=  e(1)^{i_1}_{a_1}  \cdots e(1)^{i_{t-1}}_{a_{t-1}} (e(1)^{i_t}_{a_t} - e(2)^{i_t}_{a_t}) +  e(2)^{i_t}_{a_t} e(1)^{i_1}_{a_1} \cdots e(1)^{i_{t-1}}_{a_{t-1}},
\end{align}
\end{subequations}
whence 
\[
    e(1)^{i_1}_{a_1} \cdots e(1)^{i_t}_{a_t} - e(2)^{i_t}_{a_t} e(1)^{i_1}_{a_1} \cdots e(1)^{i_{t-1}}_{a_{t-1}} \in \cLI(\sbr).
\]
Applying this  inductively we see 
\begin{subequations}
\begin{align}
    &e(1)^{i_1}_{a_1} \cdots e(1)^{i_t}_{a_t} - e(2)^{i_t}_{a_t} e(1)^{i_1}_{a_1} \cdots e(1)^{i_{t-1}}_{a_{t-1}} \\
    &\hspace{10pt} + e(2)^{i_t}_{a_t} \left(e(1)^{i_1}_{a_1} \cdots e(1)^{i_{t-1}}_{a_{t-1}}  - e(2)^{i_{t-1}}_{a_{t-1}}e(1)^{i_1}_{a_1} \cdots e(1)^{i_{t-2}}_{a_{t-2}}\right) \\
    &\hspace{20pt} + e(2)^{i_t}_{a_t}e(2)^{i_{t-1}}_{a_{t-1}} \left(e(1)^{i_1}_{a_1} \cdots e(1)^{i_{t-1}}_{a_{t-1}}  - e(1)^{i_{t-2}}_{a_{t-2}} e(1)^{i_1}_{a_1} \cdots e(1)^{i_{t-3}}_{a_{t-3}}\right) \\
    &\hspace{30pt} + \cdots \\
    &= e(1)^{i_1}_{a_1} \cdots e(1)^{i_t}_{a_t} - e(2)^{i_t}_{a_t}e(2)^{i_{t-1}}_{a_{t-1}} \cdots e(2)^{i_1}_{t_1} \in \cLI(\sbr).
\end{align}
\end{subequations}
We can write this last expression compactly as $m(1) - m(2)^* \in \cLI(\sbr)$, where we understand $m(2)$ to be the polynomial $m(1)$ with all $e(1)$ replaced by $e(2)$. By linearity this observation immediately extends to general polynomials, so we have $p(1) - p(2)^* \in \cLI(\sbr)$ for any polynomial $p(1)$ formed entirely from $e(1)$. 

For any polynomial $p \in \ide(\sbr(1))$ we can write $p = s(1)t(1)$ where $s(1) \in \cLI(\sbr)$ and $t(1) \in \uGA(1)$ is an arbitrary polynomial consisting only of $e(1)$ generators. Then we observe:
\begin{align}
    p = s(1)t(1) = s(1)(t(1) - t(2)^*) + t(2)^* s(1) \in \cLI(\sbr)
\end{align} 
and we are done. 
\end{proof}

Also important in this section are tracial linear mappings on an algebra, defined to be linear mappings $\tau: \algebra \rightarrow \mathbb{C}$ satisfying 
\begin{align}
    \tau(ab) = \tau(ba) 
\end{align}
for all $a,b \in \algebra$. A state $\state$ is called a tracial state (for some operator algebra $\cA$) if the linear mapping it induces is tracial, so
\begin{align}
    \state^* a b \state = \state^* b a \state
\end{align}
for all operators $a,b \in \algebra$. We show, using an argument very similar to the one given in the proof of \Cref{lem:LI(1)inSynchLI}, that any linear mapping vanishing on $\cLI(\sbr)$ and $\cRI(\sbr)$ must be tracial on $\uGA(1)$. 

\begin{lem} \label{lem:vanishing_mappings_are_tracial}
Given a linear mapping $\tau : \uGA \rightarrow \mathbb{C}$ on $\uGA$ satisfying 
\begin{align}
    \tau(\cLI(\sbr)) =0= \tau(\cRI(\sbr)) ,
\end{align}
then $\tau$ is tracial on $\uGA(1)$;
i.e. for any $a,b \in \uGA(1)$,  
\begin{align}
    \tau(ab) = \tau(ba).
\end{align}
\end{lem}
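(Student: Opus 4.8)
The plan is to reduce to monomials and then transport $m(1)$ to its ``mirror'' element in $\uGA(2)$ by \emph{two} different routes — one through the left ideal $\cLI(\sbr)$, one through the right ideal $\cRI(\sbr)$ — exploiting that $\uGA(1)$ and $\uGA(2)$ commute inside $\uGA$. First, since $(a,b)\mapsto\tau(ab)-\tau(ba)$ is bilinear on $\uGA(1)\times\uGA(1)$, it suffices to prove $\tau(ab)=\tau(ba)$ for monomials $a=m(1)=\proja{i_1}{a_1}{1}\cdots\proja{i_t}{a_t}{1}$ and $b=m'(1)$ in the player-$1$ generators. Following the notation in the proof of \Cref{lem:LI(1)inSynchLI}, write $m(2)$ for the monomial obtained from $m(1)$ by replacing each $\proja{i_k}{a_k}{1}$ with $\proja{i_k}{a_k}{2}$, so that $m(2)^{*}=\proja{i_t}{a_t}{2}\cdots\proja{i_1}{a_1}{2}$ lies in $\uGA(2)$. (Here $\cRI(\sbr)$ is the right ideal of $\uGA$ generated by the same set of elements as $\cLI(\sbr)$.)

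The key input already proved in \Cref{lem:LI(1)inSynchLI} is the membership $m(1)-m(2)^{*}\in\cLI(\sbr)$: this is exactly the telescoping computation there, where one peels the generators $\proja{i_k}{a_k}{1}=\big(\proja{i_k}{a_k}{1}-\proja{i_k}{a_k}{2}\big)+\proja{i_k}{a_k}{2}$ off the \emph{right}, absorbs each difference (times the remaining word on the left) into $\cLI(\sbr)$, and commutes the surviving $e(2)$ past the remaining $e(1)$'s. I would then establish the mirror-image statement $m(1)-m(2)^{*}\in\cRI(\sbr)$ by running the identical argument on the other side: peel the generators off the \emph{left}, absorb each difference (times the remaining word on the right) into the right ideal $\cRI(\sbr)$, and commute the surviving $e(2)$ to the far right past the remaining $e(1)$'s. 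This is the only new computation, and it is routine; the only thing to watch is which ideal each leftover term lands in.

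With both memberships in hand, traciality follows from a three-step chain, using bilinearity of $\tau$ together with $\tau(\cLI(\sbr))=0=\tau(\cRI(\sbr))$:
\begin{align}
    \tau\big(m(1)m'(1)\big)=\tau\big(m(2)^{*}m'(1)\big)=\tau\big(m'(1)m(2)^{*}\big)=\tau\big(m'(1)m(1)\big),
\end{align}
where the first equality uses $\big(m(1)-m(2)^{*}\big)m'(1)\in\cRI(\sbr)$ (right ideal, hence closed under right multiplication), the middle equality uses that $m(2)^{*}\in\uGA(2)$ commutes with $m'(1)\in\uGA(1)$ (player-$1$ and player-$2$ generators commute in $\uGA$), and the last equality uses $m'(1)\big(m(1)-m(2)^{*}\big)\in\cLI(\sbr)$ (left ideal, hence closed under left multiplication). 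Extending bilinearly over $\uGA(1)$ then yields $\tau(ab)=\tau(ba)$ for all $a,b\in\uGA(1)$.

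I do not expect a real obstacle here. The one conceptual point worth flagging in the write-up is that one must genuinely use \emph{both} hypotheses $\tau(\cLI(\sbr))=0$ and $\tau(\cRI(\sbr))=0$ in tandem with the commutation of the two players' subalgebras: using only one of the two ideals merely converts traciality on $\uGA(1)$ into traciality on $\uGA(2)$ and goes in circles. (Note also that synchronicity of the game plays no role in this particular lemma; all that is used is that each $\proja{i}{a}{1}-\proja{i}{a}{2}$ belongs to the generating set of $\sbr$.)
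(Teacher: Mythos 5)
Your proof is correct. It is not quite the paper's argument but a close, modular reorganization of the same ideas. The paper works one generator at a time: writing $w=w'\proja{i}{a}{1}$, it uses the \emph{immediate} memberships $\tilde w w'\big(\proja{i}{a}{1}-\proja{i}{a}{2}\big)\in\cLI(\sbr)$ and $\big(\proja{i}{a}{1}-\proja{i}{a}{2}\big)\tilde w w'\in\cRI(\sbr)$ — which hold because $\proja{i}{a}{1}-\proja{i}{a}{2}$ is a generator of $\sbr$, no telescoping needed — to cyclically move a single $e(1)$ from the right end to the left end, then iterates until all of $w$ has crossed over. You instead transport the entire monomial $m(1)$ across in one step to its mirror $m(2)^*\in\uGA(2)$, which requires both $m(1)-m(2)^*\in\cLI(\sbr)$ (this is Lemma~\ref{lem:LI(1)inSynchLI}) and the mirror membership $m(1)-m(2)^*\in\cRI(\sbr)$ (a new but routine reversed telescope), followed by the three-step chain via commutativity of $\uGA(1)$ with $\uGA(2)$. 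What you gain is modularity — you reuse Lemma~\ref{lem:LI(1)inSynchLI} and make the ``transport to the other player'' structure of the argument explicit in a single clean move; what you pay is that you must actually carry out the right-ideal telescoping computation that the paper avoids by staying at the single-generator level. Your side remark that only the memberships $\proja{i}{a}{1}-\proja{i}{a}{2}\in\sbr$ are used, rather than synchronicity per se, is accurate and matches the paper's own footnote on this point.
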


Note: If $\tau$ is symmetric, then 
since $\sbr$ is a $*$-closed set, we have
 \[\tau(\cLI(\sbr)) =0 \implies \tau(\cRI(\sbr)) =0.\]

\begin{proof}
Consider monomials $w,\tilde{w} \in \uGA(1)$. Since $w$ is generated by $\proja{i}{a}{1}$, we can  write $w = w' \proja{i}{a}{1}$ and then observe 
\begin{subequations}
\begin{align}
    \tilde{w} w' \left(\proja{i}{a}{1} - \proja{i}{a}{2}\right) & \in \cLI(\sbr), \label{eq:synch_proj_LI_vanishes} \\
    \left(\proja{i}{a}{1} - \proja{i}{a}{2}\right) \tilde{w} w' &\in \cRI(\sbr). \label{eq:synch_proj_RI_vanishes} 
\end{align}
\end{subequations}
Then
\begin{align}
    \tau( \tilde{w} w' \proja{i}{a}{1} ) &= \tau( \tilde{w} w' \proja{i}{a}{2} )  
    =\tau(\proja{i}{a}{2} \tilde{w} w' ) 
    =\tau(\proja{i}{a}{1}  \tilde{w} w' ) 
\end{align}
where we use \Cref{eq:synch_proj_LI_vanishes} for the first equality, that elements of $\uGA(1)$ and $\uGA(2)$ commute for the second equality, and  \Cref{eq:synch_proj_RI_vanishes} for the equality. Repeating this argument shows that for any elements $w, \tilde{w} \in \sbr(1)$ we have 
\begin{align}
    \tau(w \tilde{w}) = \tau(\tilde{w} w)
\end{align}
and the proof is complete. 
\end{proof}

Now comes the proof of  the main theorem of this section.

%%%%%%%%%%%%%%%%%%%%%%% OLD
%%%%%%%%%%%%%%%%%

\def\cW{\mathcal W}
\begin{proof}[Proof of \Cref{thm:syncRep}]
\Cref{it:cond1} is equivalent to existence of a perfect commuting operator strategy by definition (see \Cref{ssec:Characterizing_perfect_games}).

\Cref{it:cond2} is equivalent to \Cref{it:cond1} by \Cref{lem:LI=synchLI}. 

To prove \Cref{it:cond2} $\Rightarrow$ \Cref{it:cond3} we let $\pi$ be any representation satisfying \Cref{it:cond2}, and define $\pi'$ to be the restriction of $\pi$ to $\uGA(1)$. Clearly $\pi' : \uGA(1) \rightarrow \cB(\cH)$ and 
\begin{align}
\pi'(\ide(\sbr(1)) \psi = \pi(\ide(\sbr(1)) \psi \subseteq \pi(\cLI(\sbr)) \psi = \{0\},
\end{align}
where the inclusion follows from \Cref{lem:LI(1)inSynchLI}. To show the state $\state$ is tracial on $\pi'(\uGA(1))$ note the linear mapping defined by
\begin{align}
    \tau(w) = \psi^* \pi(w) \psi
\end{align}
vanishes on $\cLI(\sbr)$ (and hence $\cRI(\sbr)$) by \Cref{it:cond2} and so $\tau$ is tracial on $\uGA(1)$ by \Cref{lem:vanishing_mappings_are_tracial}. 

We prove \Cref{it:cond3} $\Rightarrow$ \Cref{it:cond1}. Using $\pi',\psi$ we define the positive linear functional
\[
\ell':\uGA(1)\to\C, \quad f \mapsto \psi^* \pi'(f)\psi.
\]
Next extend $\ell'$ to a linear functional $\ell$ on $\uGA$ by mapping
a monomial
\[
w(e(1)) u(e(2)) \mapsto \ell\big(w(e(1)) u(e(1))^*\big).
\]

It is obvious that $\ell$ is symmetric in the sense that
$\ell(f^*)=\ell(f)^*$ for all $f\in\uGA(1)$. 
To check that $\ell$ is positive, let 
$f=\sum_{i,j} \beta_{ij} w_i(e(1)) u_j(e(2))\in\uGA$.
Then
\[
f^*f= \sum_{i,j}\sum_{k,l} \beta_{ij}^*\beta_{kl} w_i(e(1))^* w_k(e(1)) u_j(e(2))^* u_l(e(2)),
\]
whence
\beq\label{eq:sos1}
\ell(f^*f) = \sum_{i,j}\sum_{k,l} \beta_{ij}^*\beta_{kl} \ell' ( w_i(e(1))^* w_k(e(1)) u_l(e(1))^* u_j(e(1))).
\eeq
Set
\[
\check f=\sum_{i,j} \beta_{ij}  w_i(e(1))u_j(e(1))^* \in\uGA(1).
\]
Then
\[
\check f^*\check f= 
\sum_{i,j}\sum_{k,l} \beta_{ij}^*\beta_{kl} u_j(e(1)) w_i(e(1))^* w_k(e(1)) u_l(e(1))^*,
\]
and
\beq\label{eq:sos2}
\ell'(\check f^*\check f)= \sum_{i,j}\sum_{k,l} \beta_{ij}^*\beta_{kl} \ell'( u_j(e(1)) w_i(e(1))^* w_k(e(1)) u_l(e(1))^* ).
\eeq
Since $\ell'$ is tracial,
\[
\ell'( u_j(e(1)) w_i(e(1))^* w_k(e(1)) u_l(e(1))^* )
= 
\ell' ( w_i(e(1))^* w_k(e(1)) u_l(e(1))^* u_j(e(1))).
\]
This implies the values in \Cref{eq:sos1} and \Cref{eq:sos2} are the same, so
\[
\ell(f^*f) = \ell'(\check f^*\check f)\geq0.
\]

It remains to show that $\ell( \cLI(\br))=\{0\}$. Elements in 
$\cLI(\br)$ are linear combinations of monomials
 of the form
\beq\label{eq:eltInN}
w(e(1)) u(e(2)) e(1)^i_a e(2)^j_b = 
w(e(1)) e(1)^i_a u(e(2))  e(2)^j_b 
\eeq
with $((i,j),(a,b))\in\br$. Applying  $\ell$ to \Cref{eq:eltInN}
gives
\[
\ell(w(e(1)) e(1)^i_a u(e(2))  e(2)^j_b )
=\ell'(w(e(1)) e(1)^i_a e(1)^j_b u(e(1))^* )
\]
But $e(1)^i_a e(1)^j_b\in\br$, whence 
$w(e(1)) e(1)^i_a e(1)^j_b u(e(1))^* \in \ide(\sbr(1)) $,
so
\[
\ell'(w(e(1)) e(1)^i_a e(1)^j_b u(e(1))^* )=0,
\]
as desired. We have proved that
$\ell(-1)=-1$ and $\ell(\SOS_{\uGA}+\cLI(\br)+\cLI(\br)^*)\subseteq\R_{\geq0}$,
whence $-1\notin \SOS_{\uGA}+\cLI(\br)+\cLI(\br)^*$. Then 
\Cref{thm:AB} implies \Cref{it:cond1}.

\def\cK{\mathcal K}

We next show \Cref{it:cond3} $\Rightarrow$ \Cref{it:cond4}.
Define the Hilbert space $\check\cH:=[\pi'(\uGA(1))\psi]\subseteq\cH$.
By definition, $\pi'(\uGA(1))\check\cH\subseteq\check\cH$, so $\pi'$ induces
a $*$-representation $\check\pi':\uGA(1)\to\cB(\check\cH)$.
By construction, $\check\pi'(\ide(\sbr(1))=\{0\}$, as desired in 
\Cref{it:cond4}.

Finally, to go from \Cref{it:cond4} to \Cref{it:cond3}, 
start with the tracial von Neumann algebra
$\cW$ with trace $\tau$ as in \Cref{it:cond4} and perform a GNS construction.
There is a Hilbert space $\cK$, unit vector $\xi\in\cK$, and
a $*$-representation $\pi'':\cW\to\cB(\cK)$ so that
\[
\tau(a)= \langle \pi''(a) \xi , \xi\rangle, \quad a\in\cW.
\]
Since $\tau$ is a trace, $\xi$ is a tracial state for $\pi''(\cW)$.
Then the $*$-representation $\pi''\circ\pi':\uGA(1)\to\cB(\cK)$ together with 
$\xi\in\cK$ satisfy \Cref{it:cond3}.

% Duplicate proof, commented out. The above seemed slightly more readable. 

% Finally, to prove \Cref{it:cond3} $\Rightarrow$ \Cref{it:cond4},
% we set $\cW$ to be the double commutant of $\pi'(\uGA(1))$.
% Since it comes equipped with the trace induced from the tracial state $\psi$, it is a tracial von Neumann algebra satisfying \Cref{it:cond4}.
\end{proof}

\ssec{Tracial Nullstellensatz}
\label{sec:traceNSS}

Now we discuss a \NullSS suited to synchronous games by virtue of handling traces.
The earliest tracial  Positivstellensatz is in 
\cite{klepConnes},
 done  in the context of the Connes' embedding conjecture.
An exposition of this and 
the (dual) tracial moment problem
is the book \cite[Chapter 5]{klepBook}
with more extensive results 
along the lines of the theorem below,
for example in 
 \cite[Proposition 4.2 and later]{klepPTrace}
 and more moment theory in
\cite{klepJOT}.
Moments with with informative quantum 
games situations are also in the recent paper 
 \cite{Russel21}.

Suppose $\cA$ is a $*$-algebra.
The \df{commutator} of $a,b\in\cA$ is
\[
[a,b]:=ab-ba,
\]
and we call $a,b$ \df{cyclically equivalent}, $a\csym b$, if
$a-b$ is a sum of commutators. Define
\[
\cycsos:=\{a\in\cA \mid \exists b\in\SOS:\, a\csym b\}.
\]
Elements of $\cycsos$ are trace positive under all
$*$-representations of $\cA$ in \trvna.

     \begin{thm}
     \label{thm:ABC}
     Suppose 
     \ben [\rm(1)]
     \item
     $\cA$ is a $*$-algebra, where $\cycsos$ is \df{archimedean} in the sense that for every $a\in\cA$ there is $\eta\in\mathbb N$ with $\eta-a^*a\in\cycsos$;
     \item $\cLI\subseteq\cA$ is a left ideal.
     \een 
      Then 
the following are equivalent:
\ben[\rm(i)]
\item
      there exist a 
      $*$-representation $\pi:\cA\to\cB(\cH)$
      and  tracial state $0\neq\psi \in \cH$ satisfying
      \beq 
      \label{eq:fX01}
      \pi(f) \psi = 0  
      \eeq 
      for all $f \in \cLI$;
% \item[\rm(i)']
% $\zd^{\mathrm{re}, \cA}(\cLI)\neq\varnothing$;
\item 
there exist a 
      $*$-representation $\pi:\cA\to\cF$ into a \trvna
      $(\cF,\tau)$
      satisfying
      \beq 
      \label{eq:fX02}
      \tau(\pi(f)) = 0  
      \eeq 
      for all $f \in \cLI$;
      \item $-1 \not \in \cycsos + \cLI + \cLI^*.$
      \een
     \end{thm}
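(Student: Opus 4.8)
The plan is to follow the template of the proof of \Cref{thm:AB}, systematically replacing $\SOS_{\cA}$ by $\cycsos$ and tracking where traciality is forced, and to establish the cycle (iii) $\Rightarrow$ (i) $\Rightarrow$ (ii) $\Rightarrow$ (iii). The two ``easy'' directions come first. For (ii) $\Rightarrow$ (iii): if $-1 = s + \ell_1 + \ell_2^*$ with $s \in \cycsos$ and $\ell_1,\ell_2 \in \cLI$, apply the tracial state $\tau\circ\pi$; writing $s \csym \sum_i d_i^* d_i$ and using that $\tau$ annihilates commutators gives $\tau(\pi(s)) = \sum_i \tau(\pi(d_i)^*\pi(d_i)) \ge 0$, while $\tau(\pi(\ell_1)) = \overline{\tau(\pi(\ell_2))} = 0$ by hypothesis, so $-1 = \tau(\pi(-1)) = \tau(\pi(s)) \ge 0$, a contradiction. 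For (i) $\Rightarrow$ (ii): put $\cM := \pi(\cA)''$ and let $\tau$ be the vector state of $\psi$ restricted to $\cM$; it is normal (vector states are weak-$*$ continuous) and tracial on the weak-$*$ dense $*$-subalgebra $\pi(\cA)$, hence tracial on all of $\cM$. Restricting along the support projection $z$ of $\tau$ (which is central because $\tau$ is a trace) makes the trace faithful, producing a \trvna $(\cM z,\tau)$ and a $*$-representation $a\mapsto\pi(a)z$ with $\tau(\pi(f)z) = \langle\pi(f)\psi,\psi\rangle = 0$ for all $f\in\cLI$ (the correction term vanishes by Cauchy--Schwarz).

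For (iii) $\Rightarrow$ (i) I would run the GNS construction exactly as in \Cref{thm:AB}. The set $C := \cycsos + \cLI + \cLI^*$ is a convex cone, contains $1$ (since $1\in\SOS\subseteq\cycsos$), and omits $-1$ by hypothesis, so the Hahn--Banach (Eidelheit--Kakutani) separation theorem used in \Cref{thm:AB} yields a linear functional $L:\cA\to\C$ with $L(1) = 1$ and $L(C)\subseteq\R_{\ge0}$; as there, archimedeanity of $\cycsos$ forces $L(1)\neq0$ and shows $L$ is self-adjoint, and since $\cLI,\cLI^*$ are subspaces, $L(\cLI) = L(\cLI^*) = \{0\}$. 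The genuinely new observation is that $L$ is automatically a trace: since $0\in\SOS\subseteq\cycsos$, for all $a,b$ we have $\pm[a,b]\in\cycsos\subseteq C$, so $L([a,b])\ge0$ and $L([b,a])\ge0$, whence $L(ab) = L(ba)$. Now form $N := \{a \mid L(a^*a) = 0\}$, a left ideal by Cauchy--Schwarz containing $\cLI$ (as $\ell^*\ell\in\cLI$ for $\ell\in\cLI$), the quotient $\cA/N$, its Hilbert space completion $\cH$, the left-regular representation $\pi(a)(p+N) := ap+N$, and $\psi := 1+N$. Then $\psi$ is a tracial state because $L$ is a trace, $\pi(f)\psi = f+N = 0$ for $f\in\cLI\subseteq N$, and $\pi$ intertwines the involution exactly as in \Cref{thm:AB}.

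The main obstacle is the single new subtlety: showing each $\pi(a)$ is bounded, so that $\pi$ genuinely maps into $\cB(\cH)$. In \Cref{thm:AB} this was immediate from $\eta - a^*a \in \SOS_{\cA}$, which gives $L(p^*(\eta-a^*a)p)\ge0$ for every $p$; here we only know $\eta - a^*a \in \cycsos$, and sandwiching a cyclically-$\SOS$ element between $p^*$ and $p$ need not preserve trace-nonnegativity, so the estimate requires the finer tracial arguments of \cite{klepConnes} (see also \cite[Chapter 5]{klepBook} and \cite{klepPTrace}), which I would invoke. I would also note that in the game-theoretic applications the relevant algebra (e.g.\ $\uGA(1)$, a group algebra of a free product of finite cyclic groups) already has $\SOS_{\cA}$ archimedean, and then the boundedness argument is verbatim that of \Cref{thm:AB}. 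Packaging this (and, if needed, a faithful-trace cut-down) closes (iii) $\Rightarrow$ (i), and chaining with (i) $\Rightarrow$ (ii) $\Rightarrow$ (iii) completes the proof.
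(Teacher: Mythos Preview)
Your argument follows essentially the paper's approach: separate with Hahn--Banach to get $L$ with $L(\cycsos+\cLI+\cLI^*)\subseteq\R_{\ge0}$, observe that $\pm[a,b]\in\cycsos$ forces $L$ to be tracial, and run GNS; the paper merely arranges the implications as (i)$\Leftrightarrow$(ii) (via GNS, as in the proof of \Cref{thm:syncRep}) together with (ii)$\Leftrightarrow$(iii) rather than your cycle. Your explicit isolation of the boundedness subtlety---that $\eta-a^*a\in\cycsos$ does not by itself yield $L(p^*a^*ap)\le\eta L(p^*p)$---is in fact more careful than the paper, whose sketch simply says ``the GNS construction yields the desired conclusion'' and implicitly leans on the same tracial literature you cite.
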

     
     \begin{proof} 
      The equivalence of Items (i) and (ii) 
is established via the GNS representation as in the proof of \Cref{thm:syncRep}.
The implication (ii) $\Rightarrow $ (iii) is easy. Namely, if
  $-1  \in \cycsos + \cLI + \cLI^*$, and $(\cF,\tau)$ as in (ii) exist,
  then 
\[-1=\tau(-1) \in \tau(\cycsos + \cLI + \cLI^*)=
\tau(\cycsos) + \tau(\cLI) + \tau(\cLI)^*= \tau(\cycsos) \subseteq\R_{\geq0},
\]
a contradiction.

For the harder side (iii) $\Rightarrow $ (ii) we only give a sketch since it is very similar
to that in the proof of \Cref{thm:AB}.   
     Suppose  $-1  \not\in \cycsos + \cLI + \cLI^*$.
By the Hahn-Banach theorem (version due to
     Eidelheit-Kakutani)
     there is a
     linear functional $L:\cA\to\C$ satisfying
      \beq\label{eq:2GNS'}  L(1)= 1, \qquad L(\cycsos + \cLI + \cLI^*) \subseteq \R_{\geq0}.
      \eeq 
     Since $\cLI$ is a subspace, the second property of \Cref{eq:2GNS'} implies $L(\cLI)=\{0\}$. Likewise,  
$\cycsos$ contains all commutators and $L(\cycsos)\subseteq\R_{\geq0}$,
whence $L(f)\geq 0$ for any $f\in\SOS_{\cA}$ and
     $L$ is tracial. Further,
$L(f)^* = L(f^*)$ for all $f\in\cA$.
     Then the GNS construction yields the desired conclusion.
     \end{proof}

\ssec{Quantum graph coloring}
\label{sec:GBExamples1}

As an example of how one can use  Theorem \ref{thm:syncRep} we look at quantum coloring of graphs. It is often formulated as a synchronous game. We start by recalling the coloring setup
then we illustrate a general approach through an example.

Given a graph $G=(V,E)$, where $V$ is the set of vertices and\
$E\subseteq V\times V$ its edges, we say $G$ admits a \df{quantum $c$-coloring} \cite{paulsen2016estimating}, if to each vertex $i\in V$ one can assign projections
$e^i_a$, $a\in\{1,\ldots,c\}$ so that 
\beq\label{eq:graph20}
\sum_a e^i_a=1,
\eeq
and for each edge $(i,j)\in E$, we have
\beq\label{eq:graph2}
e^i_a e^j_a=0,\quad \forall\, a.
\eeq

In algebraic terms, $G$ admits a quantum $c$-coloring
if the the universal game algebra
$\uGA(1)$ generated by $e(1)^i_a:=e^i_a$, where $i$
indexes the vertices and $a\in\{1,\ldots,c\}$
admit a $*$-representation $\pi$ that vanishes
on the polynomials   $e^i_a e^j_a$ from \Cref{eq:graph2}.
%{\bill We call these equations the \df{$G,c$-coloring equations}.}

\sssec{Quantum graph coloring as a synchronous game}

We shall be brief here  
and refer to \cite[Section 2]{paulsen2016estimating}
for a thorough discussion of 
graph coloring and the connection to
synchronous  games.

A game is specified by a graph $G$.
The  verifier's questions amount to giving  vertex $i$  to Alice and vertex $j$ to Bob; a perfect strategy consists of a vector $\psi$ and
$k$ projectors
$\proja{i}{\vec{a}}{1}$ for Alice and
$\proja{i}{\vec{b}}{2}$ for Bob
meeting the (universal game) constraints of
  \Cref{eq:defUA}
and  
$\proja{i}{a}{1}\proja{j}{a}{2}= 0$ if vertices  $i$ and $j$ are joined by an edge.
These give the probability of Alice and Bob
answering question $(i,j)$ with colors $a,b \in 1, \dots,  c$ via the formula
$$p(a,b | i,j):= \psi^* \proja{i}{a}{1}\proja{j}{b}{2} \psi .
$$
A synchronous strategy means for $i$ any vertex, $p(a,b | i,i)=0$ if $a \not =b$.
This is equivalent to 
$\proja{i}{a}{1} \psi =\proja{i}{a}{2} \psi$ for $a=1, \dots, c$.
%
%The game is synchronous so $i=j$ implies 
%$\proja{i}{a_i}{1}e^{i}_{a_j}(2) = 0$ for all $a_i \neq a_j$. 
%
\Cref{thm:syncRep}
now converts solving this system of equations
 to the 
 finding tracial representations of
 $\uGA(1)$ vanishing on
 $\sbr(1)$. In our forthcoming  example, we show there are no representations much less tracial ones, so no 4-coloring exists.

%%%%%
\sssec{Example}

Consider the graph $G=(V,E)$ given in \Cref{fig:5}.
It is obtained from a five cycle by adding two apexes (denoted $6$ 
and $7$ in the figure), i.e.,
two vertices connected to all the other vertices.

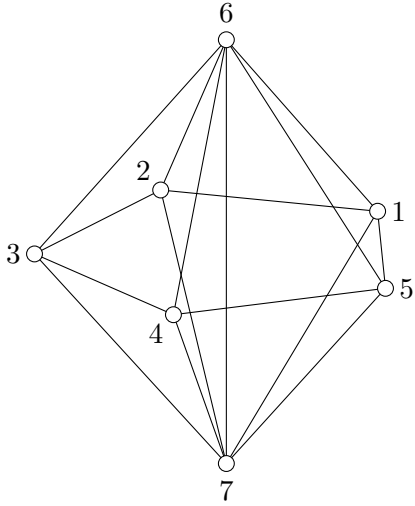
\begin{figure}[H]
\centering
\tdplotsetmaincoords{70}{-20}
\begin{tikzpicture}[scale=3,tdplot_main_coords]

\coordinate (A1) at (0.809017 , 0.262866 ,0) ;
\coordinate (A2) at (0, 0.850651 ,0) ;
 \coordinate (A3) at (-0.809017 , 0.262866 ,0); 
 \coordinate (A4) at (-0.5 , -0.688191 ,0);
 \coordinate (A5) at (0.5 , -0.688191 ,0);

\coordinate (B1) at (0,0,1);
\coordinate (B2) at (0,0,-1);

\draw (A1) -- (A2) -- (A3) -- (A4) -- (A5) -- cycle;
\draw (B1) -- (A1) -- (B2);
\draw (B1) -- (A2) -- (B2);
\draw (B1) -- (A3) -- (B2);
\draw (B1) -- (A4) -- (B2);
\draw (B1) -- (A5) -- (B2);
\draw (B1) -- (B2);

\foreach \v in {A1,A2,A3,A4,A5,B1,B2}  \filldraw[fill=white,draw=black] (\v) circle (1pt) ;
\node[right=.5mm] at (A1) {$1$};
\node[above left] at (A2) {$2$};
\node[left=.5mm] at (A3) {$3$};
\node[below left] at (A4) {$4$};
\node[right=.5mm] at (A5) {$5$};

\node[above=1mm] at (B1) {$6$};
\node[below=1mm] at (B2) {$7$};
\end{tikzpicture}
\caption{Pentagonal bipyramid}\label{fig:5}
\end{figure}

Observe that the chromatic number of $G$
is five, so $G$ does admit a quantum five coloring.
We claim that $G$ does not admit a quantum four coloring.
For this we apply our Nullstellensatz \Cref{thm:AB} to the 
quantum 4-coloring Equations \eqref{eq:graph20} and \eqref{eq:graph2}.

Generate the ideal 
\beq
\ide_1=(e^i_a e^j_a \mid (i,j)\in E,\ a\in\{1,\ldots,4\})
\subseteq\uGA(1).
\eeq
By \Cref{cor:2sidedIarchimedean}, the graph $G$ admits a quantum $4$-coloring
iff
\beq\label{eq:graph4}
-1\not\in\SOS_{\uGA(1)}+\ide_1.
\eeq

We show there are quadratic elements $s_j$ in $\uGA(1)$
so that
\beq\label{eq:graph5}
1+\sum s_j^*s_j \in \ide_1.
\eeq
To search for these elements $s_j$ we employ NC Gr\"obner basis combined with semidefinite programming.

Firstly, we lift our problem into the free algebra
$\faee$, where $e$ denotes the tuple $e=(e^i_a)$.
For this let $\Pi:\faee\to\uGA(1)$ denote the canonical epimorphism,
and let $\ide=\Pi^{-1}(\ide_1)$.
Next one computes a GB for $\ide$; with respect to a lex order
it has 350 elements of degree $\leq3$. 

To search for nc polynomials $s_j$ of degree $d$ so that
\beq\label{eq:graph6}
1+\sum s_j^*s_j \in \ide,
\eeq 
one employs the Gram matrix method
and semidefinite programming \cite{klepBook}. That is,
letting $W_d$ be the set of all monomials of degree $\leq d$
in $\faee$ (listed w.r.t. some ordering),
\Cref{eq:graph6} is equivalent to the existence of a positive
semidefinite matrix $M$ so that
\beq\label{eq:graph7}
1+ W_d^* M W_d \in \ide.
\eeq
Since ideal membership can be described using linear equations in terms of the entries of $M$ (given a Gr\"obner basis), \Cref{eq:graph7} immediately transforms into a semidefinite program (SDP).

In our example, \Cref{eq:graph7} is infeasible for $d=1$ and does
have a solution for $d=2$. Reducing $W_d$ modulo the GB (which one can do without loss of generality to help reduce the size of the SDP)
yields an SDP of size $272\times 272$.
Observing that $W_d^* M W_d=\tr(M W_dW_d^*)$ and reducing
entries of $W_dW_d^*$ modulo the GB, 
\Cref{eq:graph7}
converts into a set of linear equations on the entries of $M$
which are sparse and can thus be efficiently solved.
We are then left with
an SDP of manageable size. 
Solving the SDP 
using a standard solver
with trivial objective function yielded a 
floating point
positive definite solution $M$ with minimal eigenvalue of $\approx 10^{-2}$. By choosing a fine enough rationalization 
\cite{pabloQ,klepQ}
we thus obtain a symbolic (i.e., in exact arithmetic) positive definite solution $M$ of
\Cref{eq:graph7},  establishing that $G$ does not admit a quantum $4$-coloring.

\bibliographystyle{alpha}
\bibliography{ref}

\newcommand{\etalchar}[1]{$^{#1}$}
\begin{thebibliography}{CMMN20}

\bibitem[Bar02]{barvinok}
Alexander Barvinok.
\newblock {\em A course in convexity}, volume~54 of {\em Graduate Studies in
  Mathematics}.
\newblock American Mathematical Society, Providence, RI, 2002.

\bibitem[BK12]{klepJOT}
Sabine Burgdorf and Igor Klep.
\newblock The truncated tracial moment problem.
\newblock {\em J. Operator Theory}, 68(1):141--163, 2012.

\bibitem[BKP16]{klepBook}
Sabine Burgdorf, Igor Klep, and Janez Povh.
\newblock {\em Optimization of polynomials in non-commuting variables}.
\newblock SpringerBriefs in Mathematics. Springer, [Cham], 2016.

\bibitem[CHK{\etalchar{+}}14]{CHKMN14}
Jakob Cimpric, J.~William Helton, Igor Klep, Scott McCullough, and Christopher
  Nelson.
\newblock On real one-sided ideals in a free algebra.
\newblock {\em J. Pure Appl. Algebra}, 218(2):269--284, 2014.

\bibitem[CHMN13]{CHMN13}
Jakob Cimpric, J.~William Helton, Scott McCullough, and Christopher Nelson.
\newblock A noncommutative real {N}ullstellensatz corresponds to a
  noncommutative real ideal: algorithms.
\newblock {\em Proc. Lond. Math. Soc.}, 3(5):1060--1086, 2013.

\bibitem[CKP15]{klepQ}
Kristijan Cafuta, Igor Klep, and Janez Povh.
\newblock Rational sums of {H}ermitian squares of free noncommutative
  polynomials.
\newblock {\em Ars Math. Contemp.}, 9(2):243--259, 2015.
\newblock [Paging previously given as 253--269].

\bibitem[CLO15]{cox15book}
David~A. Cox, John Little, and Donal O'Shea.
\newblock {\em Ideals, varieties, and algorithms. An introduction to
  computational algebraic geometry and commutative algebra}.
\newblock Undergraduate Texts in Mathematics. Springer, Cham, fourth edition,
  2015.

\bibitem[CLS17]{cleve2017perfect}
Richard Cleve, Li~Liu, and William Slofstra.
\newblock Perfect commuting-operator strategies for linear system games.
\newblock {\em J. Math. Phys.}, 58(1):012202, 2017.

\bibitem[CMMN20]{cui2020generalization}
David Cui, Arthur Mehta, Hamoon Mousavi, and Seyed~Sajjad Nezhadi.
\newblock A generalization of {CHSH} and the algebraic structure of optimal
  strategies.
\newblock {\em Quantum}, 4:346, 2020.

\bibitem[DLTW08]{doherty2008quantum}
Andrew~C. Doherty, Yeong-Cherng Liang, Ben Toner, and Stephanie Wehner.
\newblock The quantum moment problem and bounds on entangled multi-prover
  games.
\newblock In {\em 2008 23rd Annual IEEE Conference on Computational
  Complexity}, pages 199--210. IEEE, 2008.

\bibitem[Fri12]{fritz}
Tobias Fritz.
\newblock Tsirelson's problem and {K}irchberg's conjecture.
\newblock {\em Rev. Math. Phys.}, 24(5):1250012, 67, 2012.

\bibitem[Gol21]{goldberg2021synchronous}
Adina Goldberg.
\newblock Synchronous linear constraint system games.
\newblock {\em J. Math. Phys.}, 62(3):032201, 2021.

\bibitem[Gre00]{greenNCGB}
Edward~L. Green.
\newblock Multiplicative bases, {G}r\"{o}bner bases, and right {G}r\"{o}bner
  bases.
\newblock {\em J. Symbolic Comput.}, 29(4-5):601--623, 2000.

\bibitem[HM04]{HM04}
J.~William Helton and Scott McCullough.
\newblock A {P}ositivstellensatz for noncommutative polynomials.
\newblock {\em Trans. Amer. Math. Soc.}, 356(9):3721-- 3737, 2004.

\bibitem[HMPS17]{helton2017algebras}
J.~William Helton, Kyle~P. Meyer, Vern~I. Paulsen, and Matthew Satriano.
\newblock Algebras, synchronous games and chromatic numbers of graphs.
\newblock {\em Preprint}, 2017.
\newblock \url{https://arxiv.org/abs/1703.00960}.

\bibitem[JNP{\etalchar{+}}11]{junge}
Marius Junge, Miguel Navascu{\'e}s, Carlos Palazuelos, David Perez-Garcia,
  Volkher~B. Scholz, and Reinhard~F. Werner.
\newblock Connes embedding problem and {T}sirelson's problem.
\newblock {\em J. Math. Phys.}, 52(1):012102, 12, 2011.

\bibitem[JNV{\etalchar{+}}20]{ji2020mip}
Zhengfeng Ji, Anand Natarajan, Thomas Vidick, John Wright, and Henry Yuen.
\newblock {MIP}$^*$={RE}.
\newblock {\em Preprint}, 2020.
\newblock \url{https://arxiv.org/abs/2001.04383}.

\bibitem[KP16]{klepPTrace}
Igor Klep and Janez Povh.
\newblock Constrained trace-optimization of polynomials in freely noncommuting
  variables.
\newblock {\em J. Global Optim.}, 64(2):325--348, 2016.

\bibitem[KPS18]{kim2018synchronous}
Se-Jin Kim, Vern Paulsen, and Christopher Schafhauser.
\newblock A synchronous game for binary constraint systems.
\newblock {\em J. Math. Phys.}, 59(3):032201, 2018.

\bibitem[KS08]{klepConnes}
Igor Klep and Markus Schweighofer.
\newblock Connes' embedding conjecture and sums of {H}ermitian squares.
\newblock {\em Adv. Math.}, 217(4):1816--1837, 2008.

\bibitem[LMP{\etalchar{+}}20]{lupini2020perfect}
Martino Lupini, Laura Man{\v{c}}inska, Vern~I Paulsen, David~E Roberson,
  Giannicola Scarpa, Simone Severini, Ivan~G Todorov, and Andreas Winter.
\newblock Perfect strategies for non-local games.
\newblock {\em Mathematical Physics, Analysis and Geometry}, 23(1):1--31, 2020.

\bibitem[Mor86]{moraNCGB}
Ferdinando Mora.
\newblock Groebner bases for noncommutative polynomial rings.
\newblock In {\em Algebraic algorithms and error correcting codes ({G}renoble,
  1985)}, volume 229 of {\em Lecture Notes in Comput. Sci.}, pages 353--362.
  Springer, Berlin, 1986.

\bibitem[MR98]{MR98}
Klaus Madlener and Birgit Reinert.
\newblock String rewriting and {G}roebner bases -- a general approach to monoid
  and group rings.
\newblock {\em Technical report Zentrum Fur Computeralgebra Technische
  Universität Kaiserslautern}, page~55, May 1998.

\bibitem[NPA08]{navascues2008convergent}
Miguel Navascu{\'e}s, Stefano Pironio, and Antonio Ac{\'\i}n.
\newblock A convergent hierarchy of semidefinite programs characterizing the
  set of quantum correlations.
\newblock {\em New J. Phys.}, 10(7):073013, 2008.

\bibitem[PP08]{pabloQ}
Helfried Peyrl and Pablo~A. Parrilo.
\newblock Computing sum of squares decompositions with rational coefficients.
\newblock {\em Theoret. Comput. Sci.}, 409(2):269--281, 2008.

\bibitem[PSS{\etalchar{+}}16]{paulsen2016estimating}
Vern~I. Paulsen, Simone Severini, Daniel Stahlke, Ivan~G Todorov, and Andreas
  Winter.
\newblock Estimating quantum chromatic numbers.
\newblock {\em J. Funct. Anal.}, 270(6):2188--2222, 2016.

\bibitem[Rus21]{Russel21}
Travis~B. Russel.
\newblock A synchronous {NPA} hierarchy with applications.
\newblock {\em Preprint}, 2021.
\newblock \url{https://arxiv.org/abs/2105.01555}.

\bibitem[Slo20]{slofstra2020tsirelson}
William Slofstra.
\newblock Tsirelson's problem and an embedding theorem for groups arising from
  non-local games.
\newblock {\em J. Amer. Math. Soc.}, 33(1):1--56, 2020.

\bibitem[SS13]{savchuk2013unbounded}
Yurii Savchuk and Konrad Schm{\"u}dgen.
\newblock Unbounded induced representations of *-algebras.
\newblock {\em Algebr. Represent. Theory}, 16(2):309--376, 2013.

\bibitem[SW08]{scholz2008tsirelson}
Volkher~B. Scholz and Reinhard~F. Werner.
\newblock Tsirelson's problem.
\newblock {\em Preprint}, 2008.
\newblock \url{https://arxiv.org/abs/0812.4305}.

\bibitem[Vid59]{vidav}
Ivan Vidav.
\newblock On some {$\sp*$}regular rings.
\newblock {\em Acad. Serbe Sci. Publ. Inst. Math.}, 13:73--80, 1959.

\bibitem[WH20]{watts20203xor}
Adam~Bene Watts and J.~William Helton.
\newblock 3{XOR} games with perfect commuting operator strategies have perfect
  tensor product strategies and are decidable in polynomial time.
\newblock {\em Preprint}, 2020.
\newblock \url{https://arxiv.org/abs/2010.16290}.

\bibitem[Xiu12]{phdNCGB}
Xingqiang Xiu.
\newblock {\em Non-commutative Gr{\"o}bner bases and applications}.
\newblock PhD thesis, 2012.

\end{thebibliography}

\newpage 

\centerline{ NOT FOR PUBLICATION}

\tableofcontents

\printindex

%\appendix

\newpage
{\Large \bf Table of Notation}

\begin{table}[h]
    \centering
    \begin{tabular}{r|l}
$\fa$ & free algebra on $x$ + variants. \\  
$\ide$ & capital gothic letters for two-sided ideals \quad AND \\
$\ide(generators)_{algebra}$ & to be explicit in terms of gens/algebra\\
$\lide$, $\cRI$ & capital gothic letters for left/right ideals \quad AND \\
$\lide(generators)_{algebra}$ & to be explicit in terms of gens/algebra\\
$\projrset(\alpha)$ & The set of projectors used by player $\alpha$ in a nonlocal game. \\  
$\projr{i}{a}{\alpha}$ & The projector in a nonlocal game strategy corresponding to \\
&player $\alpha$ giving a response $a$ to question $i$. \\
$e(\alpha)^i_{a}$ & Formal variables satisfying the same relations as $\projr{i}{a}{\alpha}$ \\
$X(\alpha)^i_{a}$ & The signature matrix $2\projr{i}{a}{\alpha} - 1$. \\
$x(\alpha)^i_{a}$ & The formal variable in $\uGA$ corresponding to $X(\alpha)^i_{a}$. \\
$\uGA$ & The universal game algebra formed by the $e(\alpha)^i_{a}$ \\
$\uGI$ & universal game ideal\\
$\mathcal A$, etc. & $C^*$-algebras \\
$\cexp$ & conditional expectation
\\
$\Alg_{\mathcal A}(blah)$ & subalgebra of $\mathcal A$ generated by $blah$\\
% $*$-$\Alg_{\mathcal A}(blah)$ OR 
$\Alg^*_{\mathcal A}(blah)$ & $*$-subalgebra of $\mathcal A$ generated by $blah$\\
$\mathcal A^{-1}$ & invertible elements in an algebra $\mathcal A$\\
% $\cl(blah)$ & closure of $blah$\\

$\gr(\vec{i})$ & Set of answers corresponding to ``valid responses" to question $\vec{i}$.\\
$\br(\vec{i})$ & Set of answers corresponding to ``invalid responses" to question $\vec{i}$ \\
$\validpoly$ & Set of elements $e(\alpha)^i_a$ corresponding to ``valid responses - 1". \\
$\invalidpoly$ & Set of elements $e(\alpha)^i_a$ corresponding to ``invalid responses". \\
   \end{tabular}
    \caption{Table of Notation}\label{tab:NOTATION}
\end{table}

\end{document}